\definecolor{linkColor}{RGB}{156,78,13}
\ifdefined\DEBUG{}
\newcommand{\ben}[1]{{\color{blue}{#1}}}
\newcommand{\bmp}[1]{{\color{purple}{#1}}}
\newcommand{\jjh}[1]{{\color{orange}{#1}}}
\newcommand{\BenSays}[1]{ \marginpar{  \begin{flushleft}\scriptsize \textbf{BB:}~{#1}\end{flushleft} } }
\def\rem#1{{\marginpar{\raggedright\scriptsize #1}}}
\newcommand{\bmpr}[1]{\rem{\textcolor{purple}{\(\bullet \) #1}}}
\newcommand{\JariSays}[1]{ \marginpar{  \begin{flushleft}\scriptsize \textbf{J:}~{#1}\end{flushleft} } }
\newcommand{\ben}[1]{#1}
\newcommand{\bmp}[1]{#1}
\newcommand{\jjh}[1]{#1}
\newcommand{\bmpr}[1]{}
\newcommand{\BenSays}[1]{}
\newcommand{\JariSays}[1]{}
\tikzset{snake it/.style={decorate, decoration=snake}}
\newtheorem*{theorem*}{Theorem}
\newtheorem*{lemma*}{Lemma}
\newcommand{\Oh}{\mathcal{O}}
\newcommand{\C}{\mathcal{C}}
\newcommand{\forb}{\mathsf{forb}}
\newcommand{\flower}[2]{(#1,#2)-flower}
\newcommand{\flowernumber}[3]{\Gamma_{#1}(#2, #3)}
\newcommand{\typesetproblem}[1]{\textsc{#1}}
\DeclareMathOperator{\fpt}{\typesetproblem{FPT}}
\DeclareMathOperator{\opt}{\textsc{OPT}}
\newcommand{\psdetectshort}[1]{\typesetproblem{$c$-Essential detection}}
\newcommand{\psdetect}[1]{\typesetproblem{$#1$-Essential detection for $\Pi$}}
\newcommand{\psdetectFor}[2]{\typesetproblem{$#1$-Essential detection for #2}}
\newcommand{\CDeletion}{\textsc{$\C$-Deletion}\xspace}
\DeclareMathOperator{\ds}{\typesetproblem{Dominating Set}}
\DeclareMathOperator{\wheelfreeedge}{\typesetproblem{WheelDelE}}
\DeclareMathOperator{\wheelfree}{\typesetproblem{WheelDel}}
\DeclareMathOperator{\hittingset}{\typesetproblem{HittingSet}}
\DeclareMathOperator{\oct}{\typesetproblem{OCT}}
\DeclareMathOperator{\doct}{\typesetproblem{DOCT}}
\DeclareMathOperator{\fvs}{\typesetproblem{FVS}}
\DeclareMathOperator{\dfvs}{\typesetproblem{DFVS}}
\DeclareMathOperator{\chordDel}{\typesetproblem{Chordal Deletion}}
\newcommand{\hole}{\ensuremath{\mathsf{hole}}\xspace}
\DeclareMathOperator{\perfectDel}{\typesetproblem{PerfDel}}
\newcommand{\deflp}[3]{
	\vspace{1mm}
	\noindent\fbox{
		\begin{minipage}{0.96\textwidth}
			\begin{tabular*}{\textwidth}{@{\extracolsep{\fill}}lr} \textsc{#1} &  \\ \end{tabular*}
			{\bf{Objective:}} #2 \\
			{\bf{Subject to:}} #3
		\end{minipage}
	}
	\vspace{1mm}
}
\newcommand{\deftask}[3]{
	\vspace{1mm}
	\noindent\fbox{
		\begin{minipage}{0.96\textwidth}
			\begin{tabular*}{\textwidth}{@{\extracolsep{\fill}}lr} \textsc{#1} &  \\ \end{tabular*}
			{\bf{Input:}} #2 \\
			{\bf{Task:}} #3
		\end{minipage}
	}
	\vspace{1mm}
}
\title{Search-Space Reduction via Essential Vertices}
\author{Benjamin Merlin Bumpus}{Eindhoven University of Technology, the Netherlands}{b.bumpus@tue.nl}{https://orcid.org/0000-0002-8686-2319}{}
\author{Bart M.P. Jansen}{Eindhoven University of Technology, the Netherlands}{b.m.p.jansen@tue.nl}{https://orcid.org/0000-0001-8204-1268}{}
\author{Jari J.H. de Kroon}{Eindhoven University of Technology, the Netherlands}{j.j.h.d.kroon@tue.nl}{https://orcid.org/0000-0003-3328-9712}{}
\authorrunning{B.M. Bumpus, B.M.P. Jansen, and J.J.H. de Kroon} 
\keywords{fixed-parameter tractability, essential vertices, covering versus packing}
\begin{document}

\maketitle

\begin{abstract}
We investigate preprocessing for vertex-subset problems on graphs. While the notion of kernelization, originating in parameterized complexity theory, is a formalization of provably effective preprocessing aimed at reducing the total instance size, our focus is on finding a non-empty vertex set that belongs to an optimal solution. This decreases the size of the remaining part of the solution which still has to be found, and therefore shrinks the search space of fixed-parameter tractable algorithms for parameterizations based on the solution size. We introduce the notion of a $c$-essential vertex as one that is contained in all $c$-approximate solutions. For several classic combinatorial problems such as \textsc{Odd Cycle Transversal} and \textsc{Directed Feedback Vertex Set}, we show that under mild conditions a polynomial-time preprocessing algorithm can find a subset of an optimal solution that contains all 2-essential vertices, by exploiting packing/covering duality. This leads to FPT algorithms to solve these problems where the exponential term in the running time depends only on the number of \emph{non-essential} vertices in the solution.
\end{abstract}

\section{Introduction}\label{sec:intro}
\subparagraph*{Background and motivation}
Due to the enormous potential of preprocessing to speed up the algorithmic solution to combinatorial problems~\cite{AchterbergBGRW16,AkibaI16,AlberBN06,Weihe98,Weihe01}, a large body of work in theoretical computer science is devoted to understanding its power and limitations. Using the notion of \emph{kernelization}~\cite{Bodlaender14,Fellows06,FominLSM19,GuoN07,LokshtanovMS12,Kratsch14} from parameterized complexity~\cite{CyganFKLMPPS15,DowneyF12} it is possible to formulate a guarantee on the size of the instance after preprocessing based on the  parameter of the original instance. Under this model, a good preprocessing algorithm is a \emph{kernelization algorithm}: given a parameterized instance~$(x,k)$, it outputs an equivalent instance~$(x',k')$ of the same decision problem such that~$|x'| + k' \leq f(k)$ for some function~$f$ that bounds the size of the kernel. Research into kernelization led to deep algorithmic insights, including connections to protrusions and finite-state properties~\cite{BodlaenderFLPST09}, well-quasi ordering~\cite{FominLMS12}, and matroids~\cite{KratschW20}; these positive results were complemented by impossibility results~\cite{DellM14,Drucker15,KratschW20} delineating the boundaries of tractability.

Results on kernelization led to profound insights into the limitations of polynomial-time data compression for NP-hard problems. However, as recently advocated~\cite{DonkersJ21}, the definition of kernelization only gives guarantees on the \emph{size} of the instance after preprocessing, which does not directly correspond to the running time of subsequently applied algorithms. If the preprocessed instance is not solved by brute force, but via a fixed-parameter tractable algorithm whose running time is of the form~$f(k)\cdot n^{\Oh(1)}$, then the exponential dependence in the running time is on the value of the \emph{parameter}~$k$, which is not guaranteed to decrease via kernelization. In fact, if $P \neq NP$ then no polynomial-time preprocessing algorithm can guarantee to always decrease the parameter of an NP-hard fixed-parameter tractable problem, as iterating the preprocessing algorithm would lead to its solution in polynomial time. In this work, we develop a new analysis of preprocessing aimed at reducing the search space of the follow-up algorithm. We apply this framework to combinatorial optimization problems on graphs, whose goal is to find a minimum vertex-subset satisfying certain properties. The main idea behind the framework is to define formally what it means for a vertex to be \emph{essential} for making reasonable solutions to the problem, and to prove that an efficient preprocessing algorithm can detect a subset of an optimal solution that contains all such essential vertices.

Before stating our results, we introduce and motivate the model. We consider vertex-subset minimization problems on (possibly directed) graphs, in which the goal is to find a minimum vertex subset having a certain property. Examples of the problems we study include \textsc{Vertex Cover}, \textsc{Odd Cycle Transversal}, and \textsc{Dominating Set}. The analysis of such minimization problems, parameterized by the size of the solution, has played an important role in the literature (cf.~\cite{ChenKX10,EibenGHK21,JansenK021,ReedSV04,karthik2017parameterized}). Our starting point is the thesis that a good preprocessing algorithm should reduce the \emph{search space}. Since many graph problems are known to be fixed-parameter tractable when parameterized by the size of the solution, we can reduce the search space of these FPT algorithms by finding one or more vertices which are part of an optimal solution, thereby decreasing the size of the solution still to be found in the reduced instance (i.e.~the parameter value).

Since in general no polynomial-time algorithm can guarantee to identify at least one vertex that belongs to an optimal solution, the guarantee of the effectiveness of the preprocessing algorithm should be stated in a more subtle way. When solving problems by hand, one sometimes finds that certain vertices~$v$ are easily identified to belong to an optimal solution, as avoiding them would force the solution to contain a prohibitively large number of alternative vertices and therefore be suboptimal. \bmp{Can an efficient preprocessing algorithm identify those vertices that cannot be avoided when making an optimal solution?}

Since many NP-hard problems remain hard even when there is a unique solution~\cite{ValiantV86}, this turns out to be too much to ask as it would allow instances with unique solutions to be solved in polynomial time, which leads to~$NP = RP$. We therefore have to relax the requirements on the preprocessing guarantee slightly, as follows. For an instance of a vertex-subset minimization problem~$\Pi$ on a graph~$G$, we denote the minimum size of a solution on~$G$ by~$\opt_\Pi(G)$. For a fixed~$c \in \mathbb{R}_{\geq 1}$, we say a vertex~$v \in V(G)$ is \emph{$c$-essential} for~$\Pi$ on~$G$ if all feasible solutions~$S\subseteq V(G)$ for~$\Pi$ whose total size is at most~$c \cdot \opt_\Pi(G)$ contain~$v$. Based on this notion, we can ask ourselves: can an efficient preprocessing algorithm identify part of an optimal solution if there is at least one $c$-essential vertex?

Phrased in this way, the algorithmic task becomes more tractable. For example, for the \textsc{Vertex Cover} problem, selecting all vertices that receive the value~$1$ in an optimal half-integral solution to the LP-relaxation results in a set~$S$ which is contained in some optimal solution (by the Nemhauser-Trotter theorem~\cite{NemhauserT75}, cf~\cite[\S 2.5]{CyganFKLMPPS15}), and at the same time includes all $2$-essential vertices: any vertex~$v \notin S$ only has neighbors of value~$\frac{1}{2}$ and~$1$, which implies that the set~$X$ of vertices other than~$v$ whose value in the LP relaxation is at least~$\frac{1}{2}$, forms a feasible solution which avoids~$v$. Its cardinality is at most twice the cost of the LP relaxation and therefore~$X$ is a 2-approximation. Hence~$S$ contains all $2$-essential vertices.

This example shows that a preprocessing step that detects $c$-essential vertices without any additional information is sometimes possible. However, to be able to extend the scope of our results also to problems which \emph{do not} have polynomial-time constant-factor approximations, we slightly relax the requirements on the preprocessing algorithm as follows. Let~$\Pi$ be a minimization problem on graphs whose solutions are vertex subsets and let~$c \in \mathbb{R}_{\geq 1}$.

\deftask
{\psdetect{c}}
{\bmp{A graph $G$ and integer~$k$.}}
{Find a vertex set~$S \subseteq V(G)$ such that:
    \begin{enumerate}[label=\textbf{G\arabic*}]
        \item if~$\opt_\Pi(G) \leq k$, then there is an optimal solution in~$G$ containing all of~$S$, and \label{prop:c-essential-sup-det-double-guarantee-1} 
        \item if~$\opt_\Pi(G) = k$, then~$S$ contains all $c$-essential vertices.\label{prop:c-essential-sup-det-double-guarantee-2} 
    \end{enumerate}}

In this model, the preprocessing task is facilitated by supplying an additional integer~$k$ in the input. The correctness properties of the output~$S$ are formulated in terms of~$k$. If~$\opt_\Pi(G) \leq k$, then the set~$S$ is required to be part of an optimal solution. The upper bound on~$\opt_\Pi(G)$ is useful to the algorithm: whenever the algorithm establishes that avoiding~$v$ would incur a cost of more than~$k$, it is safe to add~$v$ to~$S$. If~$\opt_\Pi(G) = k$, then the algorithm should guarantee that~$S$ contains all $c$-essential vertices. Knowing a lower bound on~$\opt_\Pi(G)$ is useful to the algorithm in case it can establish that any optimal solution containing~$v$ can be transformed into one avoiding~$v$ whose cost is~$(c-1) \cdot k$ larger, which yields a $c$-approximation if~$(c-1) \cdot k \leq (c-1) \opt_\Pi(G)$. Hence vertices for which such a replacement exists are not $c$-essential and may safely be left out of~$S$.

\subparagraph*{Results} We present polynomial-time algorithms for \psdetect{c} for a range of vertex-deletion problems~$\Pi$ and small values of~$c$; typically~$c \in \{2,3\}$. Example applications include \textsc{Vertex Cover} and \textsc{Feedback Vertex Set}, and also \textsc{Chordal Vertex Deletion} (for which no $O(1)$-approximation is known), \textsc{Odd Cycle Transversal} (for which no $O(1)$-approximation exists, assuming the Unique Games Conjecture~\cite{unique-games-Khot, wernicke2014algorithmic}), and even \textsc{Directed Odd Cycle Transversal} (which is $W[1]$-hard parameterized by solution size~\cite{LokshtanovR0Z20}).

The model of \psdetect{c} is chosen such that the detection algorithms whose correctness is formulated with respect to the value of~$k$, can be used as a preprocessing step to optimally solve vertex-subset problems without any knowledge of the optimum. \bmp{Let~$\mathcal{E}^\Pi_c(G)$ denote the set of $c$-essential vertices in~$G$, which is well-defined since all optimal solutions contain all $c$-essential vertices.} By using a preprocessing step that detects a superset of the $c$-essential vertices in the solution, we can effectively improve the running-time guarantee for FPT algorithms parameterized by solution size from~$f(\opt_\Pi(G)) \cdot |V(G)|^{\Oh(1)}$, to~$f(\opt_\Pi(G) - |\mathcal{E}^\Pi_c(G)|) \cdot |V(G)|^{\Oh(1)}$. This leads to the following results.

\begin{theorem} \label{thm:fpt:main}
For each problem~$\Pi$ with coefficient~$c$ and parameter dependence~$f$ listed in Table~\ref{table:summary} that is not $W[1]$-hard, there is an algorithm that, given a graph~$G$, outputs an optimal solution in time~$f(\ell)\cdot |V(G)|^{\Oh(1)}$, where~$\ell := \opt_\Pi(G) - |\mathcal{E}^\Pi_c(G)|$ is the number of vertices in an optimal solution which are not $c$-essential.
\end{theorem}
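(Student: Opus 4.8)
The plan is to turn the polynomial-time algorithm that solves \psdetect{c} for $\Pi$ (with the coefficient $c$ from Table~\ref{table:summary}) into an exact algorithm by pairing it with a fixed-parameter algorithm $A_\Pi$ that, in its constructive form, solves $\Pi$ parameterized by solution size in time $f(k)\cdot n^{\Oh(1)}$; this is exactly the algorithm recorded by the function $f$ in Table~\ref{table:summary}, which is available for every problem there that is not $W[1]$-hard, and we may assume $f$ nondecreasing. Fix the input graph $G$, put $n := |V(G)|$, and for every $k \in \{0,1,\dots,n\}$ let $S_k$ be the set returned by the \psdetect{c} algorithm on $(G,k)$; all of $S_0,\dots,S_n$ are computed in total polynomial time. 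Each problem in the table to which the theorem applies can be phrased as vertex deletion to a hereditary graph class $\C$, and I present the argument in that language, using the following elementary consequences of heredity, for any $S \subseteq V(G)$: if $T$ is a $\C$-deletion set of $G-S$ then $S \cup T$ is a $\C$-deletion set of $G$, so $\opt_\Pi(G) \le |S| + \opt_\Pi(G-S)$; and if moreover $S$ is contained in some optimal solution of $G$, then $\opt_\Pi(G-S) = \opt_\Pi(G) - |S|$, so that $S$ together with any minimum $\C$-deletion set of $G-S$ is an optimal solution of $G$.

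The role of property~\ref{prop:c-essential-sup-det-double-guarantee-1} is to let the algorithm detect this second situation: if it ever finds that $G-S_k$ has a $\C$-deletion set of size at most $k - |S_k|$, then $\opt_\Pi(G) \le |S_k| + (k - |S_k|) = k$, so by~\ref{prop:c-essential-sup-det-double-guarantee-1} the set $S_k$ is contained in an optimal solution and hence, by the previous paragraph, $S_k$ together with a minimum $\C$-deletion set of $G-S_k$ is optimal for $G$. This suggests the following algorithm: for $t = 0,1,2,\dots$, and for each $k = 0,1,\dots,n$ in turn, whenever $0 \le k - |S_k| \le t$ run $A_\Pi$ on $G-S_k$ with the successive budgets $0,1,\dots,k-|S_k|$ until it either reports a $\C$-deletion set $T$ of $G-S_k$ --- necessarily one of minimum size, in which case output $S_k \cup T$ and halt --- or all these budgets have been exhausted. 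Correctness is immediate from the observation above: whatever the algorithm outputs is an optimal solution. For termination, consider the pair $t := \ell$, $k := \opt_\Pi(G)$: property~\ref{prop:c-essential-sup-det-double-guarantee-2}, applicable since $\opt_\Pi(G) = k$, gives $\mathcal{E}^\Pi_c(G) \subseteq S_k$, so $0 \le k - |S_k| \le \opt_\Pi(G) - |\mathcal{E}^\Pi_c(G)| = \ell = t$; and property~\ref{prop:c-essential-sup-det-double-guarantee-1} gives that $S_k$ extends to an optimal solution, so $\opt_\Pi(G-S_k) = \opt_\Pi(G) - |S_k| = k - |S_k|$, whence the run of $A_\Pi$ at this pair succeeds. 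Thus the algorithm halts by the time $t$ reaches $\ell$ (possibly earlier, which is harmless).

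It remains to bound the running time. The sets $S_k$ are computed once in polynomial time. If the algorithm halts with outer value $t^\star$, then $t^\star \le \ell$ since it halts by $t = \ell$, and every call to $A_\Pi$ made along the way used a budget of at most $t^\star \le \ell$. There are $\Oh(n^2)$ pairs $(t,k)$ with $t \le t^\star$, each spawning at most $n$ runs of $A_\Pi$; as a run of $A_\Pi$ with budget $b$ costs $f(b)\cdot n^{\Oh(1)}$ and $f$ is nondecreasing, the total running time is $f(\ell)\cdot n^{\Oh(1)}$. The one place where care is needed --- and the reason the budget bound $t$ is the outer rather than the inner loop variable --- is this: the naive scheme that merely increases $k$ and at step $k$ tests whether $G-S_k$ has a $\C$-deletion set of size $k-|S_k|$ can be forced to call $A_\Pi$ with budget as large as $\opt_\Pi(G)-1$, because properties~\ref{prop:c-essential-sup-det-double-guarantee-1}--\ref{prop:c-essential-sup-det-double-guarantee-2} give no guarantee on $S_k$ when $k < \opt_\Pi(G)$ (then $S_k$ may even be empty), and $\opt_\Pi(G)-1$ is not bounded in terms of $\ell$. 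Iterating over the budget bound $t$ on the outside caps every call at the current $t \le \ell$, and the observation that a successful call on a residual instance $G-S_k$ by itself certifies $\opt_\Pi(G) \le k$ --- and so activates~\ref{prop:c-essential-sup-det-double-guarantee-1} --- is exactly what guarantees that probing the values of $k$ in an order unrelated to $\opt_\Pi(G)$ never makes the algorithm return a suboptimal set. I expect this tension, scheduling the search so that every FPT call respects the budget $\ell$ while each success still certifies optimality, to be the only genuine obstacle; the rest is routine bookkeeping.
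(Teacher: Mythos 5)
Your proof is correct and follows essentially the same strategy the paper uses via Theorem~\ref{thm:non-essentiality}: compute $S_k$ for every $k$, run the FPT algorithm on each residual $G - S_k$ with budget $b_k = k - |S_k|$, and schedule these calls so that the budget never exceeds $\ell$ before a correct optimum is output. The only difference is cosmetic --- the paper sorts the triples $(G_k, S_k, b_k)$ by $b_k$ and scans once, whereas you wrap the scan in an outer budget-cap loop $t = 0,1,\dots$, which introduces some harmless recomputation but yields the same $f(\ell)\cdot |V(G)|^{\Oh(1)}$ bound.
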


Hence the running time for solving these problems does not depend on the total size  of an optimal solution, only on the part of the solution that does not consist of $c$-essential vertices. The theorem effectively shows that by employing \psdetect{c} as a preprocessing step, the size of the search space no longer depends on the total solution size but only on its non-essential vertices.

We also prove limitations to this approach. Assuming~$\fpt \neq W[1]$, for \textsc{Dominating Set}, \textsc{Perfect Deletion} (in which the goal is to obtain a perfect graph by vertex deletions) and \textsc{Wheel-Free Deletion}, there is no polynomial-time algorithm for \psdetectshort{c} for any~$c \in \mathbb{R}_{\geq 1}$. In fact, we can even rule out such algorithms running in time~$f(k) \cdot |V(G)|^{\Oh(1)}$. These results are based on FPT-inapproximability results for \textsc{Dominating Set}~\cite{karthik2017parameterized} and existing reductions~\cite{HeggernesHJKV13,Lokshtanov-wheel-free-2008} to the mentioned vertex-deletion problems.

\begin{table}
\centering
\caption{For each  problem~$\Pi$, there is a polynomial-time algorithm for \psdetectshort{} for the stated value of~$c$. Combined with the state of the art algorithm for the natural parameterization, this leads to an algorithm solving the problem in time~$f(\ell) \cdot |V(G)|^{\Oh(1)}$ where~$\ell = \opt_\Pi(G) - |\mathcal{E}^\Pi_c(G)|$.} \label{table:summary}
\begin{tabular}{@{}llrl} \toprule
Problem & $c$ & $f(\ell)$ & Reference \\ \midrule
\textsc{Vertex Cover} & $2$ & $1.2738^\ell$ &  \cite{ChenKX10} \\
\textsc{Feedback Vertex Set} & $2$ & $2.7^\ell$ & \cite{LiN20} \\
\textsc{Directed Feedback Vertex Set} & $2$ &  $4^\ell \cdot \ell!$ & \cite{ChenLLOR08} \\
\textsc{Odd Cycle Transversal} & $2$ & $2.3146^\ell$ & \cite{LokshtanovNRRS14} \\
\textsc{Directed Odd Cycle Transversal} & 3 & \bmp{W[1]-hard} & \cite{LokshtanovR0Z20} \\
\textsc{Chordal Vertex Deletion} & 13 & $2^{\Oh(\ell \log \ell)}$ & \cite{CaoM16} \\ \bottomrule
\end{tabular}
\end{table}

\subparagraph*{Techniques} 
The main work lies in the algorithms for \psdetectshort{c}, which are all based on covering/packing duality for forbidden induced subgraphs to certain graph classes, or variations thereof in terms of (integer) solutions to certain linear programs and their (integer) duals. To understand the relation between detecting essential vertices and covering/packing duality, consider the \textsc{Odd Cycle Transversal} problem (OCT). Following the argumentation for the classic Erd\H{o}s-Pósa theorem~\cite{ErdosP65}, in general there is no constant~$c$ such that any graph either has an odd cycle transversal of size~$c \cdot k$, or a packing of $k$ vertex-disjoint odd cycles. 
However, we show that a linear packing/covering relation holds in the following slightly different setting. If~$G - v$ is bipartite (so all odd cycles of~$G$ intersect~$v$), then the minimum size of an OCT avoiding~$v$ equals the maximum cardinality of a packing of odd cycles which pairwise intersect only at~$v$. We can leverage this statement to prove that any vertex~$v$ which is not at the center of a flower (see Definition~\ref{def:flower}) of more than~$\opt_{\mathrm{oct}}(G)$ odd cycles, is not $2$-essential: for any optimal solution~$X$ containing~$v$, the graph~$G' := G - (X \setminus \{v\})$ becomes bipartite after removal of~$v$ and by assumption does not contain a packing of more than~$\opt_{\mathrm{oct}}(G)$ odd cycles pairwise intersecting at~$v$. So by covering/packing duality on~$G'$, it has an OCT~$X'$ of size at most~$\opt_{\mathrm{oct}}(G)$ avoiding~$v$, so that~$(X \setminus \{v\}) \cup X'$ is a 2-approximation in~$G$ which avoids~$v$, showing that~$v$ is not 2-essential. Since~$v$ is clearly contained in an optimal solution whenever there is a flower of more than~$\opt_{\mathrm{oct}}(G)$ odd cycles centered at~$v$, this yields a method for \psdetectshort{2} when using \bmp{a known reduction~\cite{DBLP:journals/jct/GeelenGRSV09} to \textsc{Maximum Matching}} to test for a flower of odd cycles.

\subparagraph*{Organization} After presenting preliminaries in Section~\ref{sec:prelims}, we give algorithms to detect essential vertices based on covering/packing duality in Section~\ref{sec:positive_packcover} and based on integrality gaps in Section~\ref{sec:positive_lp}. In Section~\ref{sec:fpt} we show how these detection subroutines can be used to improve the parameter dependence of FPT algorithms parameterized by solution size. The lower bounds are presented in Section~\ref{sec:hardness}. The investigation of $c$-essential vertices has close connections to the area of perturbation stability, which we briefly explore in Section~\ref{sec:stability}. We conclude in Section~\ref{sec:conclusion}.

\section{Preliminaries}\label{sec:prelims} 
We consider finite simple graphs, some of which are directed. We use standard notation for graph algorithms; any terms not defined here can be found in the textbook by Cygan et al.~\cite{CyganFKLMPPS15}.
\bmp{We consider vertex-deletion problems on graphs. For a graph class~$\C$, a $\C$-modulator in a graph~$G$ is a vertex set~$S \subseteq V(G)$ such that~$G - S \in \C$. The minimum size of a $\C$-modulator in~$G$ is denoted~$\opt_\C(G)$. The corresponding minimization problem is defined as follows.}

\deftask{$\C$-Deletion}
{\bmp{A graph $G$.}}
{Find a minimum-size vertex-subset $S \subseteq V(G)$ such that $G - S \in \C$.}

\ben{Throughout this paper we consider hereditary graph classes $\C$. These} can be characterized by a (possibly infinite) set of forbidden induced subgraphs denoted $\forb(\C)$. The \CDeletion problem is equivalent to finding a minimum set $S \subseteq V(G)$ such that no induced subgraph of $G-S$ is isomorphic to a graph in $\forb(\C)$. We say that such a set $S$ hits all graphs from $\forb(\C)$ in $G$. \jjh{For these classes the vertex set~$V(G)$ is a trivial $\C$-modulator (since the empty graph is in all hereditary classes), which ensures that the task of finding a smallest modulator is always well-defined.}

A graph is perfect if for every induced subgraph the size of a largest clique is equivalent to its \ben{chromatic} number. Equivalently, a graph is perfect if it has no induced cycle of odd length at least five or its edge complement (cf.~\cite{Golumbic2002}). A graph is chordal if it has no induced cycle of length at least four. A graph is bipartite if its vertex set can be partitioned into two independent sets, or equivalently, it does not contain an odd-length cycle. Given a graph $G$ and a set $T \subseteq V(G)$, a $T$-path is a path with at least one edge with both endpoints contained in~$T$.
A $T$-path is odd if it has an odd number of edges. 
For $u,v \in V(G)$, a $(u,v)$-separator is a set $S \subseteq V(G) \setminus \{u,v\}$ that disconnects $u$ from $v$. If $G$ is a directed graph, then in $G-S$ there is no directed path from $u$ to $v$ instead.

\section{Positive results via Packing Covering}\label{sec:positive_packcover}
In this section we provide polynomial-time algorithms for $\psdetect{c}$ for various problems $\Pi$. \bmp{The case for the \textsc{Vertex Cover} problem was given in Section~\ref{sec:intro}. The results in this section are all based on packing/covering duality (cf.~\cite{ChudnovskyGGGLS06}, \cite[\S 73]{Schrijver03}). Towards this end, we}  generalize the notion of \emph{flowers}\bmp{, which played a key role in kernelization for \textsc{Feedback Vertex Set}~\cite{BodlaenderD10}. While flowers were originally formulated as systems of cycles (forbidden structures for \textsc{Feedback Vertex Set}) pairwise intersecting in a single common vertex, we generalize the notion to near-packings of arbitrary structures here.}

\begin{definition}\label{def:flower}
Let $\mathcal{F}$ be a set of graphs. \bmp{For a graph~$G$ and~$v \in V(G)$,} a \emph{\flower{$v$}{$\mathcal{F}$} with $p$ petals} in $G$ is a set $\{C_1, C_2, \dots , C_p\}$ of \bmp{induced} subgraphs of~$G$ such that each~$C_i$ (with $i \in [p]$) is isomorphic to some member of $\mathcal{F}$and such that $V(C_i) \cap V(C_j) = \{v\}$ for all distinct $i,j \in [p]$.
The \emph{$\mathcal{F}$-flower number} of a vertex $v \in V(G)$, denoted $\flowernumber{\mathcal{F}}{G}{v}$, is the largest integer $p$ for which there is a \flower{$v$}{$\mathcal{F}$} in $G$ with $p$ petals.
\end{definition}

\bmp{We show a general theorem for finding 2-essential vertices for $\C$-deletion if a maximum \flower{$v$}{$\forb(\C)$} can be computed in polynomial-time. It applies to those classes~$\C$ where graphs with~$G - v \in \C$ obey a min-max relation between $\C$-modulators avoiding~$v$ and packings of forbidden induced subgraphs intersecting only at~$v$.}

\begin{theorem}\label{thm:maxflower+minmax_on_C}
Let~$\C$ be a hereditary graph class such that, for any graph $G$ and vertex $v \in V(G)$ with $G-v \in \C$, \bmp{the minimum size of a $\C$-modulator avoiding~$v$ in~$G$ equals~$\flowernumber{\forb(\C)}{G}{v}$.}
Suppose \bmp{there exists} a polynomial-time algorithm $\mathcal{A}$ that, given a graph $G$ and vertex $v \in V(G)$, computes $\flowernumber{\forb(\C)}{G}{v}$. Then there is a polynomial-time algorithm that solves $\psdetectFor{2}{\CDeletion}$.
\end{theorem}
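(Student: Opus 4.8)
The plan is to use the algorithm $\mathcal{A}$ to compute, for every vertex $v \in V(G)$, the flower number $\flowernumber{\forb(\C)}{G}{v}$, and then to output the set $S := \{\, v \in V(G) : \flowernumber{\forb(\C)}{G}{v} > k \,\}$. This is clearly a polynomial-time procedure (it makes $|V(G)|$ calls to $\mathcal{A}$), so all of the work lies in verifying properties~\ref{prop:c-essential-sup-det-double-guarantee-1} and~\ref{prop:c-essential-sup-det-double-guarantee-2}.

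The first ingredient is an elementary ``flower lower bound'': if $\{C_1, \dots, C_p\}$ is a \flower{$v$}{$\forb(\C)$} in $G$ and $Y \subseteq V(G)$ is a $\C$-modulator with $v \notin Y$, then $|Y| \ge p$, because $Y$ must contain a vertex of each forbidden induced subgraph $C_i$ and the sets $V(C_i) \setminus \{v\}$ are pairwise disjoint. Hence every $\C$-modulator of $G$ of size at most $k$ contains every vertex $v$ with $\flowernumber{\forb(\C)}{G}{v} > k$, and therefore contains all of $S$. Property~\ref{prop:c-essential-sup-det-double-guarantee-1} follows at once: if $\opt_\C(G) \le k$, then every optimal solution has size at most $k$ and so contains $S$.

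For property~\ref{prop:c-essential-sup-det-double-guarantee-2}, I would assume $\opt_\C(G) = k$ and take any $v \notin S$, so that $\flowernumber{\forb(\C)}{G}{v} \le k$, and show that $v$ is not $2$-essential by producing a $\C$-modulator of $G$ of size at most $2k = 2\opt_\C(G)$ that avoids $v$. Fix an arbitrary optimal solution $X$. If $v \notin X$ we are already done. Otherwise $v \in X$, and I pass to the reduced graph $G' := G - (X \setminus \{v\})$, which satisfies $G' - v = G - X \in \C$ --- exactly the precondition under which the hypothesised min-max relation applies to the pair $(G', v)$. That relation provides a $\C$-modulator $X'$ of $G'$ with $v \notin X'$ and $|X'| = \flowernumber{\forb(\C)}{G'}{v}$; and since $G'$ is an induced subgraph of $G$, every \flower{$v$}{$\forb(\C)$} in $G'$ is one in $G$, so $|X'| \le \flowernumber{\forb(\C)}{G}{v} \le k$. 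Now $(X \setminus \{v\}) \cup X'$ is a $\C$-modulator of $G$, since $G - ((X \setminus \{v\}) \cup X') = G' - X' \in \C$; it avoids $v$ and has size at most $(k - 1) + k < 2k$. Hence $v$ is not $2$-essential, which establishes property~\ref{prop:c-essential-sup-det-double-guarantee-2}.

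The argument is short, and I expect the main points requiring care to be organisational rather than mathematical: one must invoke the min-max hypothesis on the reduced graph $G'$ rather than on $G$ itself (legitimate precisely because $G' - v \in \C$), use the monotonicity $\flowernumber{\forb(\C)}{G'}{v} \le \flowernumber{\forb(\C)}{G}{v}$ when passing to an induced subgraph, and place the threshold in the definition of $S$ exactly at $k$ so that it both forces the flower lower bound above $k$ for property~\ref{prop:c-essential-sup-det-double-guarantee-1} and matches $\opt_\C(G)$ for property~\ref{prop:c-essential-sup-det-double-guarantee-2}. I do not anticipate any genuine obstacle beyond this bookkeeping.
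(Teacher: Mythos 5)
Your proposal is correct and follows essentially the same approach as the paper's proof: you define $S$ by the same flower-number threshold, establish Property~\ref{prop:c-essential-sup-det-double-guarantee-1} via the flower lower bound, and establish Property~\ref{prop:c-essential-sup-det-double-guarantee-2} by passing to $G' = G - (X \setminus \{v\})$ and invoking the min-max hypothesis there. The only cosmetic difference is that you spell out the monotonicity $\flowernumber{\forb(\C)}{G'}{v} \le \flowernumber{\forb(\C)}{G}{v}$ explicitly, which the paper leaves implicit.
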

\begin{proof}

Apply algorithm~$\mathcal{A}$ to each vertex $v \in V(G)$ and let $S$ be the set of vertices for which it finds that $\flowernumber{\forb(\C)}{G}{v} > k$. We argue that Properties~\ref{prop:c-essential-sup-det-double-guarantee-1} and~\ref{prop:c-essential-sup-det-double-guarantee-2} are satisfied. If $\opt_\C(G) \leq k$, then every vertex in $S$ is contained in every optimal solution for $G$ \bmp{since a size-$k$ solution cannot hit a flower of~$k+1$ petals from~$\forb(\C)$ without using~$v$}. Therefore Property~\ref{prop:c-essential-sup-det-double-guarantee-1} is satisfied. Next suppose that $\opt_\C(G) = k$ and let $X$ be an optimal solution. We argue that each vertex $v \notin S$ is not 2-essential. Clearly this holds for any vertex $v \notin X$, so suppose that $v \in X$. Note that for every vertex $v \notin S$ we have $\flowernumber{\forb(\C)}{G}{v} \leq k$, which implies that $\flowernumber{\forb(\C)}{G'}{v} \leq k$ where $G' := G-(X \setminus \{v\})$. Note that since $G'-v \in \C$, by assumption there exists a $\C$-modulator $X' \subseteq V(G')\setminus\{v\}$ in $G'$ of size $\flowernumber{\forb(\C)}{G'}{v} \leq k$. Observe that $(X \setminus \{v\}) \cup X'$ is a $\C$-modulator in $G$ of size at most $2k$ that avoids $v$ and therefore $v$ is not 2-essential.
\end{proof}

\ben{Theorem~\ref{thm:maxflower+minmax_on_C} allows us to conclude the following result for \textsc{Feedback Vertex Set} ($\fvs$) and its directed variant ($\dfvs$) using Gallai's theorem and Menger's theorem, respectively.}

\begin{lemma}\label{lem:packcover_applications}
There are polynomial-time algorithms for \psdetectFor{2}{$\Pi$} for $\Pi \in \{\fvs,\dfvs\}$. 
\end{lemma}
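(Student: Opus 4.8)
The plan is to invoke Theorem~\ref{thm:maxflower+minmax_on_C} for the classes $\C = \mathsf{forest}$ and $\C = \mathsf{DAG}$, which reduces the task to two things for each class: establishing the required min-max relation, and exhibiting a polynomial-time algorithm that computes the relevant flower number. For $\fvs$, the forbidden induced subgraphs $\forb(\mathsf{forest})$ are the cycles, and a $\flower{v}{\forb(\mathsf{forest})}$ is exactly a classical flower of cycles centered at $v$ in the sense of \textsc{Feedback Vertex Set}.

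First I would handle $\fvs$. Let $G$ be a graph and $v \in V(G)$ with $G - v$ a forest; I must show the minimum size of an FVS avoiding $v$ equals the maximum number of cycles through $v$ pairwise meeting only in $v$ (the $\forb(\mathsf{forest})$-flower number of $v$). Since $G - v$ is a forest, every cycle of $G$ passes through $v$; deleting $v$ also destroys all cycles, so both quantities are well-defined. The key observation is that cycles through $v$ pairwise intersecting only at $v$ correspond to internally vertex-disjoint paths between the (at most two) neighbors that a cycle uses at $v$, i.e.\ to a structure in the graph obtained by splitting $v$; more cleanly, a set of cycles pairwise meeting only at $v$ becomes, after contracting $v$'s role, a collection of vertex-disjoint $T$-paths, so I would appeal to \emph{Gallai's theorem} on the min-max duality between vertex-disjoint $T$-paths and vertex sets meeting all $T$-paths (here $T$ is the neighborhood-of-$v$ side after the split). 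This yields the claimed equality. For the algorithmic ingredient, computing the maximum number of cycles through $v$ pairwise intersecting only at $v$ is exactly computing a maximum set of internally disjoint paths among copies of $v$, which reduces to a flow/matching computation (equivalently, the constructive/algorithmic version of Gallai's theorem, reducible to \textsc{Maximum Matching}), hence polynomial time. Theorem~\ref{thm:maxflower+minmax_on_C} then gives the polynomial-time algorithm for $\psdetectFor{2}{\fvs}$.

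Next, $\dfvs$: here $\C = \mathsf{DAG}$, but this class is not hereditary-with-forbidden-\emph{induced}-subgraphs in the undirected sense. The fix is to note that for directed graphs the forbidden structures are the directed cycles, and the relevant notion of a $\flower{v}{\cdot}$ becomes a set of directed cycles through $v$ pairwise meeting only in $v$. So I would state (or invoke) the directed analogue of Theorem~\ref{thm:maxflower+minmax_on_C}, whose proof is verbatim the same. The min-max relation now reads: if $G - v$ is a DAG, the minimum size of a DFVS avoiding $v$ equals the maximum number of directed cycles through $v$ pairwise meeting only at $v$. To prove this, split $v$ into an out-copy $v^+$ and an in-copy $v^-$ (so all arcs out of $v$ leave $v^+$, all arcs into $v$ enter $v^-$); since $G - v$ is a DAG, every directed cycle of $G$ uses $v$ and corresponds to a directed $v^+ \to v^-$ path, and two cycles meet only at $v$ iff the corresponding paths are internally vertex-disjoint. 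Then \emph{Menger's theorem} (vertex version, for directed graphs) equates the maximum number of internally vertex-disjoint $v^+\to v^-$ paths with the minimum size of a $(v^+,v^-)$-separator, which is precisely a minimum DFVS of $G$ avoiding $v$. The algorithmic ingredient is the constructive Menger/max-flow computation, which runs in polynomial time. Applying the directed analogue of Theorem~\ref{thm:maxflower+minmax_on_C} yields the algorithm for $\psdetectFor{2}{\dfvs}$.

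The main obstacle I anticipate is the bookkeeping around the vertex-splitting reductions: one must be careful that "induced subgraphs isomorphic to a member of $\forb(\C)$ pairwise intersecting only at $v$" translates \emph{exactly} (no loss, no slack) into internally disjoint paths in the split graph, and that a $(v^+,v^-)$-separator in the split graph is literally a DFVS of $G$ avoiding $v$ — in particular that it need not contain $v^+$ or $v^-$, which is automatic since separators exclude their endpoints. For $\fvs$ a secondary subtlety is that a "cycle through $v$" may be a $2$-cycle only if there are parallel edges, which we have excluded (simple graphs), so the split-graph paths genuinely have length $\geq 2$ and Gallai's $T$-path formulation applies cleanly. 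Once these correspondences are nailed down, both statements follow immediately from Theorem~\ref{thm:maxflower+minmax_on_C} (and its directed analogue) together with the classical duality theorems of Gallai and Menger and their polynomial-time constructive versions.
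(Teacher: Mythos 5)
Your overall plan matches the paper exactly: invoke Theorem~\ref{thm:maxflower+minmax_on_C} for each class, verify the min-max precondition, and use a polynomial-time flower computation. The $\dfvs$ argument is correct and is essentially verbatim the paper's: split $v$ into in/out copies, note that directed cycles through $v$ become $(v^+,v^-)$-paths, and apply Menger's theorem both for exact duality and for the algorithm.

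However, there is a genuine gap in your $\fvs$ argument for the min-max precondition. You appeal to ``Gallai's theorem on the min-max duality between vertex-disjoint $T$-paths and vertex sets meeting all $T$-paths'' as if it gave \emph{equality} between the maximum $T$-path packing and the minimum $T$-path hitting set. It does not. Gallai's theorem in the form the paper cites (Cygan et al., Thm.~9.2) gives only a factor-$2$ relation: either $k+1$ vertex-disjoint $T$-paths exist, or some set of at most $2k$ vertices intersects all $T$-paths. And the exact Gallai/Mader min-max formula for $T$-paths is not of the form ``packing equals covering'': for a general graph these quantities can differ (take a triangle $abc$ with $T=\{a,b,c\}$: at most one vertex-disjoint $T$-path exists, but no single vertex hits all three edges). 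What saves the argument here is that $G-v$ is a \emph{forest}, so all the $T$-paths in question live inside tree components, and for trees one has a genuine packing-covering equality for connected subgraphs. This is exactly what the paper uses — a lemma attributed to Berge stating that in a tree the maximum vertex-disjoint packing of any collection of connected subgraphs equals the minimum vertex hitting set for that collection. You need to replace the vague appeal to ``Gallai's theorem'' for the min-max step with this tree-specific duality; Gallai's theorem is still the right tool for the \emph{algorithmic} half (computing the flower number in polynomial time), which is how the paper uses it.

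A minor secondary point: your remark that the split-graph paths ``genuinely have length $\geq 2$'' since $2$-cycles require parallel edges is not quite the relevant observation. Length-$1$ $T$-paths (single edges with both endpoints in $T = N_G(v)$) do arise, namely from triangles through $v$, and they are legitimate $T$-paths that participate in the duality; what simplicity rules out is a degenerate length-$0$ ``path,'' which is not a $T$-path at all.
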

\begin{proof}
The $\fvs$ problem corresponds to \CDeletion where $\C$ is the set of acyclic graphs and $\forb(\C)$ is the set of cycles. We argue that the preconditions of Theorem~\ref{thm:maxflower+minmax_on_C} hold. The existence of a polynomial-time algorithm that computes $\flowernumber{\forb(\C)}{G}{v}$ is implied by Gallai's theorem (as written in Cygan et al.~\cite[Theorem 9.2]{CyganFKLMPPS15}), \bmp{since a flower of cycles through~$v$ corresponds to a collection of paths connecting pairs of distinct vertices of~$N_G(v)$ to each other in the graph~$G-v$}. By applying \bmp{Gallai's theorem} to $G-v$ with $T = N_G(v)$, we get that $\flowernumber{\forb(\C)}{G}{v}$ is the largest integer $k$ such that the outcome of Gallai's theorem applied with the integer $k-1$ is a family of $k$ pairwise vertex-disjoint $T$-paths in $G-v$.
For the first precondition of Theorem~\ref{thm:maxflower+minmax_on_C}, we argue that \bmp{when~$G-v$ is acyclic,} the maximum number of petals in a \flower{$v$}{$\forb(\C)$} is equivalent to the minimum cardinality of a set $S$ that hits all cycles through $v$.

\begin{claim}[{\cite[page 67]{Berge89}}]
Let $T$ be a tree and let $\mathcal{F}$ be a collection of connected subgraphs of $T$. The maximum size packing of vertex-disjoint members of $\mathcal{F}$ equals the minimum size of a vertex set intersecting all of $\mathcal{F}$. 
\end{claim}

By noting that any cycle through $v$ in $G$ corresponds to a path between two neighbors in a connected component of $G-v$ which is a tree, the claim above directly implies the desired result. It follows there is a polynomial-time algorithm for \psdetectFor{2}{$\fvs$} by Theorem~\ref{thm:maxflower+minmax_on_C}.

The $\dfvs$ problem corresponds to \CDeletion where $\C$ is the set of directed acyclic graphs and $\forb(\C)$ is the set of directed cycles.
The preconditions of Theorem~\ref{thm:maxflower+minmax_on_C} follow from known results, see for instance the work of Fleischer et al.~\cite{DBLP:conf/esa/FleischerWY09}, we repeat it for completeness.
Consider any directed graph $D$ and vertex $v \in V(D)$.
Obtain a graph $D'$ by splitting $v$ into $v_i$ and $v_o$. Attach every incoming arc of $v$ to $v_i$ and every outgoing arc to $v_o$. Compute a minimum $(v_i,v_o)$-separator $S \subseteq V(D') \setminus \{v_i,v_o\}$ in $D'$. 
By Menger's theorem, its size is equivalent to the maximum number of internally \bmp{vertex-}disjoint paths from $v_i$ to $v_o$. Since any $(v_i,v_o)$-path in $D'$ is a directed cycle containing $v$ in $D$, it follows that $\flowernumber{\forb(\C)}{D}{v}$ can be computed in polynomial time. \bmp{Finally,} suppose that $D-v$ is a directed acyclic graph. Note that then $S$ is a $\C$-modulator avoiding $v$ of size $\flowernumber{\forb(\C)}{D}{v}$. By Theorem~\ref{thm:maxflower+minmax_on_C} it follows that there is a polynomial-time algorithm for \psdetectFor{2}{$\dfvs$}.
\end{proof}

\subsection{Odd cycle transversal}
Next, we consider \textsc{Odd Cycle Transversal} ($\oct$), which corresponds to \CDeletion where $\C$ is the class of bipartite graphs and $\forb(C)$ is consists of all odd cycles. In order to apply Theorem~\ref{thm:maxflower+minmax_on_C} to $\oct$, we first argue that the class of bipartite graphs satisfies the preconditions. The proof is similar to that of Geelen et al.~\cite[Lemma 11]{DBLP:journals/jct/GeelenGRSV09} who reduce the problem of packing odd cycles containing $v$ to a matching problem. \ben{We note that, although their result \bmp{can be used to obtain} a 3-essential detection algorithm, we will show (Lemma~\ref{lemma:oct-2-essential}) how to efficiently detect 2-essential vertices as well.} \jjh{If the graph resulting from their construction has a large matching}, then there is a large \flower{$v$}{$\forb(\C)$}. If on the other hand there is no large matching, then the Tutte-Berge formula is used to obtain a set of size~$2k$ that hits all the odd cycles passing through $v$. We show that if the graph $G-v$ is bipartite instead, then this second argument can be improved to obtain a hitting set of size~$k$ by noting that the lack of a large matching implies that there is a small vertex cover due to K\H{o}nig's theorem. This is below, \bmp{using the viewpoint that odd cycles in~$G$ correspond to odd $T$-paths in~$G-v$ for $T=N_G(v)$.}

\begin{lemma}\label{lem:oddflower+bipartite_packing_covering_equality}
For any \bmp{undirected} graph~$G$ and set~$T \subseteq V(G)$, a maximum packing of odd $T$-paths can be computed in polynomial time. Moreover, if~$G$ is bipartite then the cardinality of a maximum packing of odd $T$-paths is equal to the minimum size of a vertex set which intersects all odd $T$-paths.
\end{lemma}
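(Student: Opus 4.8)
The plan is to handle the two assertions by different means: the min--max equality in the bipartite case has a short self-contained proof, whereas polynomial-time computability on arbitrary graphs relies on a reduction to matching.

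For the equality, suppose $G$ is bipartite and fix a bipartition of $V(G)$ into independent sets $A$ and $B$; write $T_A := T \cap A$ and $T_B := T \cap B$. The key observation is that in a bipartite graph the odd $T$-paths are \emph{exactly} the paths from $T_A$ to $T_B$: a path with both endpoints in $T$ has odd length if and only if its endpoints lie in different parts, and conversely every path from $T_A$ to $T_B$ has odd length and at least one edge (since $T_A \cap T_B = \emptyset$). Consequently a maximum packing of vertex-disjoint odd $T$-paths is the same thing as a maximum family of vertex-disjoint $T_A$--$T_B$ paths (equivalently, of minimal such paths), and a minimum vertex set meeting all odd $T$-paths is the same thing as a minimum set of vertices whose deletion leaves no $T_A$--$T_B$ path. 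These two quantities are equal by the vertex version of Menger's theorem for the disjoint terminal sets $T_A$ and $T_B$. The same statement also yields the first assertion for bipartite inputs: introduce a source adjacent to $T_A$, a sink adjacent to $T_B$, give every vertex unit capacity, and a single max-flow computation produces both an optimal packing and an optimal hitting set.

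For the first assertion on arbitrary graphs I would reduce packing odd $T$-paths to maximum matching, along the lines of Geelen et al.~\cite[Lemma~11]{DBLP:journals/jct/GeelenGRSV09}, who carry out such a reduction for packing odd cycles through a fixed vertex; packing odd $T$-paths is also the $\mathbb{Z}_2$-labelled instance of the polynomial-time algorithm of Chudnovsky et al.~\cite{ChudnovskyGGGLS06} for packing non-zero $A$-paths. Concretely, one builds an auxiliary graph $H$ on a parity-doubled vertex set --- two copies $v_0,v_1$ of each vertex, recording the parity of the length of a path travelled so far, each original edge switching parity --- and equips the non-terminal vertices and the terminals in $T$ with small gadgets forcing a maximum matching of $H$ to split into vertex-disjoint odd $T$-paths plus trivial pieces, from which the packing is read off and back. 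Since a maximum matching of $H$ is computable in polynomial time, so is a maximum packing of odd $T$-paths. As the paper does, one may alternatively obtain the bipartite equality through $H$: when $G$ is bipartite so is $H$, so K\H{o}nig's theorem turns the absence of a large matching of $H$ into a small vertex cover of $H$, which projects to a small hitting set of odd $T$-paths in $G$; this is interchangeable with the Menger argument above.

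I expect the main obstacle to be the matching construction of the previous paragraph: keeping the parity bookkeeping consistent, allowing a $T$-path to pass through further vertices of $T$ internally, and verifying both directions of the correspondence between maximum matchings of $H$ and maximum packings of odd $T$-paths (in particular that a maximum matching may be assumed to be in the structured form). For the bipartite equality the only subtlety is to make sure the hitting set is allowed to contain vertices of $T$ --- which it must, since an edge between $T_A$ and $T_B$ is itself an odd $T$-path --- so that it corresponds to a separator in Menger's theorem and not to a version that forbids terminal vertices; and if one runs the argument through $H$ instead, one must additionally check that a bipartition of $G$ induces one on $H$.
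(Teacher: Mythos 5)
Your high-level plan coincides with the paper's: reduce packing odd $T$-paths to maximum matching via the Geelen et al.\ construction, and for the bipartite min--max equality use either K\H{o}nig on the auxiliary matching graph or Menger directly. Your Menger argument (odd $T$-paths in a bipartite graph are exactly the $T_A$--$T_B$ paths, apply vertex-Menger for disjoint terminal sets) is correct and is precisely the alternative the paper records in a closing remark; the paper's main route instead goes through K\H{o}nig's theorem on the matching graph $H$, extracting a hitting set $S = (X\cap T)\cup\{u : \{u,u'\}\subseteq X\}$ from a minimum vertex cover $X$ of $H$.

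Where your proposal is weaker than the paper is the matching construction itself, which you flag as the main unresolved piece. The sketch you give --- a parity-doubled graph with two copies $v_0,v_1$ of \emph{every} vertex, edges switching parity, plus unspecified gadgets at terminals and non-terminals --- is not the construction the paper uses and is considerably more complicated than necessary. The paper's $H$ is simply the disjoint union of $G$ with a copy $G'$ of $G-T$ (so terminals are \emph{not} duplicated), together with a perfect matching $\{vv' : v\in V(G)\setminus T\}$ between each non-terminal and its copy; the key claim is then that $G$ has $k$ vertex-disjoint odd $T$-paths iff $H$ has a matching of size $|V(G)\setminus T|+k$. In the forward direction an odd $T$-path $(v_1,\dots,v_{2\ell})$ is encoded as the alternating edges $v_1v_2,\, v_2'v_3',\, v_3v_4,\dots,v_{2\ell-1}v_{2\ell}$, with the remaining non-terminals matched to their own copies; the parity works out exactly because the path has an even number of vertices, and the annoying possibility of internal $T$-vertices (which you correctly identify as a subtlety) is handled by first shortening any such path to an odd $T$-subpath. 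In the backward direction one first normalizes the matching so that for each non-terminal either both $v$ and $v'$ are matched to their own copies or to ``parallel'' edges on the two sides, and then reads off a subgraph whose components are paths with endpoints in $T$, with parity forced by the alternation. You should either reproduce this construction or convince yourself carefully that your gadget version yields the same clean correspondence; as stated, it isn't verifiable.
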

\begin{proof}

We reduce to matching as in~\cite[Lemma 11]{DBLP:journals/jct/GeelenGRSV09}. Construct a graph $H$ as follows. For each $v \in V(G) \setminus T$, let $v' \notin V(G)$ be a copy of $v$. Let $V(H) = V(G) \cup \{v' \mid v \in V(G) \setminus T\}$ and $E(H) = E(G) \cup \{u'v' \mid uv \in E(G - T)\} \cup \{vv' \mid v \in V(G) \setminus T\}$. Note that the graph $H$ consists of the disjoint union of $G$ and a copy of $G-T$, with an added edge between $v \in V(G) \setminus T$ and its copy $v'$. Geelen et al.~\cite{DBLP:journals/jct/GeelenGRSV09} mention that there is a 1-1 correspondence between odd $T$-paths in $G$ and certain augmenting paths in $H$. For completeness we give a self-contained argument.

\begin{claim}\label{claim:oddTpaths_to_matching}
Graph $G$ contains $k$ vertex-disjoint odd $T$-paths if and only if $H$ has a matching $M$ of size $|V(G)\setminus T| + k$. Furthermore, given a matching $M$ in $H$ of size $|V(G)\setminus T| + k$ we can compute a set of $k$ vertex-disjoint odd $T$-paths in polynomial time.
\end{claim}
\begin{claimproof}
($\Rightarrow$) Let $\mathcal{P} = (P_1,\dots,P_k)$ be a set of $k$ vertex-disjoint odd $T$-paths in $G$. Consider a path $P = (v_1, \dots, v_{2\ell}) \in \mathcal{P}$, where $\ell \geq 1$. First note that we can assume that $V(P) \cap T = \{v_1, v_{2\ell}\}$, since if $v_i \in T$ for some $1 < i < 2\ell$, then either $(v_1,\dots,v_i)$ or $(v_i,\dots,v_{2\ell})$ is an odd $T$-path and we can update $\mathcal{P}$ accordingly. Construct a matching $M$ in $H$ as follows. For any path $P = (v_1, \dots, v_{2\ell}) \in \mathcal{P}$, add the edges $v_1v_2,v_2'v_3',\dots,v_{2\ell-1}v_{2\ell}$, alternating between original vertices and copy vertices. This is possible as $P$ is of odd length. For any vertex in $u \in V(G) \setminus T$ that is not contained in an odd $T$-path, we add $uu'$ to $M$. Observe that at least $2|V(G)\setminus T| + 2k$ vertices are matched, therefore $|M| \geq |V(G) \setminus T| + k$ as desired. 

($\Leftarrow$)
Let $M$ be a matching of size $|V(G)\setminus T| + k$. If $M$ contains both $uv$ and $u'v'$ for $u,v \in V(G) \setminus T$, then update $M$ by removing them and inserting $uu'$ and $vv'$ instead. If for $v \in V(G) \setminus T$ only one of $v$ and $v'$ is matched, and it is not matched to its copy, then match it to its copy instead. Afterwards let $E' := \{uv \in E(G) \mid uv \in M \vee u'v' \in M\}$. Observe that in $G[E']$, each vertex in $V(G) \setminus T$ \bmp{has} degree 0 or 2. For each $v \in V(G) \setminus T$ such that $v$ has degree 0 in $G[E']$, add $vv'$ to $M$ if it is not in already. Note that all vertices of $H-T$ are matched. It follows that at least $2k$ vertices in $T$ are matched by $M$ and they have degree 1 in $G[E']$. Observe that $G[E']$ is a collection of paths and cycles with all degree-1 vertices in $T$. We get that there are $k$ $T$-paths in $G$ that are of odd length by construction (every even numbered edge in $G[E']$ originated from the copy part of $H$). Note that we can find them in polynomial time. 
\end{claimproof}

Since a maximum matching can be computed in polynomial time, by the claim above we get that a maximum packing of vertex-disjoint odd $T$-paths can be computed in polynomial time. Next we prove the second part of the statement.

\begin{claim} \label{claim:bipartite}
If $G$ is bipartite and a maximum matching $M$ in $H$ has size $|V(G)\setminus T| + k$, then there is a set $S \subseteq V(G)$ of size at most $k$ such that $G-S$ has no odd-length $T$-path.
\end{claim}
\begin{claimproof}
Observe that since $G$ is bipartite, \bmp{the graph $H$ is bipartite as well}. By K\H{o}nig's theorem~\cite[Theorem 16.2]{Schrijver03}, $H$ has a vertex cover $X$ of size $|V(G)\setminus T| + k$. 
Let $S = (X \cap T) \cup \{u \mid \{u, u'\} \subseteq X\}$. Note that for each $u \in V(G) \setminus T$, at least one of $u \in X$ or $u' \in X$ must hold \bmp{to cover the edge~$uu'$, thereby} already accounting for $|V(G) \setminus T|$ vertices of the cover. It follows that $|S| \leq k$. We argue that $S$ hits all odd-length $T$-paths in~$G$. 

For the sake of contradiction suppose there is some odd-length $T$-path from $t_1 \in T$ to $t_2 \in T \setminus \{t_1\}$ in $G-S$. Let $P = (t_1,v_1,\dots,v_{2\ell},t_2)$ be a (not necessarily induced) shortest \bmp{odd $T$-path}. Note that neither of $t_1$ and $t_2$ belong to $X$ by construction of $S$. Furthermore $P \setminus \{t_1,t_2\} \subseteq V(G) \setminus T$, as otherwise we could construct a shorter odd-length $T$-path. Consider the vertices $V(P) \cup \{v_1',\dots,v_{2\ell}'\} \subseteq V(H)$. By construction of $S$, the vertex cover $X$ has exactly one of $v_i$ and $v_i'$ for each $i \in [2\ell]$. Observe that for it to be a vertex cover, $X$ must contain vertices of $P \setminus \{t_1,t_2\}$ and their copies in an alternating fashion \bmp{since for each edge~$v_i, v_{i+1}$ of~$P$, the graph~$H$ contains edges~$v_i v'_i, v_{i+1} v'_{i+1}, v_i v_{i+1}, v'_i v'_{i+1}$}. Without loss of generality, let~$v_1 \in X$. It follows that $v_{2\ell} \notin X$. But this contradicts that $X$ is a vertex cover as $v_{2\ell}t_2$ is not covered. We conclude that $S$ hits all odd-length $T$-paths in $G$.
\end{claimproof}

By Claim~\ref{claim:oddTpaths_to_matching}, \bmp{a maximum packing of $k$ of vertex-disjoint odd $T$-paths in $G$} implies a matching in $H$ of size $|V(G) \setminus T| + k$. By Claim~\ref{claim:bipartite}, we can create a set of size at most~$k$ that intersects all odd $T$-paths in the bipartite graph $G$. Clearly such a set has size at least $k$. It follows that if $G$ is bipartite, then the cardinality of a maximum packing of odd $T$-paths is equal to the minimum size of a vertex set which intersects all odd $T$-paths. (For completeness, we remark that this last property can also be derived from Menger's theorem: in a bipartite graph with bipartition into~$A \cup B$, a $T$-path is odd if and only if its endpoints belong to different partite sets, so a maximum packing of odd $T$-paths is equivalent to a maximum set of vertex-disjoint paths between~$A \cap T$ and~$B \cap T$.)
\end{proof}

By observing that odd $T$-paths in $G-v$ directly correspond to flowers with odd cycles pairwise intersecting at $v$ in $G$, Lemma~\ref{lem:oddflower+bipartite_packing_covering_equality} and Theorem~\ref{thm:maxflower+minmax_on_C} imply the following. 

\begin{lemma}\label{lemma:oct-2-essential}
There is a polynomial-time algorithm for \psdetectFor{2}{$\oct$}.
\end{lemma}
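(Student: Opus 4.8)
The plan is to deduce the lemma from Theorem~\ref{thm:maxflower+minmax_on_C}, applied to the class~$\C$ of bipartite graphs (so that~$\forb(\C)$ is the set of all odd cycles), by using Lemma~\ref{lem:oddflower+bipartite_packing_covering_equality} to check its two hypotheses. The bridge between the two statements is a dictionary, valid for any vertex~$v$ of a graph~$G$ with~$T := N_G(v)$ and working inside~$G - v$: deleting~$v$ from an odd cycle through~$v$ leaves a path between two distinct vertices of~$T$ with an odd number of edges, that is, an odd $T$-path of~$G - v$; conversely an odd $T$-path of~$G - v$ together with~$v$ forms an odd cycle of~$G$ through~$v$. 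This correspondence is compatible with vertex-disjointness: a collection of odd cycles pairwise meeting exactly in~$v$ maps to a collection of pairwise vertex-disjoint odd $T$-paths of~$G - v$, and vice versa, so the flower number~$\flowernumber{\forb(\C)}{G}{v}$ is essentially the maximum size of a packing of odd $T$-paths in~$G - v$ (up to the induced-subgraph subtlety discussed below).

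Granting the dictionary, I would first check the computational hypothesis of Theorem~\ref{thm:maxflower+minmax_on_C}: running the polynomial-time routine of Lemma~\ref{lem:oddflower+bipartite_packing_covering_equality} on~$G - v$ with~$T = N_G(v)$, once for each~$v \in V(G)$, computes~$\flowernumber{\forb(\C)}{G}{v}$. Then I would check the min--max hypothesis. Assume~$G - v \in \C$, i.e.\ $G - v$ is bipartite; then every odd cycle of~$G$ passes through~$v$. Consequently a vertex set~$S \subseteq V(G) \setminus \{v\}$ is a $\C$-modulator of~$G$ avoiding~$v$ precisely when~$G - S$ has no odd cycle, which by the dictionary happens precisely when~$S$ intersects every odd $T$-path of~$G - v$. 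The second assertion of Lemma~\ref{lem:oddflower+bipartite_packing_covering_equality}, applied to the bipartite graph~$G - v$, states that the minimum size of such a hitting set equals the maximum packing of odd $T$-paths in~$G - v$, which is~$\flowernumber{\forb(\C)}{G}{v}$. Hence the minimum size of a $\C$-modulator of~$G$ avoiding~$v$ equals the flower number, both hypotheses hold, and Theorem~\ref{thm:maxflower+minmax_on_C} supplies the claimed polynomial-time algorithm for~\psdetectFor{2}{$\oct$}.

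The step I expect to need the most care is the dictionary, specifically its interaction with the \emph{induced} requirement in Definition~\ref{def:flower}: a petal must be a chordless odd cycle, whereas the cycle obtained from an odd $T$-path can have chords, and a $T$-path may have internal vertices lying again in~$T$. The internal-vertex issue is handled by shortening: splitting an odd $T$-path at an internal vertex of~$T$ produces two subpaths whose numbers of edges sum to an odd number, so one of them is a strictly shorter odd $T$-path. The chordlessness issue is harmless in the min--max hypothesis, where~$G - v$ is bipartite: any chord of an odd cycle through~$v$ splits off a strictly shorter odd cycle, which being an odd cycle of~$G$ must again pass through~$v$, so chordless petals suffice and the flower number is unchanged. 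For the computational hypothesis it suffices to observe that substituting the odd $T$-path packing number for~$\flowernumber{\forb(\C)}{G}{v}$ throughout the proof of Theorem~\ref{thm:maxflower+minmax_on_C} still yields both guarantees~\ref{prop:c-essential-sup-det-double-guarantee-1} and~\ref{prop:c-essential-sup-det-double-guarantee-2}, so a possible discrepancy between these two quantities on graphs outside the scope of the min--max relation causes no problem. Beyond this bookkeeping, the proof is a direct appeal to Lemma~\ref{lem:oddflower+bipartite_packing_covering_equality} and Theorem~\ref{thm:maxflower+minmax_on_C}.
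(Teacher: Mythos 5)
Your proposal is correct and follows the same route as the paper: translate odd cycles through~$v$ into odd $T$-paths of~$G-v$ with~$T = N_G(v)$, then invoke Lemma~\ref{lem:oddflower+bipartite_packing_covering_equality} to verify both hypotheses of Theorem~\ref{thm:maxflower+minmax_on_C}. The paper states this as a one-line observation, whereas you carefully address the induced-subgraph bookkeeping (chordlessness of petals, internal $T$-vertices); this is a legitimate subtlety and your resolution --- noting that when $G-v$ is bipartite the chordless odd cycle inside any odd cycle through~$v$ still passes through~$v$, and that for the computational hypothesis the odd $T$-path packing number can simply replace the flower number throughout the proof of Theorem~\ref{thm:maxflower+minmax_on_C} --- is correct.
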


\subsection{Directed odd cycle transversal}
The $\doct$ problem corresponds to $\C$-deletion where $\forb(C)$ consists of all directed cycles of odd length. \bmp{Using Menger's theorem on an auxiliary graph, we can detect $3$-essential vertices for this problem.}

\begin{lemma} \label{lem:doct}
There is a polynomial-time algorithm for \psdetectFor{3}{$\doct$}.
\end{lemma}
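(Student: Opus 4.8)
The plan is to follow the template of Theorem~\ref{thm:maxflower+minmax_on_C}: attach to each vertex~$v$ a polynomially-computable quantity~$p(v)$ that forces~$v$ into every small solution when it is large, and certifies that~$v$ can be avoided at bounded extra cost when it is small. Since an \emph{exact} min-max relation between $\C$-modulators avoiding~$v$ and packings of odd directed cycles through~$v$ is not available for digraphs, I expect~$p(v)$ to approximate the relevant covering number only up to a factor~$2$, which is exactly what degrades the guarantee from~$c=2$ to~$c=3$. The auxiliary object is the \emph{layered digraph}~$D^\lambda$ with vertex set~$\{x_0,x_1\mid x\in V(D)\}$ and an arc~$x_p\to y_{1-p}$ for every arc~$x\to y$ of~$D$ and every~$p\in\{0,1\}$. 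Forgetting the second coordinate turns a directed walk in~$D^\lambda$ from~$x_p$ to~$y_q$ into a walk of the same length from~$x$ to~$y$ in~$D$, and that length is congruent to~$p-q$ modulo~$2$; conversely every walk of~$D$ lifts. Hence an odd directed cycle through~$v$ in~$D$ becomes a (simple) directed path from~$v_0$ to~$v_1$ in~$D^\lambda$, while any directed path from~$v_0$ to~$v_1$ in~$D^\lambda$ projects to a closed directed walk through~$v$ of odd length, which contains an odd directed cycle through~$v$ whenever~$D-v\in\C$.

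The algorithm is then: for every~$v\in V(D)$, compute by a single max-flow computation the minimum size~$p(v)$ of a $(v_0,v_1)$-separator in~$D^\lambda$ (equivalently, by Menger's theorem, the maximum number of internally vertex-disjoint directed paths from~$v_0$ to~$v_1$), and output~$S:=\{v\mid p(v)>2k\}$. Correctness rests on two statements obtained by lifting and projecting. First, if~$Y\subseteq V(D)\setminus\{v\}$ is any $\C$-modulator of~$D$, then~$\{y_0,y_1\mid y\in Y\}$ is a $(v_0,v_1)$-separator of~$D^\lambda$, because a path from~$v_0$ to~$v_1$ avoiding it would project to an odd closed walk through~$v$ inside~$D-Y\in\C$, which is impossible; hence~$p(v)\le 2|Y|$. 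Second, if~$D-v\in\C$, then a minimum $(v_0,v_1)$-separator~$S^\lambda$ of~$D^\lambda$ projects to the set~$Y:=\{x\mid x_0\in S^\lambda\text{ or }x_1\in S^\lambda\}$, which avoids~$v$, has size at most~$|S^\lambda|$, and is a $\C$-modulator of~$D$: any odd directed cycle of~$D-Y$ must pass through~$v$ (as~$D-v\in\C$), and would therefore lift to a path from~$v_0$ to~$v_1$ in~$D^\lambda-S^\lambda$, a contradiction.

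Property~\ref{prop:c-essential-sup-det-double-guarantee-1} then follows from the first statement: if~$\opt_\doct(D)\le k$ and~$v\in S$, any optimal solution~$Y$ has~$|Y|\le k$, so~$v\notin Y$ would give~$p(v)\le 2|Y|\le 2k$, contradicting~$v\in S$. For Property~\ref{prop:c-essential-sup-det-double-guarantee-2}, suppose~$\opt_\doct(D)=k$ and fix~$v\notin S$ (so~$p(v)\le 2k$) together with an optimal solution~$X$. If~$v\notin X$, then~$X$ is already a $\C$-modulator of size~$k\le 3k$ avoiding~$v$. If~$v\in X$, put~$D':=D-(X\setminus\{v\})$, so that~$D'-v=D-X\in\C$; since~$(D')^\lambda$ is an induced subdigraph of~$D^\lambda$, the minimum $(v_0,v_1)$-separator of~$(D')^\lambda$ has size at most~$p(v)\le 2k$, and the second statement applied to~$D'$ yields a $\C$-modulator~$X'$ of~$D'$ with~$v\notin X'$ and~$|X'|\le 2k$. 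Then~$(X\setminus\{v\})\cup X'$ is a $\C$-modulator of~$D$ of size at most~$(k-1)+2k<3k$ avoiding~$v$, so~$v$ is not $3$-essential. One max-flow per vertex on a digraph of size linear in~$|V(D)|$ runs in polynomial time.

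The step I expect to be the main obstacle is the factor~$2$ in the second statement: a $(v_0,v_1)$-separator of~$D^\lambda$ need not project to a separator of the same size in~$D$, since one vertex~$x$ of~$D$ can be used by one $v_0$-$v_1$ path through~$x_0$ and by another through~$x_1$. This is precisely the obstruction to an exact packing/covering equality here, and it is what forces~$c=3$ rather than~$c=2$. The remaining points are routine but should be verified carefully: that~$D^\lambda$ has no arc~$v_0\to v_1$ (such an arc would require a loop at~$v$), so that Menger's theorem applies to this terminal pair; that a closed directed walk of odd length always contains an odd directed cycle (by induction, splitting the walk at a repeated vertex); and that the projected set in the second statement genuinely avoids~$v$, which holds because a $(v_0,v_1)$-separator is by definition disjoint from~$\{v_0,v_1\}$.
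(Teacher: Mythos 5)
Your proof is correct and follows essentially the same route as the paper: the same label-extended (layered) digraph construction, a minimum $(v_0,v_1)$-separator computation for each vertex, the threshold $2k$, and the two-sided factor-$2$ loss of the projection giving $c=3$. The only cosmetic differences are that for Property~\ref{prop:c-essential-sup-det-double-guarantee-1} you argue via the covering side (a $v$-avoiding modulator $Y$ lifts to a separator of size $2|Y|$) where the paper argues via the packing side (many disjoint $(v_0,v_1)$-paths give many near-disjoint odd cycles through $v$), and for Property~\ref{prop:c-essential-sup-det-double-guarantee-2} you recompute the separator in $(D')^\lambda$ where the paper simply reuses the separator already computed in $D^\lambda$ — both variants give the same bound.
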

\begin{proof}
Consider an instance $(D,k)$ of \psdetectFor{3}{$\doct$}.  Construct the so called label-extended graph (cf.~\cite[Section 3.1]{DBLP:conf/approx/AlevL17}) $D'$ (initially empty) as follows. For each $v \in V(D)$, add $v'$ and $v''$ to $V(D')$. For each directed arc $(u,v) \in E(D)$, add the arcs $(u',v'')$ and $(u'',v')$ to $E(D')$. For each vertex $v \in V(D)$, compute a minimum $(v',v'')$-separator $S_v$ in $D'$. Let $Q \subseteq V(D)$ be the set of vertices with $v \in Q$ if and only if $|S_v| \geq 2k+1$. We argue that $Q$ satisfies the output requirements~\ref{prop:c-essential-sup-det-double-guarantee-1} and~\ref{prop:c-essential-sup-det-double-guarantee-2}. 

First, suppose that $\opt_{\doct}(G) \leq k$. Consider a vertex $v \in Q$. Since $|S_v| \geq 2k+1$, by Menger's theorem there is a collection $P_1', \dots, P_{|S_v|}'$ of internally vertex-disjoint $(v',v'')$-paths in $D'$. Let $P_i = \{u \mid \{u',u''\} \cap V(P_i') \neq \emptyset \}$ and $\mathcal{P} = \bigcup_i \{P_i\}$. 

\begin{claim}
$D[P_i]$ contains an odd directed cycle for each $i \in [|S_v|]$. Furthermore, each $u \in V(D) \setminus \{v\}$ intersects at most two vertex sets of $\mathcal{P}$.
\end{claim}
\begin{claimproof}
\jjh{Each $(v',v'')$-path $P_i'$ in $D'$ corresponds to a directed odd walk $P_i$ in $D$, which contains a directed odd cycle. To see the second point, note that the paths in $D'$ are vertex-disjoint and since we created two copies of each vertex, it follows that each vertex of $V(D) \setminus \{v\}$ intersects at most two vertex sets of $\mathcal{P}$.}
\end{claimproof}
By the claim above, any solution to $\doct$ avoiding \bmp{$v \in Q$} has size at least $k+1$. It follows that each vertex in $Q$ belongs to every optimal solution and therefore Property~\ref{prop:c-essential-sup-det-double-guarantee-1} is satisfied.
Next, suppose that $\opt_{\doct}(G) = k$. We argue that $Q$ contains all 3-essential vertices. Consider an optimal solution $X$ a vertex $v \notin Q$. If $v \notin X$, then clearly $v$ is not 3-essential and there is nothing to show, so suppose that $v \in X$. Let $X' = \{u \in V(D) \mid \{u', u''\} \cap S_v \neq \emptyset\}$. Note that $|X'| \leq |S_v|$. Since any odd cycle containing $v$ corresponds to a $(v',v'')$-path in $D'$, and since $S_v$ is a $(v',v'')$-separator in $D'$, it follows that $(X \setminus \{v\}) \cup X'$ is a solution of size at most $k-1 + 2k+1 = 3k$ that avoids $v$. It follows that $v$ is not $3$-essential, therefore $Q$ contains all 3-essential vertices and Property~\ref{prop:c-essential-sup-det-double-guarantee-2} is satisfied.
\end{proof}

\bmp{We cannot use the approach based on computing a maximum $(v,\mathcal{F})$-flower for the \textsc{Chordal Deletion} problem; a simple reduction\footnote{Starting from an instance~$(G,(s_1,t_1),\ldots,(s_\ell,t_\ell))$ of \textsc{Disjoint Paths} satisfying~$s_it_i\notin E(G)$ for all~$i\in [\ell]$ (which is without loss of generality), insert a vertex~$v$ adjacent to~$A = \bigcup_{i=1}^\ell \{s_i, t_i\}$ and insert all edges between vertices in~$A$ except~$s_it_i$ for each~$i \in [\ell]$.} from \textsc{Disjoint Paths}~\cite{RobertsonS95b} shows that it is NP-hard to compute a maximum~$(v, \mathcal{F})$-flower when $\mathcal{F}$ is the set of chordless cycles of length at least four. In the next section, we will therefore use an approach based on the linear-programming relaxation to deal with \textsc{Chordal Deletion}.}

\section{Positive results via Linear Programming}\label{sec:positive_lp}

\bmp{Consider the following natural linear program for \CDeletion for hereditary graph classes~$\C$. The LP corresponding to an input graph~$G$ is defined on the variables~$(x_u)_{u \in V(G)}$, as follows.

\deflp{$\C$-Deletion LP}
{minimize~$\sum _{u \in V(G)} x_u.$}
{\begin{itemize}
    \item $\sum _{u \in V(H)} x_u \geq 1$ for each induced subgraph~$H$ of~$G$ isomorphic to a graph in~$\forb(\C)$,
    \item $0 \leq x_u \leq 1$ for each $u \in V(G)$.
\end{itemize}}

In the corresponding integer program, the constraint~$0 \leq x_u \leq 1$ is replaced by~$x_u \in \{0,1\}$. We say that a minimization LP has \emph{integrality gap} at most~$c$ for some~$c \in \mathbb{R}$ if the cost of an integer optimum is at most~$c$ times the cost of a fractional optimum. In general, the number of constraints in the \textsc{$\C$-Deletion LP} can be exponential in the size of the graph. Using the ellipsoid method (cf.~\cite{Schrijver03}), this can be handled using \ben{a} separation oracle: a polynomial-time algorithm that, given an assignment to the variables, outputs a violated \ben{constraint} if one exists. It is well-known (cf.~\cite[Thm. 5.10]{Schrijver03}) that linear programs with an exponential number of constraints can be solved in polynomial time using a polynomial-time separation oracle. 
To detect essential vertices, the integrality gap of a slightly extended LP will be crucial. We define the \textsc{$v$-Avoiding $\C$-Deletion LP} for a graph~$G$ and distinguished vertex~$v \in V(G)$ as the \textsc{$\C$-Deletion LP} with the additional constraint that~$x_v = 0$. Hence its integral solutions correspond to $\C$-modulators avoiding~$v$.}

\begin{theorem} \label{thm:lp:metathm}
Let $\C$ be a hereditary graph class such that for each graph~$G$ and~$v \in V(G)$ satisfying~$G - v \in \C$, the integrality gap of \textsc{$v$-Avoiding $\C$-Deletion} on~$G$ is at most~$c \in \mathbb{R}_{\geq 1}$. If there is a polynomial-time separation oracle for the \textsc{$\C$-Deletion LP}, then there is a polynomial-time algorithm for 
\psdetectFor{(c+1)}{\CDeletion}.
\end{theorem}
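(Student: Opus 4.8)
The plan is to mirror the structure of the proof of Theorem~\ref{thm:maxflower+minmax_on_C}, replacing the min-max flower argument by an argument about the integrality gap of the \textsc{$v$-Avoiding $\C$-Deletion LP}. On input $(G,k)$, for each vertex $v \in V(G)$ I would solve the \textsc{$v$-Avoiding $\C$-Deletion LP} in polynomial time; this is possible because the separation oracle for the \textsc{$\C$-Deletion LP} also serves as a separation oracle for the $v$-avoiding version (the extra constraint $x_v = 0$ is a single explicit constraint), so the ellipsoid method applies via the cited result \cite[Thm.~5.10]{Schrijver03}. Let $\mathrm{LP}_v(G)$ denote the optimum value of this LP. I would then define $S := \{v \in V(G) : \mathrm{LP}_v(G) > k\}$ and output $S$. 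Note $\mathrm{LP}_v(G) = +\infty$ is possible (when every $\C$-modulator must contain $v$, i.e.\ the LP is infeasible), and such $v$ are placed in $S$.

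Next I would verify Property~\ref{prop:c-essential-sup-det-double-guarantee-1}. Suppose $\opt_\C(G) \leq k$ and take $v \in S$. Any $\C$-modulator of $G$ avoiding $v$ yields an integral, hence fractional, feasible solution to the \textsc{$v$-Avoiding $\C$-Deletion LP}, so its size is at least $\mathrm{LP}_v(G) > k \geq \opt_\C(G)$. Therefore no optimal solution of $G$ avoids $v$, i.e.\ every optimal solution contains $v$; since $\opt_\C(G) \leq k$, this gives an optimal solution containing all of $S$.

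For Property~\ref{prop:c-essential-sup-det-double-guarantee-2}, suppose $\opt_\C(G) = k$, let $X$ be an optimal solution, and take any $v \notin S$; I must show $v$ is not $(c+1)$-essential. This is immediate if $v \notin X$, so assume $v \in X$ and set $G' := G - (X \setminus \{v\})$. Since $X$ is a $\C$-modulator and $v \in X$, we have $G' - v = G - X \in \C$, so the integrality-gap hypothesis applies to $G'$ and $v$. The key point is monotonicity of the LP optimum under taking induced subgraphs: every constraint of the \textsc{$v$-Avoiding $\C$-Deletion LP} for $G'$ is (up to the added variables being fixed appropriately) a constraint of the one for $G$, because a forbidden induced subgraph of $G'$ is also a forbidden induced subgraph of $G$; hence $\mathrm{LP}_v(G') \leq \mathrm{LP}_v(G) \leq k$, where the last inequality is because $v \notin S$. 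By the integrality-gap assumption there is an integral solution, i.e.\ a $\C$-modulator $X'$ of $G'$ avoiding $v$, with $|X'| \leq c \cdot \mathrm{LP}_v(G') \leq c \cdot k$. Then $(X \setminus \{v\}) \cup X'$ is a $\C$-modulator of $G$ avoiding $v$ of size at most $(k-1) + ck \leq (c+1)k = (c+1)\opt_\C(G)$, witnessing that $v$ is not $(c+1)$-essential.

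The main obstacle I anticipate is subtle rather than deep: one has to be careful with the LP optimum being possibly infinite, and, more importantly, argue cleanly that restricting from $G$ to the induced subgraph $G'$ cannot increase the $v$-avoiding LP optimum — this uses that $\C$ is hereditary so that $\forb(\C)$-subgraphs are preserved under taking induced subgraphs, and that the objective and box constraints are unchanged. A secondary point to state explicitly is why the separation oracle for the plain \textsc{$\C$-Deletion LP} still works after adding the constraint $x_v = 0$, and why solving the LP via ellipsoid yields an actual vertex of the polytope (or at least the exact optimum value), which is all that is needed here.
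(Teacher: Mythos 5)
Your proposal is correct and follows essentially the same argument as the paper's proof: the same per-vertex LP-solving algorithm, the same contrapositive argument for Property~\ref{prop:c-essential-sup-det-double-guarantee-1}, and the same restriction to $G' = G - (X \setminus \{v\})$ combined with the integrality-gap hypothesis for Property~\ref{prop:c-essential-sup-det-double-guarantee-2}. The clarifying remarks you flag (adding $x_v=0$ to the separation oracle, monotonicity of the LP value under induced subgraphs for hereditary $\C$) are exactly the implicit steps the paper uses as well.
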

\begin{proof}
Given~$G$ and~$k$, the detection algorithm initializes an empty vertex set~$S$ and proceeds as follows. For each~$v \in V(G)$, it solves the \textsc{$v$-Avoiding $\C$-Deletion LP} on~$G$ in polynomial time using the ellipsoid method via the polynomial-time separation oracle. If the linear program has cost more than~$k$, we add~$v$ to~$S$. After having considered all vertices~$v \in V(G)$, the resulting set~$S$ is given as the output. 

To see that the output satisfies Property~\ref{prop:c-essential-sup-det-double-guarantee-1}, assume that~$\opt_\C(G) \leq k$ and consider some optimal $\C$-modulator~$X$ of size at most~$k$. If~$v \notin X$ then~$X$ induces an integer feasible solution to the \textsc{$\C$-Deletion LP} that satisfies the additional constraint $x_v = 0$, so that the cost of the \textsc{$v$-Avoiding $\C$-Deletion LP} is at most~$k$ and therefore~$v \notin S$. By contraposition, all vertices of~$S$ are indeed contained in some minimum $\C$-modulator, namely~$X$.

To see that the algorithm also satisfies Property~\ref{prop:c-essential-sup-det-double-guarantee-2}, assume~$\opt_\C(G) = k$ and let~$X$ be a minimum $\C$-modulator of size~$k$. We prove that~$S$ contains all $(c+1)$-essential vertices. Consider an arbitrary~$v \notin S$; we will argue it is not $(c+1)$-essential by exhibiting a $(c+1)$-approximate modulator avoiding~$v$. Since~$v \notin S$, the \textsc{$v$-Avoiding $\C$-Deletion LP} has a (fractional) solution~$\mathbf{x} = (x_u)_{u \in V(G)}$ of cost at most~$k$. If~$v \notin X$ then~$v$ was not $(c+1)$-essential, and there is nothing to show. So assume~$v \in X$.

Restricting the assignment~$\mathbf{x}$ to the vertices of the graph~$G' := G - (X \setminus \{v\})$ yields a feasible solution~$\mathbf{x}'$ for the \textsc{$v$-Avoiding $\C$-Deletion LP} on~$G'$, whose cost is at most that of~$\mathbf{x}$ and therefore at most~$k$. Note that~$G' - v$ equals~$G - X$ and therefore belongs to~$\C$. So by the precondition to the theorem, the integrality gap for \textsc{$v$-Avoiding $\C$-Deletion} on~$G'$ is at most~$c$. Hence the solution $\mathbf{x}'$ can be rounded to an integral solution~$X'$ on~$G'$ of cost at most~$c \cdot k$ and~$v \notin X'$ due to the constraint~$x_v = 0$. Since~$G - ((X \setminus \{v\}) \cup X') = G' - X' \in \C$, it follows that~$(X \setminus \{v\}) \cup X'$ is a $\C$-modulator of size at most~$c \cdot k + k = (c+1)k$ avoiding~$v$, which is therefore a~$(c+1)$-approximation. Hence~$v$ is not $(c+1)$-essential whenever~$v \notin S$.
\end{proof}

Using known results on covering versus packing for chordless cycles in near-chordal graphs, the approach above can be used to detect essential vertices for \textsc{Chordal Deletion}. \bmp{For the class of chordal graphs}, the corresponding set of forbidden induced subgraphs is the class \hole of all holes, i.e., induced chordless cycles of length at least four.

\begin{lemma}[{\cite[Lemma 1.3]{DBLP:journals/siamdm/JansenP18}}] \label{lem:chordal:packingcovering}
There is a polynomial-time algorithm that, given a graph $G$ and a vertex $v$ such that $G-v$ is chordal, outputs a \flower{$v$}{\hole} with $p$-petals and a set $S \subseteq V(G) \setminus \{v\}$ of size at most $12p$ such that $G-S$ is chordal.
\end{lemma}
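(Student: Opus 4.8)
The first move is to translate the statement into a packing/covering problem purely about the chordal graph $H := G - v$. Since $H$ is chordal it has no hole, so every hole of $G$ passes through $v$. Writing $A := N_G(v)$ and $B := V(G) \setminus (A \cup \{v\})$, deleting $v$ from a hole through $v$ leaves an induced path of $H$ whose two endpoints lie in $A$, are non-adjacent (otherwise the hole would be a triangle), and whose interior lies entirely in $B$ (an interior vertex in $A$ would be a chord to $v$); conversely, for any non-adjacent pair $a,b \in A$ joined by a path through $B$, a shortest such path together with $v$ is a hole of $G$. Call an induced path of $H$ from $A$ to $A$ with non-adjacent endpoints and interior in $B$ a \emph{good path}. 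Then a \flower{$v$}{\hole} with $p$ petals is exactly a family of $p$ pairwise vertex-disjoint good paths in $H$, and a vertex set $S \subseteq V(G) \setminus \{v\}$ makes $G - S$ chordal precisely when $S$ meets every good path of $H$ (since $H - S$ stays chordal, a surviving hole of $G - S$ would again be a good path avoiding $S$). So I would reduce the lemma to a polynomial-time, constant-ratio Erd\H{o}s--P\'osa statement for good paths in a chordal graph.

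For this reduced statement the plan is to work with a clique tree $\mathcal{T}$ of $H$, exploiting the two classical facts that every minimal separator of $H$ is a clique and that the maximal cliques containing a fixed vertex induce a subtree of $\mathcal{T}$ --- so two vertices are adjacent iff their subtrees intersect, and in particular the subtrees of a good pair are disjoint. I would root $\mathcal{T}$ and sweep it from the leaves upward, greedily carving out a (shortcut) good path whenever the already-processed part of $H$ still contains one that is disjoint from every petal chosen so far, and recording for each petal the deepest node of $\mathcal{T}$ it meets as its ``anchor''. Vertex-disjointness of the petals gives distinct anchors. The hitting set $S$ is then assembled by placing at each anchor a bounded number of carefully chosen vertices: the two endpoints of that petal in $A$, together with one vertex from each of the constantly many minimal (clique) separators that its shortcut crosses. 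A tree-charging argument matching each anchor to its unique petal should bound $|S|$ by a constant times the number of petals; and the procedure terminates with a genuine cover because any good path surviving in $H - S$ would, by the subtree structure of $\mathcal{T}$, be forced through a region whose anchor was already handled, contradicting how $S$ was built.

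The hard part will be getting a \emph{constant} ratio even though a single hole can be arbitrarily long, so that neither whole petals nor whole cliques may be dumped into $S$. The way out is exactly the chordality of $H$: however long an induced $A$-to-$A$ path is, it is pinned down at only a bounded number of clique separators, and deleting a single vertex from each such separator (plus the two endpoints) destroys simultaneously every good path routed in the same way through that region --- this is what keeps the per-petal cost in $S$ bounded. Pinning the constant down to $12$ specifically will require a careful local case analysis at each anchor node, distinguishing holes of length four from longer ones and bounding how many minimal separators a shortcut good path can cross; this case analysis is the technical heart of the argument and is carried out in full in~\cite{DBLP:journals/siamdm/JansenP18}.
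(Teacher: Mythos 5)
This lemma is not proved in the paper at all: it is imported verbatim as a black box, cited directly to Jansen and Pilipczuk (their Lemma~1.3). So there is no ``paper's own proof'' to compare against; what you have produced is a blind attempt to \emph{reprove} a cited result.

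Your first paragraph is a correct and useful normalization. Since $H := G - v$ is chordal, every hole of $G$ passes through $v$, and the bijection between such holes and induced $A$-to-$A$ paths with non-adjacent endpoints and interior in $B$ is exactly right; so is the observation that a $(v,\mathsf{hole})$-flower is a vertex-disjoint packing of good paths and that $S$ destroys all holes iff it hits all good paths. This part is clean.

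The second and third paragraphs, however, do not constitute a proof, and one of their load-bearing claims is suspect. You assert that ``however long an induced $A$-to-$A$ path is, it is pinned down at only a bounded number of clique separators,'' and you make this the reason the per-petal cost of $S$ stays constant. That claim is not justified and is not obviously true: an induced path of length $m$ in a chordal graph can cross $\Theta(m)$ distinct bags of a clique tree, and ``shortcutting'' does not change this, since a shortest $a$--$b$ path must still visit every minimal $a$--$b$ separator along the tree path between the bags of $a$ and $b$, of which there can be arbitrarily many. So your tree-charging scheme does not, as stated, yield a bounded hitting cost per anchor, and the conclusion $|S| \leq 12p$ is unsupported. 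You then explicitly punt the quantitative analysis (``carried out in full in [JansenP18]''), which means the heart of the lemma---the ratio-$12$ Erd\H{o}s--P\'osa bound---is neither established by your argument nor salvaged by a correct alternative route. In short: the reduction to good paths is fine, but the packing/covering bound, which is the entire content of the statement, is missing, and the one mechanism you propose for it rests on a false intermediate claim.
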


Using this covering/packing statement, we can bound the relevant integrality gap and thereby detect essential vertices for \textsc{Chordal Deletion}.

\begin{lemma} \label{lem:chordal:detect}
There is a polynomial-time algorithm for \psdetectFor{13}{$\chordDel$}.
\end{lemma}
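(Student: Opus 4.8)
The plan is to apply Theorem~\ref{thm:lp:metathm} with $\C$ the class of chordal graphs and $\forb(\C) = \hole$, so it suffices to verify the two preconditions of that theorem for $c = 12$; the $(c+1) = 13$ in the statement then matches exactly. The first precondition asks for a polynomial-time separation oracle for the \textsc{$\C$-Deletion LP}: given an assignment $(x_u)_{u \in V(G)}$, we must either certify that every hole $H$ of $G$ has $\sum_{u \in V(H)} x_u \geq 1$, or output a hole violating this. I would do this by a straightforward shortest-cycle search: for each pair of non-adjacent vertices $a,b$ with a common neighbour, or more carefully via the standard technique of looking for a shortest induced cycle of length $\geq 4$ through each vertex or edge while using $x_u$ as edge/vertex weights, one can find in polynomial time a hole minimizing $\sum_{u \in V(H)} x_u$ (this is the well-known fact that a shortest hole, or shortest hole through a fixed vertex, can be computed in polynomial time; weighting the vertices does not change this). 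If the minimum such value is $<1$ we return that hole as the violated constraint, otherwise we report feasibility.

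The second precondition is the integrality-gap bound: for every graph $G$ and $v \in V(G)$ with $G - v$ chordal, the \textsc{$v$-Avoiding $\C$-Deletion LP} on $G$ has integrality gap at most $12$. Here is where Lemma~\ref{lem:chordal:packingcovering} does the work. Let $\tau^*$ be the optimum of the \textsc{$v$-Avoiding $\C$-Deletion LP} on $G$, and let $p = \flowernumber{\hole}{G}{v}$. Since the petals of a $\flower{v}{\hole}$ are holes pairwise meeting only in $v$ and the LP constraint $x_v = 0$ forces $v$ to contribute nothing, summing the hole-constraints over the $p$ petals and using that they are otherwise vertex-disjoint gives $\tau^* \geq p$ (the LP value is at least the fractional, hence at least the integral, packing number of the petals restricted to $V(G) \setminus \{v\}$). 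On the other hand, Lemma~\ref{lem:chordal:packingcovering} produces in polynomial time a $\flower{v}{\hole}$ with some number of petals $p' \le p$ \emph{together with} a set $S \subseteq V(G) \setminus \{v\}$ with $|S| \le 12 p'$ such that $G - S$ is chordal, i.e. $S$ is an integral solution to the \textsc{$v$-Avoiding $\C$-Deletion LP}. Hence the integral optimum is at most $|S| \le 12 p' \le 12 p \le 12 \tau^*$, which is exactly the claimed integrality gap of $12$; moreover the rounding is algorithmic, as Theorem~\ref{thm:lp:metathm} requires only an existential gap bound but the constructive version does no harm. Plugging $c = 12$ into Theorem~\ref{thm:lp:metathm} yields a polynomial-time algorithm for \psdetectFor{13}{$\chordDel$}.

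The main obstacle, and the only genuinely non-routine point, is the separation oracle: one has to be careful that it finds a violated \emph{hole} constraint (induced cycle of length $\geq 4$) and not merely a violated cycle constraint, and that this can be done with vertex weights in polynomial time. The cleanest route is to recall that testing chordality and, more generally, finding a shortest hole through a prescribed vertex is a classical polynomial-time task (via BFS layering / lexicographic search), and that replacing "shortest'' by "minimum $x$-weight'' only changes the single-source subroutine to a Dijkstra-type computation; I would cite the standard reference for finding shortest holes and note that the weighting is immediate. Everything else — the packing lower bound on $\tau^*$ from disjointness of petals, the upper bound on the integral optimum from Lemma~\ref{lem:chordal:packingcovering}, and the invocation of Theorem~\ref{thm:lp:metathm} — is bookkeeping, so the proof will be short.
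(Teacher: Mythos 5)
Your proposal is correct and follows essentially the same route as the paper: the integrality-gap bound of $12$ is derived from Lemma~\ref{lem:chordal:packingcovering} exactly as in the paper (flower packing lower-bounds the LP, the returned modulator upper-bounds the integral optimum), and the separation oracle you sketch — iterating over a vertex $u$ and pairs of non-adjacent neighbours $p,q$, then running a weighted shortest-path computation in $G-(N_G[u]\setminus\{p,q\})$ — is precisely what the paper spells out, including the need to ensure the detected violated constraint is a genuine hole rather than an arbitrary cycle.
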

\begin{proof}
We first argue that the integrality gap \bmp{for \textsc{$v$-Avoiding Chordal Deletion} is bounded by~$12$ when~$G-v$ is chordal.} By Lemma~\ref{lem:chordal:packingcovering}, there is a value of~$p$ such that~$G$ contains both a $(v, \hole)$-flower~$\{C_1, \ldots, C_p\}$ with $p$-petals and a $v$-avoiding chordal modulator~$S$ of size at most~$12p$. Due to the constraint~$x_v = 0$, any fractional solution to the linear program has a cost of at least~$p$, since~$\sum _{u \in C_i} x_u \geq 1$ while~$x_v = 0$ and the holes \ben{only intersect at $v$}. At the same time, an integral solution has cost at most~$12p$ as~$S$ is such a solution. Hence the integrality gap is at most 12.

To invoke Theorem~\ref{thm:lp:metathm}, it suffices to argue that there is a polynomial-time separation oracle for the linear program. \bmp{Such a separation oracle is known (cf.~\cite[\S 10.1]{DBLP:journals/siamdm/JansenP18}); we repeat it here for completeness.} To test whether an assignment~$\mathbf{x} = (x_u)_{u \in V(G)}$ satisfies all constraints, it suffices to do the following. For each~$u \in V(G)$ we find the minimum total weight of any hole involving~$u$, as follows. For each pair of distinct non-adjacent~$p,q \in N_G(u)$ we use Dijkstra's algorithm to find the minimum weight~$W$ of a path~$P$ from~$p$ to~$q$ in the graph~$G - (N_G[u] \setminus \{p,q\})$ where the values of~$\mathbf{x}$ are used as the vertex weights. There is a hole of weight less than~$1$ through~$(p,u,q)$ \bmp{if and only if~$W + x_u < 1$}. Moreover, if~$W + x_u < 1$ then by extending path~$P$ with the vertex~$u$ we find a hole whose total weight is less than~$1$ and therefore a violated constraint. The fact that we remove all vertices of~$N_G[u]$ other than~$p$ and~$q$ ensures that the cycle we get in this way is induced, while the non-adjacency of~$p$ and~$q$ ensures it has at least four vertices. Hence after iterating over all choices of~$u,p,q$ we either find a violated constraint or conclude that the assignment is feasible. This shows that Theorem~\ref{thm:lp:metathm} is applicable and concludes the proof.
\end{proof}

\section{Consequences for Parameterized Algorithms} \label{sec:fpt}
In this section we show how the algorithms for \psdetectshort{$c$} from the previous section can be used to solve \CDeletion for various classes~$\C$, despite the fact that the detection algorithms only work when certain guarantees on~$k$ are met. The main theorem connecting the detection problem to solving \CDeletion is the following.

\begin{theorem}\label{thm:non-essentiality}
Let $\mathcal{A}$ be an algorithm that, given a graph $G$ and an integer $k$, runs in time~$f(k) \cdot |V(G)|^{\Oh(1)}$ for some non-decreasing function~$f$ and returns a minimum-size  $\C$-modulator if there is one of size at most~$k$. 
Let $\mathcal{B}$ be a polynomial-time algorithm for \psdetectFor{c}{\CDeletion}. Then there is an algorithm that, given a graph $G$, outputs a minimum-size $\C$-modulator in time $f(\ell) \cdot |V(G)|^{\Oh(1)}$, where~$\ell = \opt_\C(G) - |\mathcal{E}_c(G)|$ is the \emph{$c$-non-essentiality}.
\end{theorem}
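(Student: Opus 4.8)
The plan is to run the detection algorithm $\mathcal{B}$ and the solver $\mathcal{A}$ together inside a loop over increasing guesses for $\opt_\C(G)$. The key insight is that we do not know $\opt_\C(G)$ in advance, but the correctness guarantees of \psdetectFor{c}{\CDeletion} are stated relative to the input integer $k$, and these guarantees become \emph{usable} exactly when $k$ reaches the true optimum. So first I would iterate $k = 0, 1, 2, \ldots$; for each value of $k$, call $\mathcal{B}(G, k)$ to obtain a vertex set $S_k$, then run $\mathcal{A}(G - S_k, k - |S_k|)$ to look for a $\C$-modulator of $G - S_k$ of size at most $k - |S_k|$ (if $k - |S_k| < 0$, this call fails immediately). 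If $\mathcal{A}$ returns a modulator $S'$ of $G - S_k$, output $S_k \cup S'$ and halt; otherwise increment $k$ and repeat.

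The second step is to argue correctness, i.e.\ that the loop halts at $k = \opt_\C(G)$ with a correct answer and never halts earlier with an incorrect one. For $k < \opt_\C(G)$: since $G$ has no $\C$-modulator of size $k$, in particular $G - S_k$ has no $\C$-modulator of size $k - |S_k|$ (any such would, together with $S_k$, give one of size $k$ for $G$), so $\mathcal{A}$ correctly reports failure — the loop cannot stop too early. For $k = \opt_\C(G)$: now the hypothesis $\opt_\C(G) \le k$ of Property~\ref{prop:c-essential-sup-det-double-guarantee-1} holds, so there is an optimal $\C$-modulator $X$ of $G$ with $S_k \subseteq X$. Then $X \setminus S_k$ is a $\C$-modulator of $G - S_k$ of size $|X| - |S_k| = \opt_\C(G) - |S_k| = k - |S_k|$, and it is minimum for $G - S_k$ (a smaller one would contradict optimality of $X$ for $G$). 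Hence $\mathcal{A}(G - S_k, k - |S_k|)$ returns some minimum-size modulator $S'$ of $G - S_k$ with $|S'| = k - |S_k|$, and $S_k \cup S'$ is a $\C$-modulator of $G$ of size $k = \opt_\C(G)$, i.e.\ an optimal solution. So the algorithm is correct.

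The third step is the running-time analysis, which is where I expect the one genuinely delicate point to lie: I must show the exponential factor is $f(\ell)$ and not $f(\opt_\C(G))$. The dominant cost is the single successful call $\mathcal{A}(G - S_k, k - |S_k|)$ at $k = \opt_\C(G)$ (all earlier calls have strictly smaller second argument, hence cost at most $f(\opt_\C(G) - 1) \cdot |V(G)|^{\Oh(1)}$ each, and there are only $\opt_\C(G) = O(|V(G)|)$ of them, so they are absorbed into the polynomial factor times $f$ evaluated at the successful value). So it suffices to bound $k - |S_k| = \opt_\C(G) - |S_k|$ by $\ell = \opt_\C(G) - |\mathcal{E}_c(G)|$, i.e.\ to show $|S_k| \ge |\mathcal{E}_c(G)|$ when $k = \opt_\C(G)$. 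This is exactly Property~\ref{prop:c-essential-sup-det-double-guarantee-2}: since $\opt_\C(G) = k$, the set $S_k$ returned by $\mathcal{B}$ contains all $c$-essential vertices, so $\mathcal{E}_c(G) \subseteq S_k$ and $|S_k| \ge |\mathcal{E}_c(G)|$. Since $f$ is non-decreasing, the successful call runs in time $f(\opt_\C(G) - |S_k|) \cdot |V(G)|^{\Oh(1)} \le f(\ell) \cdot |V(G)|^{\Oh(1)}$, which dominates the total and gives the claimed bound. The main obstacle is thus purely bookkeeping: making sure the earlier, ``wasted'' iterations truly fold into the polynomial overhead and that the interplay between the two guarantees of \psdetectshort{c} — one used for correctness, the other for the running-time bound — is invoked at precisely the right moment, namely the iteration $k = \opt_\C(G)$.
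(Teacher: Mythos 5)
Your algorithm and correctness argument are sound, but the running-time analysis has a genuine gap, and the gap is precisely where the paper's extra twist (iterating over the triples in increasing order of the \emph{budget} $b_k = k - |S_k|$, not in increasing order of $k$) is needed.

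The problem is in the claim that ``all earlier calls have strictly smaller second argument, hence cost at most $f(\opt_\C(G) - 1) \cdot |V(G)|^{\Oh(1)}$ each, \ldots so they are absorbed into the polynomial factor times $f$ evaluated at the successful value.'' That last inference requires $f(\opt_\C(G) - 1) \le f(\ell)$, which is false in general: $\ell$ may be much smaller than $\opt_\C(G) - 1$ if many vertices are essential. The underlying issue is that for $k < \opt_\C(G)$, neither Property~\ref{prop:c-essential-sup-det-double-guarantee-1} nor Property~\ref{prop:c-essential-sup-det-double-guarantee-2} of \psdetectshort{c} constrains the output $S_k$ of $\mathcal{B}(G,k)$ at all; in particular $\mathcal{B}(G, \opt_\C(G) - 1)$ could legally return $S_{\opt_\C(G)-1} = \emptyset$, making the call $\mathcal{A}(G, \opt_\C(G) - 1)$ cost $f(\opt_\C(G) - 1)\cdot|V(G)|^{\Oh(1)}$, which is not bounded by $f(\ell)\cdot|V(G)|^{\Oh(1)}$. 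So iterating $k$ in increasing order does not control the budget fed to $\mathcal{A}$.

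The paper fixes exactly this by first computing $S_k$ for \emph{all} $k$ in polynomial time, then processing the triples $(G_k, S_k, b_k)$ in increasing order of $b_k$. Since at $k^* = \opt_\C(G)$ Property~\ref{prop:c-essential-sup-det-double-guarantee-2} guarantees $b_{k^*} \le \ell$ and the algorithm is guaranteed to terminate by the time it reaches the $k^*$ triple, every call to $\mathcal{A}$ made along the way has budget at most $b_{k^*} \le \ell$. The correctness proof then has to be slightly more careful than yours, since the algorithm may terminate at a triple with $k' \ne \opt_\C(G)$: one argues that any successful call already witnesses $\opt_\C(G) \le k'$, which activates Property~\ref{prop:c-essential-sup-det-double-guarantee-1} at $k'$ and shows the combined output has size $\opt_\C(G)$. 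Your correctness argument for the increasing-$k$ version is simpler and correct; the fix you need is only in the order of iteration and the accompanying budget bound.
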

\begin{proof}
First we describe the algorithm as follows. For each $0 \leq k \leq |V(G)|$, let~$S_k$ be the result of running~$\mathcal{B}$ on~$(G,k)$, let~$G_k := G - S_k$, and let~$b_k := k - |S_k|$.

Letting $L$ be the list of all triples $(G_k, S_k, b_k)$ sorted in \bmp{increasing} order by their third component~$b_k$, proceed as follows. For each $(G_k, S_k, b_k) \in L$, run~$\mathcal{A}$ on~$(G_k,b_k)$ to find a minimum $\C$-modulator~$S_{\mathcal{A}}$ of size at most~$b_k$, if one exists. If~$|S_{\mathcal{A}}| = b_k$, then output $S_{\mathcal{A}} \cup S_k$ as a minimum $\C$-modulator in~$G$.

To analyze the algorithm, we first argue it always outputs a solution. For the call with~$k^* = \opt_\C(G)$, both conditions of the detection problem are met. Hence by Property~\ref{prop:c-essential-sup-det-double-guarantee-1} the set~$S_{k^*}$ is contained in a minimum modulator in~$G$, so that ~$\opt_\C(G - S_{k^*}) = \opt_\C(G) - |S_{k^*}| = k^* - |S_{k^*}|$. Therefore graph~$G_{k^*} = G - S_{k^*}$ has a modulator of size at most~$b_{k^*} = \opt_\C(G - S_{k^*})$ and none which are smaller, so that~$\mathcal{A}$ correctly outputs a modulator of size~$b_{k^*}$. In turn, this causes the overall algorithm to terminate with a solution.

Having established that the algorithm outputs a solution, we proceed to show that it outputs a minimum-size modulator whenever it outputs a solution (which may be in an earlier iteration than for~$k^* = \opt_\C(G)$). Let~$k'$ be the value of~$k$ that is reached when the algorithm outputs a solution~$S_\mathcal{A} \cup S_{k'}$. Then we know:
\begin{enumerate}
    \item algorithm~$\mathcal{A}$ found a minimum-size modulator~$S_\mathcal{A}$ in~$G_{k'}$ of size at most~$b_{k'}$, and
    \item the set~$S_\mathcal{A} \cup S_{k'}$ is a modulator in~$G$, since~$S_\mathcal{A}$ is a modulator in~$G_{k'} = G - S_{k'}$, and therefore~$\opt_\C(G) \leq b_{k'} + |S_{k'}| = k'$.
\end{enumerate}

To see that the algorithm is correct, notice that, since~$\opt_\C(G) \leq k'$, the set $S_{k'}$ is contained in some minimum-size modulator for $G$ (Property~\ref{prop:c-essential-sup-det-double-guarantee-1} of $\mathcal{B}$). Hence~$\opt_\C(G_{k'}) = \opt_\C(G) - |S_{k'}|$. Since $\mathcal{A}$ outputs a minimum-size modulator if there is one of size at most~$b_k$, we have~$|S_\mathcal{A}| = \opt_\C(G) - |S_{k'}|$, so that $\mathcal{A}(G_{k'}) \cup S_{k'}$ is a feasible modulator of size~$\opt_\C(G)$ and therefore optimal.

Now we prove the desired running-time bound. First of all, notice that we can determine the list $L$ in polynomial time by running~$\mathcal{B}$ once for each value of~$k$ (which is at most~$|V(G)|$). By how we sorted $L$, we compute $\mathcal{A}(G_k, b_k)$ only when~$b_k \leq b_{k^*}$, as we argued above that if the algorithm has not already terminated, it does so after reaching~$k^* = \opt_\C(G)$. Hence the calls to algorithm~$\mathcal{A}$ are for values of the budget~$b_k$ which satisfy~$b_k \leq b_{k^*}$. We bound the latter, as follows.

Since~$k^* = \opt_\C(G)$, the set $S_{k^*}$ found by $\mathcal{B}$ is a superset of the set $\mathcal{E}_c(G)$ of all of the $c$-essential vertices in $G$ (Property~\ref{prop:c-essential-sup-det-double-guarantee-2}). This means that we have 
\[
    b_{k^*} = \opt_\C(G) - |S_{k^*}| \leq \opt_\C(G) - |\mathcal{E}_c(G)| = \ell,
\]
so the parameter of each call to~$\mathcal{A}$ is at most~$\ell$, giving the total time bound $f(\ell) \cdot |V(G)|^{\Oh(1)}$.
\end{proof}

Theorem~\ref{thm:fpt:main} now follows from Theorem~\ref{thm:non-essentiality} via the algorithms for \psdetectshort{$c$} given in the previous sections and the state-of-the-art algorithms for the natural parameterizations listed in Table~\ref{table:summary}. \bmp{Although the latter may be originally stated for the decision version, using self-reduction they can easily be adapted to output a minimum solution if there is one of size at most~$k$.}

\section{Hardness results}\label{sec:hardness}
Given the positive results we saw in Sections~\ref{sec:positive_packcover} and~\ref{sec:positive_lp}, it is natural to seek problems $\Pi$ for which \psdetect{c} is intractable. Here we show that \psdetectFor{c}{Dominating Set} is intractable for any $c \in \Oh(1)$ and then use  this as a starting point to prove similar results for \textsc{Hitting Set}, \textsc{Perfect Deletion}, \ben{and \textsc{Wheel-free Deletion}}.

\bmp{A dominating set is a vertex set whose closed neighborhood is the entire graph. The domination number of a graph is the size of a minimum dominating set.} The starting point for our reductions is the following result which states that it is $W[1]$-hard to solve \textsc{Dominating set} parameterized by solution size even on instances which have `solution-size gaps'.

\begin{lemma}[\cite{karthik2017parameterized}, cf.~{\cite[Thm.~4]{approxHardnessSurvey}}]\label{lemma:dom-set-gap-hard}
Let $F, f \colon \mathbb{N} \to \mathbb{N}$ be any computable functions. \bmp{Assuming $\mathsf{FPT} \neq \mathsf{W[1]}$, there does not exist an algorithm that, given a graph~$G$ and integer~$k$, runs in time~$f(k) \cdot |V(G)|^{\Oh(1)}$ and distinguishes between the following two cases:}
\begin{itemize}
    \item Completeness: $G$ has a dominating set of size $k$.
    \item Soundness: Every dominating set of $G$ is of size at least $k \cdot F(k)$.
\end{itemize}
\end{lemma}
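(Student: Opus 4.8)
The plan is to recognize Lemma~\ref{lemma:dom-set-gap-hard} as a restatement of the \emph{total} fixed-parameter inapproximability of \textsc{Dominating Set}, rather than to prove it from scratch. The classical $W[2]$-hardness of \textsc{Dominating Set} only rules out \emph{exact} FPT algorithms; what is needed here is the much stronger statement that, assuming $\mathsf{FPT} \neq \mathsf{W[1]}$, there is no $F(k)$-FPT-approximation for any computable $F$. This was established by Karthik C.~S., Laekhanukit, and Manurangsi~\cite{karthik2017parameterized}, and is recorded in gap form in~\cite[Thm.~4]{approxHardnessSurvey}; the lemma as stated is precisely that gap form.

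Concretely, I would argue by contraposition. Suppose an algorithm $\mathcal{A}$ as in the statement exists, running in time $f(k)\cdot|V(G)|^{\Oh(1)}$ and separating the completeness case (domination number at most $k$) from the soundness case (domination number at least $k \cdot F(k)$). The hardness reduction underlying~\cite{karthik2017parameterized} produces, from any instance $x$ of a $W[1]$-hard problem with parameter $\kappa$, a graph $G$ of size $|x|^{\Oh(1)}$ together with an integer $k = g(\kappa)$ for a computable function $g$, such that ``yes'' instances yield domination number at most $k$ and ``no'' instances yield domination number at least $k \cdot F(k)$. Feeding $(G,k)$ to $\mathcal{A}$ then decides the original $W[1]$-hard problem in time $f(g(\kappa)) \cdot |x|^{\Oh(1)}$, i.e.\ in FPT time, contradicting $\mathsf{FPT} \neq \mathsf{W[1]}$. (If one instead starts from a search-style formulation of~\cite{karthik2017parameterized}, in which the algorithm is required to output a dominating set of size at most $k\cdot F(k)$ whenever $G$ has one of size $k$, then a standard self-reduction over the at most $|V(G)|$ candidate values of $k$, combined with the fact that a gap-separating oracle already reveals which side of the gap we are on, recovers the formulation used in the lemma.)

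The real content — and the step I would simply import rather than reprove — is the gap amplification: boosting the modest constant-factor hardness obtainable from naive reductions into hardness for an \emph{arbitrary} computable function $F$, while keeping the target parameter $k$ a computable function of the source parameter $\kappa$. In~\cite{karthik2017parameterized} this is achieved via a distributed-PCP / product (``threshold-graph'') construction that amplifies the soundness gap at the cost of only a controlled, FPT-bounded blow-up of the parameter. That machinery is outside the scope of the present paper, so here the proof reduces to invoking~\cite{karthik2017parameterized} (equivalently~\cite[Thm.~4]{approxHardnessSurvey}) and observing that its gap guarantees are exactly the completeness/soundness dichotomy asserted in Lemma~\ref{lemma:dom-set-gap-hard}.
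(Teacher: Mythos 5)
Your proposal is correct and matches the paper's approach: Lemma~\ref{lemma:dom-set-gap-hard} is stated in the paper purely as a cited result from~\cite{karthik2017parameterized} (in the gap form of~\cite[Thm.~4]{approxHardnessSurvey}), with no proof given, and your write-up likewise imports the result by citation. The extra exposition you add — the contrapositive argument showing how a gap-separating algorithm would collapse $\mathsf{W[1]}$ to $\mathsf{FPT}$, and the remark that the real content is the distributed-PCP/threshold-graph gap amplification in~\cite{karthik2017parameterized} — is accurate and standard, but it is supplementary context rather than a different proof route.
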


All of our reductions in this section share a leitmotif. We start with a \emph{gap} instance $(G,k)$ of \textsc{Dominating Set} and map it to an instance $G'$ of \psdetect{c} (for appropriate $\Pi$) equipped with a distinguished vertex $\ben{v^*}$ with the following property: (1) if $G$ has domination number at most $k$, then no optimal solution in $G'$ contains $\ben{v^*}$; (2) if $G$ has domination number strictly greater than $c \cdot F(k)$ (for some appropriate $F$), then $\ben{v^*}$ is contained in every solution of size at most $c \cdot F(k)$ in $G'$. 
Thus our hardness results will follow by combining reductions of this kind with Lemma~\ref{lemma:dom-set-gap-hard}.

\bmp{\begin{lemma}\label{lemma:dom-set-reduction}
There is a polynomial-time algorithm~$R$ that, given a graph~$G$ and integer~$k$, outputs a graph~$R(G,k)$ containing a distinguished vertex~$\ben{v^*}$ such that:
\begin{itemize}
    \item if $G$ has dominating number at most~$k$, then the domination number of $R(G,k)$ is exactly~$k$ and every optimal dominating set avoids $\ben{v^*}$; 
    \item if $G$ has domination number strictly greater than $c\cdot (k+1)$ for some~$c \in \mathbb{R}_{\geq 1}$, then $R(G,k)$ has domination number $k + 1$ and the distinguished vertex $\ben{v^*}$ is contained in all $R(G,k)$-dominating sets of size at most $c \cdot (k+1)$.
\end{itemize}
\end{lemma}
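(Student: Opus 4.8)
The plan is to present $R$ as a gadget reduction from the gap variant of \textsc{Dominating Set} supplied by Lemma~\ref{lemma:dom-set-gap-hard}. The guiding idea is that $v^*$ should act as a ``bypass'' for dominating a copy of $V(G)$: placing $v^*$ into a solution costs one vertex and in exchange covers all of $V(G)$ at once, which is a bargain precisely when $G$ is hard to dominate internally and a waste precisely when it is easy. To turn this into the two promised bullets I would first construct $R(G,k)$, then analyse its dominating sets in the two regimes $\gamma(G)\le k$ and $\gamma(G)>c(k+1)$, and finally check that the gap between these regimes is implied by the gap in the input instance for a suitable choice of $F$ in Lemma~\ref{lemma:dom-set-gap-hard}.

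For the construction I would take $G$, add $v^*$ joined to every vertex of $V(G)$, and attach a small calibration gadget of pendant-type vertices -- partly hung off $v^*$, partly hung off vertices of $V(G)$ -- whose only purpose is to pin the domination number onto the targets $k$ and $k+1$ and to make $v^*$ strictly suboptimal whenever $G$ is internally dominatable. I would normalise $G$ first (assume it has no isolated vertices and at least $k+1$ vertices), since otherwise $\gamma(G)<k$ and the reduction can output a trivial instance. The construction must keep $v^*$ the only vertex outside $V(G)$ that sees $V(G)$, apart from ``weak'' pendant neighbours, so that avoiding $v^*$ forces one to dominate $G$ from within; and it must arrange that $v^*$ is strictly dominated by a gadget vertex carrying its own pendant, so that any solution using $v^*$ can be shrunk.

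For the completeness case ($\gamma(G)\le k$) I would take a minimum dominating set $D$ of $G$, note that $D$ already dominates $v^*$, extend it through the calibration gadget and then pad it up to size exactly $k$ by arbitrary non-$v^*$ vertices, obtaining a dominating set of $R(G,k)$ of size $k$ avoiding $v^*$; then argue with the gadget that no dominating set has fewer than $k$ vertices and that any dominating set using $v^*$ is strictly larger than $k$ (using the strict-superdominator-with-pendant trick), so every optimal dominating set avoids $v^*$. For the soundness case ($\gamma(G)>c(k+1)$) I would first exhibit $\{v^*\}$ together with the natural size-$k$ dominating set of the calibration gadget, proving $\gamma(R(G,k))\le k+1$, then show $\gamma(R(G,k))\ge k+1$, and finally show that every dominating set $S$ with $v^*\notin S$ satisfies $|S|>c(k+1)$: projecting $S$ onto $V(G)$ by replacing each used pendant with its support yields a dominating set of $G$ of size at most $|S|$, hence $|S|\ge\gamma(G)>c(k+1)$. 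Thus $v^*$ lies in every dominating set of size at most $c(k+1)$.

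The step I expect to be the main obstacle is engineering the calibration gadget so that all of these requirements hold simultaneously for a single, $\gamma(G)$-oblivious gadget: the gadget must force the domination number up to $\ge k$, yet its forced cost must overlap with dominating $G$ so that a size-$k$ solution still exists whenever $\gamma(G)\le k$ (rather than the count blowing up to $\gamma(G)+k$); at the same time this overlap must disappear once $v^*$ takes over the domination of $V(G)$, so that the soundness count is \emph{exactly} $k+1$; the gadget must render $v^*$ strictly suboptimal in the completeness case; and it must not create any cheap alternative route substituting for $v^*$'s bypass role in the soundness case. Balancing these tensions is the delicate heart of the proof, whereas the polynomial running time and the arithmetic relating the input gap to the gap $k$ versus $c(k+1)$ are routine.
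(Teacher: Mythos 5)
There is a genuine gap: you never actually construct the ``calibration gadget,'' and you explicitly defer exactly the part of the argument that constitutes the lemma. You correctly identify the high-level role of $v^*$ as a one-vertex bypass that is a bargain when $G$ is hard to dominate and a waste otherwise, and you correctly list the tensions the gadget must resolve (pin the domination number to $k$ resp.\ $k+1$, force the gadget cost to overlap with dominating $G$ in the completeness case, make the overlap disappear once $v^*$ takes over, make $v^*$ strictly suboptimal in the completeness case). But listing the constraints is not the same as satisfying them, and the single-copy-of-$G$-plus-pendants sketch you describe does not obviously do so: if you hang $k$ pendants off vertices of $V(G)$ you must choose those attachment points without knowing which vertices form a small dominating set, so in general the pendant-coverers do not double as $G$-dominators and the optimum drifts toward $\gamma(G)+k$ instead of $k$; if you hang them off $v^*$ instead, then avoiding $v^*$ forces you to buy all $k$ pendants \emph{in addition to} a dominating set of $G$, and the completeness optimum is again $\gamma(G)+k$, not $k$.

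The paper resolves this with a construction you should compare against your sketch: it makes $k+1$ mutually twinned copies $G_0,\dots,G_k$ of $G$ (so a vertex $v_i$ in copy $G_i$ is adjacent to \emph{all} copies $u_j$ of every neighbour $u$ of $v$), attaches an apex $a_i$ to $V(G_i)$ for $i=1,\dots,k$, and makes $v^*$ adjacent to all of $\bigcup_i V(G_i)$ but not to any $a_i$. The $k$ apexes have pairwise disjoint closed neighbourhoods, forcing $\gamma(R(G,k))\ge k$; any set containing $v^*$ still needs one vertex per $a_i$, so it has size $\ge k+1$. In the completeness case a dominating set $\{s_1,\dots,s_\ell\}$ of $G$ with $\ell\le k$ is ``spread out'' by picking the copy of $s_i$ inside $G_i$, which (by the twin structure) dominates every copy of every vertex $s_i$ dominates and also dominates $a_i$; padding with $a_{\ell+1},\dots,a_k$ gives a size-$k$ dominating set avoiding $v^*$, and the lower bounds above make every optimum avoid $v^*$. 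In the soundness case $\{v^*,a_1,\dots,a_k\}$ is a dominating set of size $k+1$, while any set avoiding $v^*$ must dominate $G_0$ (which has no apex) from inside the $G_i$'s and hence projects to a dominating set of $G$, giving size $>c(k+1)$. This twin-copy-plus-apex device is precisely the mechanism that makes the gadget cost and the $G$-dominating cost share the same $k$ budget units in the completeness case and decouple in the soundness case; it is the missing piece in your write-up, and I do not see how to supply an equivalent piece within the single-copy-plus-pendants framework you propose.
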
}
\begin{proof}
The graph $R(G,k)$ (see Figure~\ref{fig:reduction-dom-set}) is defined formally as follows:
\begin{itemize}
    \item \bmp{Initialize~$R(G,k)$ as the graph on vertex set~$\{v_i \mid v \in V(G), 0 \leq i \leq k\}$ with edges~$\{ v_i u_j \mid uv \in E(G), 0 \leq i, j \leq k\}$.
    \item For each~$i \in \{1, \dots, k\}$ insert an apex~$a_i$ which is adjacent to~$\{v_i \mid v \in V(G)\}$.
    \item Insert a vertex~$v^*$ which is adjacent to~$\{v_i \mid v \in V(G), 0 \leq i \leq k\}$.}
\end{itemize}
For our notational convenience, notice that the graph $R(G,k)$ contains $k+1$ isomorphic copies of $G$ denoted as $G_0, \dots, G_k$ where each $G_i$ is defined as the induced subgraph $R(G,k)[\{v_i \mid v \in V(G)\}]$ of $R(G,k)$.

\begin{figure}[t]
    \centering
    % https://q.uiver.app/?q=WzAsMTMsWzAsMSwiRyJdLFs0LDEsIkdfMSJdLFs1LDEsIkdfMiJdLFs3LDEsIkdfayJdLFs2LDEsIlxcZG90cyJdLFszLDEsIkdfMCJdLFs0LDAsImFfMSJdLFs1LDAsImFfMiJdLFs3LDAsImFfayJdLFs2LDAsIlxcZG90cyJdLFs1LDIsIlxcYWxwaGEiXSxbMiwxXSxbMSwxXSxbMSwyLCIiLDAseyJzdHlsZSI6eyJib2R5Ijp7Im5hbWUiOiJzcXVpZ2dseSJ9LCJoZWFkIjp7Im5hbWUiOiJub25lIn19fV0sWzIsNCwiIiwwLHsic3R5bGUiOnsiYm9keSI6eyJuYW1lIjoic3F1aWdnbHkifSwiaGVhZCI6eyJuYW1lIjoibm9uZSJ9fX1dLFs0LDMsIiIsMCx7InN0eWxlIjp7ImJvZHkiOnsibmFtZSI6InNxdWlnZ2x5In0sImhlYWQiOnsibmFtZSI6Im5vbmUifX19XSxbNSwxLCIiLDAseyJzdHlsZSI6eyJib2R5Ijp7Im5hbWUiOiJzcXVpZ2dseSJ9LCJoZWFkIjp7Im5hbWUiOiJub25lIn19fV0sWzYsMSwiIiwwLHsibGV2ZWwiOjIsInN0eWxlIjp7ImhlYWQiOnsibmFtZSI6Im5vbmUifX19XSxbNywyLCIiLDAseyJsZXZlbCI6Miwic3R5bGUiOnsiaGVhZCI6eyJuYW1lIjoibm9uZSJ9fX1dLFs4LDMsIiIsMix7ImxldmVsIjoyLCJzdHlsZSI6eyJoZWFkIjp7Im5hbWUiOiJub25lIn19fV0sWzEwLDUsIiIsMCx7ImxldmVsIjoyLCJzdHlsZSI6eyJoZWFkIjp7Im5hbWUiOiJub25lIn19fV0sWzEwLDEsIiIsMCx7ImxldmVsIjoyLCJzdHlsZSI6eyJoZWFkIjp7Im5hbWUiOiJub25lIn19fV0sWzEwLDIsIiIsMCx7ImxldmVsIjoyLCJzdHlsZSI6eyJoZWFkIjp7Im5hbWUiOiJub25lIn19fV0sWzEwLDMsIiIsMCx7ImxldmVsIjoyLCJzdHlsZSI6eyJoZWFkIjp7Im5hbWUiOiJub25lIn19fV0sWzAsMTEsIlIiLDAseyJzdHlsZSI6eyJ0YWlsIjp7Im5hbWUiOiJtYXBzIHRvIn19fV1d
\begin{tikzcd}
	&&&& {a_1} & {a_2} & \dots & {a_k} \\
	G & {} & {} & {G_0} & {G_1} & {G_2} & \dots & {G_k} \\
	&&&&& \ben{v^*}
	\arrow[squiggly, no head, from=2-5, to=2-6]
	\arrow[squiggly, no head, from=2-6, to=2-7]
	\arrow[squiggly, no head, from=2-7, to=2-8]
	\arrow[squiggly, no head, from=2-4, to=2-5]
	\arrow[Rightarrow, no head, from=1-5, to=2-5]
	\arrow[Rightarrow, no head, from=1-6, to=2-6]
	\arrow[Rightarrow, no head, from=1-8, to=2-8]
	\arrow[Rightarrow, no head, from=3-6, to=2-4]
	\arrow[Rightarrow, no head, from=3-6, to=2-5]
	\arrow[Rightarrow, no head, from=3-6, to=2-6]
	\arrow[Rightarrow, no head, from=3-6, to=2-8]
	\arrow["R", maps to, from=2-1, to=2-3]
\end{tikzcd}
    \caption{Reducing $R_k$ of $k-\ds$ to \psdetectFor{c}{$\ds$}. \bmp{Each} $G_i$ is a copy of $G$ and for $j \neq i$, $G_i$ is a twin to $G_j$ (denoted in the figure by squiggly lines: $\sim$) \ben{i.e. for each $x \in V(G)$ and any two distinct copies $x_i$ and $x_j$ of $x$ in $G_i$ and $G_j$, $x_i$ is a true twin to $x_j$. E}ach $a_i$ is an apex vertex to $G_i$ (i.e. $N(a_i) = V(G_i)$) and $\ben{v^*}$ is an apex to the whole graph except $a_1, \dots, a_k$ (i.e. $N(\ben{v^*}) = V(G_0 \uplus G_1 \dots \uplus G_k)$).}
    \label{fig:reduction-dom-set}
\end{figure}
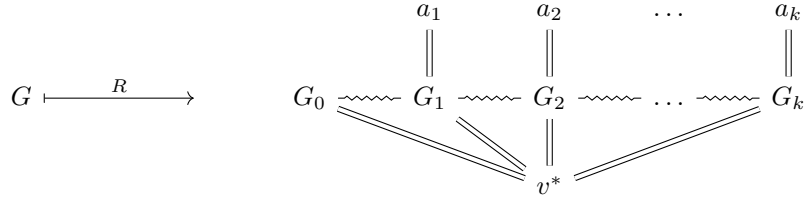

Notice that $R(G,k)$ has domination number at least $k$: since $\{a_1, \dots, a_k\}$ is an independent set and the neighborhoods of each $a_i$ are disjoint in $R(G,k)$, we have that any dominating set of $R(G,k)$ is forced to pick at least one vertex from $V(G_i) \cup \{a_i\}$ for each $1 \leq i \leq k$. In particular, this means that every dominating set that contains $\ben{v^*}$ must have cardinality at least $k+1$ (since $\ben{v^*}$ does not dominate $a_1, \dots, a_k$).

Now, to prove that $R(G,k)$ has the desired properties, start by supposing that $G$ has a dominating set $\{\ben{s_1}, \dots s_\ell\}$ with $\ell$ at most $k$. 
We claim that, by `spreading-out' this dominating set over $\jjh{G_1} , \dots, G_\ell$ (i.e. pick the copy of $s_i$ in $G_i$ for each $i$) we obtain a vertex-subset $S'$ of $R(G,k)$ which dominates $\ben{v^*}$, $\{a_1, \dots, a_\ell\}$ and every vertex of $G_0, \dots, G_\ell$. The first two observations are immediate and the last follows since each $G_i$ is a `twin' of $G_j$ in $R(G,k)$: any vertex $y$ in $G$ is dominated by at least one vertex, $s_i$ of $S$ and, by the construction of $R(G,k)$, the copy of $s_i$ in $G_i$ dominates all of the copies $y_0, \dots, y_\ell$ of $y$ in $G_0, \dots, G_\ell$. It follows immediately from the definition of $R(G,k)$ that $S' \cup \{a_i: \ell \jjh{ < } i \leq k\}$ is a dominating set of $R(G,k)$ of size $k$ (which is smallest-possible by our previous arguments). Thus, by our previous discussion, all minimum-size dominating sets of $R(G,k)$ avoid $\ben{v^*}$.

Now suppose there exists a constant $c \geq 1$ \bmp{such that~$G$ has domination number greater than~$c \cdot (k+1)$.}
Take any dominating set $S$ of $R(G,k)$ not containing $\ben{v^*}$. Since $G_0$ is not adjacent to any one of the vertices $a_1, \dots, a_k$ and $\ben{v^*} \not \in S$, every vertex of $G_0$ must be dominated by some vertex of $G_0 \uplus \dots \uplus G_k$ in $R(G,k)$. Thus the obvious projection of $S \setminus \{a_1, \dots, a_k, \ben{v^*}\}$ onto $G$ is a dominating set for $G$. Hence, if $S$ avoids $\ben{v^*}$, then $|S| > c \cdot (k+1)$ by our assumption on the domination number of $G$. In other words, every dominating set of size at most $c \cdot (k+1)$ must contain $\ben{v^*}$. This, combined with the fact that any dominating set of $R(G,k)$ is forced to pick at least one vertex from $V(G_i) \cup \{a_i\}$ for each $1 \leq i \leq k$ (as we observe earlier), implies that $\{\ben{v^*}, a_1, \dots, a_k\}$ is a minimum-size dominating set of $R(G,k)$ (where each $a_i$ dominates itself and $\ben{v^*}$ dominates everything else) \bmp{which has size~$k+1$}.
\end{proof}

Lemma~\ref{lemma:dom-set-gap-hard} combined with the reduction provided by Lemma~\ref{lemma:dom-set-reduction} yields the following.

\begin{theorem}\label{thm:dom-set-c-essential-detection-hardness}
Unless $\fpt = W[1]$, there is no $\fpt$-time algorithm for \psdetectFor{c}{$\ds$} parameterized by $k$ for \ben{any $c \in \mathbb{R}_{\geq 1}$.}
\end{theorem}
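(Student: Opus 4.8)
The plan is to combine the gap-hardness of \textsc{Dominating Set} from Lemma~\ref{lemma:dom-set-gap-hard} with the reduction $R$ from Lemma~\ref{lemma:dom-set-reduction} in a standard contradiction argument. First I would fix an arbitrary $c \in \mathbb{R}_{\geq 1}$ and suppose, for contradiction, that there is an algorithm $\mathcal{B}$ solving \psdetectFor{c}{$\ds$} in time $f(k) \cdot |V(G)|^{\Oh(1)}$ for some computable $f$. The goal is to show that $\mathcal{B}$ can be used to distinguish the Completeness and Soundness cases of Lemma~\ref{lemma:dom-set-gap-hard} in FPT time, contradicting $\fpt \neq W[1]$.

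The key step is choosing the right gap function $F$ when invoking Lemma~\ref{lemma:dom-set-gap-hard}. The reduction $R(G,k)$ distinguishes ``$G$ has domination number $\leq k$'' from ``$G$ has domination number $> c\cdot(k+1)$'', so I would apply Lemma~\ref{lemma:dom-set-gap-hard} with $F$ chosen so that $k \cdot F(k) > c \cdot (k+1)$ for all $k$; e.g.\ $F(k) := 2c$ (or any constant larger than $2c$) works since $2ck \geq c(k+1)$ for $k \geq 1$, and the $k=0$ case is trivial. Then in the Completeness case, the domination number of $G$ is exactly $k$, so by Lemma~\ref{lemma:dom-set-reduction} the graph $R(G,k)$ has domination number exactly $k$ and every optimal dominating set avoids $v^*$; in the Soundness case, the domination number of $G$ exceeds $k\cdot F(k) > c\cdot(k+1)$, so $R(G,k)$ has domination number $k+1$ and every dominating set of size at most $c\cdot(k+1)$ contains $v^*$.

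Next I would describe how to read off the answer. Run $\mathcal{B}$ on the instance $(R(G,k), k)$. In the Completeness case, $\opt_{\ds}(R(G,k)) = k$, so Property~\ref{prop:c-essential-sup-det-double-guarantee-1} guarantees the output set $S$ is contained in an optimal dominating set, hence $v^* \notin S$. In the Soundness case, $\opt_{\ds}(R(G,k)) = k+1 \neq k$, so Property~\ref{prop:c-essential-sup-det-double-guarantee-1} is vacuous; instead I run $\mathcal{B}$ on $(R(G,k), k+1)$, where now $\opt_{\ds}(R(G,k)) = k+1$, so Property~\ref{prop:c-essential-sup-det-double-guarantee-2} applies and forces $S$ to contain every $c$-essential vertex — and $v^*$ is $c$-essential because every dominating set of size at most $c\cdot(k+1) = c \cdot \opt_{\ds}(R(G,k))$ contains it. So by running $\mathcal{B}$ once with parameter $k$ and once with parameter $k+1$ and checking whether $v^*$ appears in either output, we distinguish the two cases. (A small subtlety: in the Completeness case run with parameter $k+1$, we must also check $v^*$ does not spuriously appear; but $v^*$ is not $c$-essential there, as $\{a_1,\dots,a_k\}\cup\{s\}$ for a dominator $s$ of $G_0$ gives an optimal, hence $c$-approximate, dominating set avoiding $v^*$ — alternatively, it suffices to use only the parameter-$k$ run for Completeness and the parameter-$(k+1)$ run for Soundness, since we only need to separate the two promise cases.)

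Finally I would verify the running time: $R$ runs in polynomial time and $|V(R(G,k))| = |V(G)|\cdot(k+1) + k + 1$ is polynomially bounded, so the two calls to $\mathcal{B}$ take time $f(k+1)\cdot (|V(G)|\cdot(k+1)+k+1)^{\Oh(1)} = g(k) \cdot |V(G)|^{\Oh(1)}$ for the computable function $g(k) := f(k+1)$, which is FPT. This contradicts Lemma~\ref{lemma:dom-set-gap-hard}, completing the proof. I do not expect a serious obstacle here; the only point requiring care is the bookkeeping around which of Properties~\ref{prop:c-essential-sup-det-double-guarantee-1} and~\ref{prop:c-essential-sup-det-double-guarantee-2} fires in which case, which is why the parameter must be set to $k$ in the Completeness analysis and to $k+1$ in the Soundness analysis.
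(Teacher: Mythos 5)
Your proposal is correct and follows essentially the same route as the paper: invoke Lemma~\ref{lemma:dom-set-gap-hard} with a gap function large enough that the soundness side forces domination number beyond $c(k+1)$, feed $R(G,k)$ to the putative detection algorithm, and read off $v^*$. The paper runs the detection algorithm \emph{once} with parameter $k+1$, using $F(k)=c(k+1)$; you run it twice with parameters $k$ and $k+1$, using $F(k)=2c$. Both work, but a few details in your write-up deserve a second look. First, $2ck \geq c(k+1)$ holds for $k\geq 1$ but you need \emph{strict} inequality to land in the ``strictly greater than $c(k+1)$'' hypothesis of Lemma~\ref{lemma:dom-set-reduction}, which requires $k\geq 2$; so you should dispose of $k\in\{0,1\}$ trivially, not just $k=0$ (the paper simply assumes $k\geq 2$ WLOG). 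Second, the parameter-$k$ run is redundant: in the soundness case $\opt=k+1>k$, so neither guarantee \ref{prop:c-essential-sup-det-double-guarantee-1} nor \ref{prop:c-essential-sup-det-double-guarantee-2} constrains its output at all, and the single run with parameter $k+1$ already separates the two cases (\ref{prop:c-essential-sup-det-double-guarantee-1} gives $v^*\notin S$ in completeness since $\opt=k\leq k+1$; \ref{prop:c-essential-sup-det-double-guarantee-2} gives $v^*\in S$ in soundness since $\opt=k+1$). Third, in your parenthetical about the completeness case with parameter $k+1$ you appeal to $v^*$ not being $c$-essential, but \ref{prop:c-essential-sup-det-double-guarantee-2} does not apply there (the supplied $k+1$ does not equal $\opt=k$); the reason $v^*\notin S$ is simply \ref{prop:c-essential-sup-det-double-guarantee-1} together with the fact that \emph{every} optimal dominating set of $R(G,k)$ avoids $v^*$. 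These are bookkeeping slips rather than conceptual gaps; the core argument matches the paper's.
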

\begin{proof}
\bmp{Suppose such an algorithm~$\mathcal{A}$ exists for~$c$; we will use it with Lemma~\ref{lemma:dom-set-gap-hard} to show~$\fpt = W[1]$ for the function~$F(k) = c(k+1)$.

Given an input~$(G,k)$ in which the goal is to distinguish between the completeness and soundness cases, the algorithm proceeds as follows.} Using the reduction $R$ of Lemma~\ref{lemma:dom-set-reduction}, consider the graph $R(G,k)$ and let $S$ be the output of an algorithm for \psdetectFor{c}{$\ds$} on the pair $(R(G,k), k+1)$\bmp{; note that the solution size for which we ask is~$k+1$ rather than~$k$.} Without loss of generality we may assume~$k\geq 2$, as the distinction can trivially be made otherwise. \bmp{We will show that in the completeness case we have~$\ben{v^*} \notin S$, while in the soundness case we have~$\ben{v^*} \in S$, which allows us to distinguish between these cases by checking whether~$\ben{v^*}$ belongs to the output of~$\mathcal{A}(R(G,k),k+1)$.}

For the completeness case, suppose $G$ has domination number at most $k$. Then, by Lemma~\ref{lemma:dom-set-reduction}, so does $R(G,k)$. This means that Property~\ref{prop:c-essential-sup-det-double-guarantee-1} holds for the call to~$\mathcal{A}(R(G,k),k+1)$, so that there is some optimal solution $S'$ of size $k$ which contains $S$ and hence we have $\ben{v^*} \not \in S$ by Lemma~\ref{lemma:dom-set-reduction}. 

For the soundness case, suppose $G$ has domination number at least~$k \cdot F(k) = c(k+1)k > c(k+1)$ (we use~$k\geq 2$ here). Then by Lemma~\ref{lemma:dom-set-reduction}, graph $R(G,k)$ has domination number $k+1$ and $\ben{v^*}$ is contained in all its dominating sets of size at most~$c (k+1)$. In other words: $\ben{v^*}$ is $c$-essential in $R(G,k)$. Consequently, $\ben{v^*} \in S$ by Property~\ref{prop:c-essential-sup-det-double-guarantee-2} since the argument~$k+1$ we supplied to~$\mathcal{A}$ coincides with the optimum in~$R(G,k)$ in this case.

\bmp{If~$\mathcal{A}$ runs in FPT-time, then the overall procedure runs in FPT-time which implies~$\fpt = W[1]$ by Lemma~\ref{lemma:dom-set-gap-hard}.}
\end{proof}

Consider the \emph{closed-neighborhood mapping} $S$: i.e. the standard polynomial-time reduction from \textsc{Dominating Set} to \textsc{Hitting Set} which maps each graph $G$ to the hypergraph
\begin{equation}\label{eqn:open-set-mapping}S(G) := \Bigl ( V(G), \{N(x) \cup \{x\} : x \in \jjh{V(G)}\} \Bigr).\end{equation}
\bmp{The following observation captures the relation between dominating sets of~$G$ and hitting sets of~$S(G)$.}

\begin{observation} \label{obs:domset:hittingset}
Let~$G$ be a graph and~$X \subseteq V(G)$. Then~$X$ is a dominating set of~$G$ if and only if~$X$ is a hitting set of $S(G)$.
\end{observation}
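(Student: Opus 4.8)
The plan is to prove this by directly unfolding both definitions and observing that they impose exactly the same constraint on $X$, vertex by vertex. First I would recall that a \emph{hitting set} of a hypergraph $(V,\mathcal{E})$ is a subset of $V$ that intersects every hyperedge in $\mathcal{E}$. Applying this to $S(G) = \bigl(V(G), \{N(x) \cup \{x\} : x \in V(G)\}\bigr)$, the set $X \subseteq V(G)$ is a hitting set of $S(G)$ if and only if, for every $x \in V(G)$, we have $X \cap (N(x) \cup \{x\}) \neq \emptyset$; that is, $X \cap N_G[x] \neq \emptyset$ for all $x \in V(G)$.

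Next I would recall the definition of a \emph{dominating set}: $X$ dominates $G$ if and only if $N_G[X] = V(G)$, i.e.\ every vertex $x \in V(G)$ has a neighbor (or equals a vertex) in $X$, which means there is some $y \in X$ with $x \in N_G[y]$. Using the symmetry $x \in N_G[y] \iff y \in N_G[x]$, valid because $G$ is undirected, this is equivalent to saying $X \cap N_G[x] \neq \emptyset$ for every $x \in V(G)$. This is literally the same condition obtained in the previous step, so the equivalence follows immediately, and both directions are proven simultaneously.

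I do not expect any genuine obstacle here; the statement is a routine reformulation. The only point deserving a sentence of care is the symmetry of closed neighborhoods used to pass between "$x$ is dominated by $y$" and "$y$ hits the edge of $x$", and the degenerate case of an isolated vertex $x$ (where $N_G[x] = \{x\}$ forces $x \in X$ on both sides), which is handled uniformly by the same argument.
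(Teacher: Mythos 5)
Your proof is correct and matches the intent of the paper, which states this as an \emph{observation} and omits the proof precisely because it is the routine unfolding of definitions you give: a vertex $x$ is dominated by $X$ iff $X$ meets $N_G[x]$, which is exactly the hyperedge of $S(G)$ associated with $x$, with the symmetry $x \in N_G[y] \iff y \in N_G[x]$ bridging the two phrasings. There is nothing to compare against since the paper supplies no proof; your write-up, including the note on isolated vertices, is a fine expansion of what the authors left implicit.
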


\bmp{Note that, given the reduction $R$ of Lemma~\ref{lemma:dom-set-reduction}, the composite mapping $S \circ R$ relates $c$-essentiality to gaps in solution quality in much the same way as $R$ did.}

Now consider the parameter-preserving polynomial-time parameterized reduction $P$ (due to Heggernes et al.~\cite[Thm.~1]{HeggernesHJKV13}) taking instances $(H,k)$ of \textsc{Hitting Set} (parameterized by solution size) to \textsc{Perfect Deletion} (also parameterized by solution size) defined as follows. For each hyperedge $e = \{x_1, \dots, x_\ell\}$ in $H$, create $\ell + 1$ new vertices $c_{e,1}, \dots, c_{e,\jjh{\ell+1}}$ -- called \emph{gadget vertices} -- and an odd cycle \[C_e = \bigl (e \cup \{c_{e,1}, \dots, c_{e,\jjh{\ell+1}}\}, \; \{c_{e,1}c_{e,\jjh{\ell+1}}\} \cup \{x_ic_{e,j} : x_i \in e, j \in \{i, i+1\}\} \bigr)\] 
so that we may define $P(H)$ as the graph obtained by adding all possible edges between gadget-vertices of different cycles; formally we have
\begin{equation}\label{eqn:bart-reduction} 
    P(H) := \Bigl(\bigcup_{e \in E(H)} V(C_e), 
                  \biguplus_{e \in E(H)} E(C_e) 
                    \jjh{\cup \{c_{e,i}c_{e',j} \mid e \neq e' \in E(H), i,j \in [\ell+1]\}}
            \Bigr ).
\end{equation}

\begin{lemma}[{\cite[Thm.~1 and Claims 1--3]{HeggernesHJKV13}}]\label{lemma:bart-reduction-properties}
Given any \textsc{Hitting Set} instance $(H,k)$ with each hyperedge of size at least $2$, the graph $P(H)$ defined in Equation~\eqref{eqn:bart-reduction} satisfies the following properties: 
\begin{enumerate}[label=\textbf{P\arabic*}]
    \item given any vertex-subset $X$ of $P(H)$ such that $P(H) - X$ is perfect; if there is an $x \in X$ such that $x \not \in V(P(H)) \cap V(H)$, then $x$ lies on exactly one odd-hole $C_e$, and, for any vertex $y \in V(C_e) \cap V(H)$ we have that $P(H) - (X \cup \{y\}) \setminus \{x\}$ is perfect, \label{prop:bart-reduction-1}
    \item for any $X \subseteq V(H)$, the set $X$ is a hitting set in $H$ if and only if $P(H) - X$ is perfect. \label{prop:bart-reduction-2}
\end{enumerate}
\end{lemma}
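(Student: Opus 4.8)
The plan is to derive both claims from one structural description of $G := P(H)$: the only induced cycles of $G$ of length at least five are the gadget cycles $C_e$ (one per hyperedge), and $G$ has no odd antihole whatsoever. Granting this, Properties~\ref{prop:bart-reduction-1} and~\ref{prop:bart-reduction-2} fall out with essentially no extra work, using the characterisation recalled in the preliminaries that a graph is perfect precisely when neither it nor its complement contains an induced odd cycle of length at least five.

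First I would record the adjacency facts baked into the construction: the vertices of $H$ form an independent set in $G$; two gadget vertices lying in \emph{different} cycles $C_e,C_{e'}$ are always adjacent; inside a single $C_e$ the gadget vertices induce exactly one edge, namely $c_{e,1}c_{e,\ell+1}$; and, because every hyperedge has size $\ell \ge 2$, each $C_e$ is an induced cycle of $G$ of odd length $2\ell+1 \ge 5$. The core claim is then proved by taking an induced cycle $C$ of $G$ with $|V(C)| \ge 5$ and letting $t$ be the number of distinct hyperedges $e$ for which $C$ uses a gadget vertex of $C_e$. If $t=0$ then $V(C)$ lies in the independent set $V(H)$ --- impossible. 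If $t \ge 3$ then three gadget vertices of $C$ from three different cycles are pairwise adjacent and form a triangle inside $C$ --- impossible for an induced cycle of length at least four. If $t=1$, then since every neighbour on $C$ of a (necessarily independent) real vertex of $C$ must be one of the gadget vertices of the single relevant cycle $C_e$, one deduces $V(C) \subseteq V(C_e)$, whence $C = C_e$ because $C_e$ is itself induced. The remaining case $t=2$ is the one requiring care: a degree count (every gadget vertex of one side is adjacent to every gadget vertex of the other) limits $C$ to at most two gadget vertices per side, and then the configurations $1{+}1$, $1{+}2$, $2{+}1$, $2{+}2$ are each ruled out --- producing respectively a path of at least two consecutive real vertices on $C$, or a triangle, or (for $2{+}2$) four vertices of $C$ inducing a $C_4$. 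For antiholes, suppose $G$ induces $\overline{C_n}$ with $n \ge 7$ odd; since $\overline{C_n}$ has independence number $2$, at most two of these $n$ vertices are real, so at least $n-2 \ge 5$ of them are \emph{cyclically consecutive} gadget vertices, and ``consecutive'' means ``non-adjacent'', which (gadget vertices of distinct cycles being adjacent) forces them all into one cycle $C_e$; but $G$ induces at most one edge on a set of gadget vertices of a common $C_e$, whereas $\overline{C_n}$ restricted to $n-2$ consecutive vertices has $\binom{n-2}{2}-(n-3) \ge 6$ edges --- a contradiction.

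With the structural claim established, Property~\ref{prop:bart-reduction-2} is immediate: for $X \subseteq V(H)$, the graph $G - X$ is perfect iff it has no odd hole (it can have no odd antihole, since $G$ has none), iff $X$ meets $V(C_e)$ for every $e$; and since $V(C_e) \cap V(H) = e$, this says exactly that $X$ hits every hyperedge. For Property~\ref{prop:bart-reduction-1}, a vertex $x \in X$ with $x \notin V(H)$ is a gadget vertex $c_{e,j}$ and hence lies on the unique gadget cycle $C_e$, which is the only odd hole of $G$ through it; fixing $y \in V(C_e) \cap V(H)$ and writing $Y := (X \cup \{y\}) \setminus \{x\}$, any imperfection of $G - Y$ would be an odd hole $C_{e'}$ (again, no antiholes can appear). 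If $e' = e$ this is impossible since $y \in V(C_e) \cap Y$; if $e' \ne e$ then $x \notin V(C_{e'})$, so $V(C_{e'}) \subseteq (V(G)\setminus Y)\setminus\{x\} \subseteq V(G)\setminus X$, contradicting the perfection of $G - X$. Hence $G - Y$ is perfect.

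I expect the main obstacle to be the $t=2$ case of the structural claim: one must carefully use the degree bound to reduce to finitely many configurations and then check that each of them is excluded by the hypothesis $|V(C)| \ge 5$ --- the $2{+}2$ configuration in particular only fails because it forces an induced $C_4$ on four of $C$'s vertices, which is impossible once $|V(C)| \ge 5$. Everything else is bookkeeping once the elementary adjacency facts of $P(H)$ are spelled out.
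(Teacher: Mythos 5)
The paper does not give its own proof of this lemma; it imports it verbatim from Heggernes et al.\ (their Theorem~1 and Claims~1--3). So your blind attempt has no in-paper proof to compare against, and must be judged on its own terms.

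Your argument is correct and well structured: establish once that the only induced cycles of length at least five in $P(H)$ are the gadget odd holes $C_e$ and that $P(H)$ has no odd antihole, then read off both properties from the strong perfect graph characterization. The case analysis by the number $t$ of gadget cycles touched is the right organizing principle, the degree bound $a,b\le 2$ for $t=2$ is sound (each gadget vertex of $C_e$ on the hole is adjacent in $G$ to every gadget vertex of $C_{e'}$ on the hole, and an induced cycle gives it only two neighbours), and the antihole argument correctly exploits that an independent pair in $\overline{C_n}$ is a consecutive pair on the underlying $n$-cycle, so removing the at most two real vertices leaves $n-2\ge 5$ consecutive gadget vertices that must all belong to one $C_e$ yet induce far more than the one edge available there. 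The deductions of Properties~\ref{prop:bart-reduction-1} and~\ref{prop:bart-reduction-2} from the structural claim are also correct; in particular, for~\ref{prop:bart-reduction-1} the only subtle point is that $x=c_{e,j}\notin V(C_{e'})$ for $e'\ne e$, which you use implicitly and which holds since gadget vertices of distinct cycles are distinct.

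One small inaccuracy: in the $t=2$ case you attribute each of $1{+}1$, $1{+}2/2{+}1$, $2{+}2$ to a single fixed contradiction, but $1{+}2$ does not always yield a triangle. If the two same-cycle gadget vertices happen to be the adjacent pair $c_{e',1},c_{e',\ell'+1}$ you get a triangle; otherwise the unique gadget vertex of the other side has both of them as its cycle-neighbours, and the remaining arc of the hole has at least two consecutive real vertices, which is the actual contradiction. This does not affect the conclusion --- each sub-case is still excluded --- but the bookkeeping in your summary is off by a case split. Similarly in $2{+}2$, if either same-cycle pair is adjacent you get a degree-three vertex on the induced cycle rather than a $C_4$; again harmless but worth spelling out. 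With those caveats, the proof is sound. Your route via a clean structural characterization of all holes and antiholes is arguably tidier and more reusable than a direct equivalence argument, since both properties fall out of the same lemma with almost no extra work.
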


Lemma~\ref{lemma:bart-reduction-properties} ensures that we can chain the reductions $R$, $S$ and $P$ (found respectively in Lemma~\ref{lemma:dom-set-reduction} and Equations~\eqref{eqn:open-set-mapping} and~\eqref{eqn:bart-reduction}) to obtain a polynomial-time reduction from \textsc{Dominating Set} to the detection problem \psdetectFor{c}{$\perfectDel$} with sufficient guarantees to be able to then infer the intractability of the latter problem using Lemma~\ref{lemma:dom-set-gap-hard}.

\begin{theorem}\label{thm:perfect-del-hard}
Unless $\fpt = W[1]$, there is no $\fpt$-time algorithm for \psdetectFor{c}{$\perfectDel$} parameterized by $k$ for \bmp{any $c \geq 1$.}
\end{theorem}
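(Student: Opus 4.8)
The plan is to compose the three reductions $R$, $S$, and $P$ (from Lemma~\ref{lemma:dom-set-reduction}, Equation~\eqref{eqn:open-set-mapping}, and Equation~\eqref{eqn:bart-reduction}) into a single polynomial-time map $(G,k)\mapsto \widehat{G} := P(S(R(G,k)))$ from \textsc{Dominating Set} to \textsc{Perfect Deletion}, while tracking the distinguished vertex~$v^*$ of $R(G,k)$: it persists as a ground-set element of the \textsc{Hitting Set} instance $S(R(G,k))$ and hence as a vertex of $\widehat{G}$ that is not one of the gadget vertices introduced by~$P$. Assuming an $\fpt$-time algorithm~$\mathcal{A}$ for \psdetectFor{c}{$\perfectDel$} parameterized by~$k$ exists, I would decide the gap version of \textsc{Dominating Set} as follows: given $(G,k)$ -- where we may assume $|V(G)|\geq 1$ and $k\geq 2$, the distinction being trivial otherwise -- compute $\widehat{G}$, run $\mathcal{A}$ on $(\widehat{G}, k+1)$, and report ``completeness'' exactly when $v^*$ is absent from the returned set~$S$. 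As $R$, $S$, and $P$ each run in polynomial time and inflate the instance only polynomially in $|V(G)|$ and $k$, this procedure runs in $\fpt$-time in~$k$, and with $F(k):=\lceil c(k+1)\rceil + 1$ it contradicts Lemma~\ref{lemma:dom-set-gap-hard}.

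The heart of the argument is a transfer lemma for the reduction~$P$ applied to the hypergraph $H:=S(R(G,k))$. One first checks that every hyperedge of~$H$ has size at least two, since no closed neighbourhood in $R(G,k)$ is a singleton: $v^*$ dominates all copies $G_0,\dots,G_k$, each apex $a_i$ has the nonempty neighbourhood $V(G_i)$, and $v^*$ itself has a neighbour. Hence Lemma~\ref{lemma:bart-reduction-properties} applies. Now, given any $X\subseteq V(P(H))$ with $P(H)-X$ perfect, repeatedly invoke Property~\ref{prop:bart-reduction-1}: each gadget vertex currently in the set lies on a unique odd-hole $C_e$ and may be swapped for a vertex of $V(C_e)\cap V(H)=e$, and we perform these swaps so as never to select $v^*$, which is possible because $|e|\geq 2$. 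Each swap preserves perfection of the remaining graph and strictly decreases the number of gadget vertices used, so the process terminates with a set $X'\subseteq V(H)$ satisfying $|X'|\leq |X|$, $v^*\in X'\iff v^*\in X$, and $P(H)-X'$ perfect; thus $X'$ is a hitting set of~$H$ by Property~\ref{prop:bart-reduction-2}. Conversely every hitting set of~$H$ deletes $P(H)$ to a perfect graph by Property~\ref{prop:bart-reduction-2}, and by Observation~\ref{obs:domset:hittingset} the hitting sets of~$H$ are precisely the dominating sets of $R(G,k)$. This gives $\opt_{\perfectDel}(\widehat{G})=\opt_{\ds}(R(G,k))$ and transfers ``avoids $v^*$'' / ``contains $v^*$'' statements between (minimum, respectively small) dominating sets of $R(G,k)$ and perfect-deletion sets of~$\widehat{G}$.

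Plugging in Lemma~\ref{lemma:dom-set-reduction} then handles both cases of Lemma~\ref{lemma:dom-set-gap-hard}. In the completeness case, $R(G,k)$ has domination number~$k$ with every optimal dominating set avoiding $v^*$, so $\opt_{\perfectDel}(\widehat{G})=k\leq k+1$ and every minimum perfect-deletion set of $\widehat{G}$ avoids $v^*$ -- otherwise the transfer lemma would produce a minimum hitting set of~$H$, i.e.\ a minimum dominating set of $R(G,k)$, containing $v^*$. Hence Property~\ref{prop:c-essential-sup-det-double-guarantee-1} for the call $\mathcal{A}(\widehat{G},k+1)$ forces $v^*\notin S$. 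In the soundness case, $G$ has domination number at least $k\cdot F(k)>c(k+1)$, so by Lemma~\ref{lemma:dom-set-reduction} the graph $R(G,k)$ has domination number $k+1$ and $v^*$ lies in every dominating set of $R(G,k)$ of size at most $c(k+1)$; hence $\opt_{\perfectDel}(\widehat{G})=k+1$ and every perfect-deletion set of $\widehat{G}$ of size at most $c(k+1)=c\cdot\opt_{\perfectDel}(\widehat{G})$ contains $v^*$ -- otherwise the $v^*$-avoiding form of the transfer lemma yields a hitting set of~$H$, hence a dominating set of $R(G,k)$, of size at most $c(k+1)$ missing $v^*$. So $v^*$ is $c$-essential in $\widehat{G}$, and since the supplied parameter $k+1$ equals $\opt_{\perfectDel}(\widehat{G})$ here, Property~\ref{prop:c-essential-sup-det-double-guarantee-2} forces $v^*\in S$. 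Thus testing whether $v^*\in S$ separates the completeness case from the soundness case.

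The step I expect to be most delicate is the $v^*$-avoiding refinement of the transfer lemma: the soundness direction collapses if cleaning a perfect-deletion set down to $V(H)$ can introduce $v^*$, and the reason it cannot is precisely that every hyperedge of $S(R(G,k))$ has at least two elements, so that Property~\ref{prop:bart-reduction-1} always offers a replacement vertex other than $v^*$ -- a fact to be read off carefully from the construction of $R(G,k)$. The other point to watch is that the parameter $k+1$ we hand to~$\mathcal{A}$ must coincide with $\opt_{\perfectDel}(\widehat{G})$ exactly in the soundness case, so that Property~\ref{prop:c-essential-sup-det-double-guarantee-2} is applicable; beyond this, the argument parallels the proof of Theorem~\ref{thm:dom-set-c-essential-detection-hardness}, now composed with the additional reductions $S$ and $P$.
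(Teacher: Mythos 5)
Your proposal follows the same route as the paper's proof: compose $P \circ S \circ R$, track the distinguished vertex $v^*$ through the composite, invoke Lemma~\ref{lemma:bart-reduction-properties} to clean a perfect-deletion set down to a $v^*$-avoiding hitting set of no larger size (possible because all hyperedges of $S(R(G,k))$ have size at least two), and then argue the completeness and soundness cases of Lemma~\ref{lemma:dom-set-gap-hard} by checking whether $v^*$ lands in the output of $\mathcal{A}$ on $(\widehat{G},k+1)$. Your formulation of the ``transfer lemma'' merely makes explicit the iterated use of Property~\ref{prop:bart-reduction-1} that the paper invokes more tersely, and your choice $F(k)=\lceil c(k+1)\rceil+1$ is a harmless variant of the paper's $F(k)=c(k+1)$; there is no substantive difference.
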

\begin{proof}
\bmp{Fix any instance $(G,k)$ with~$k \geq 2$ of \textsc{Dominating Set} which has domination number at most $k$ or at least $ck(k+1) > c(k+1)$.} Then, by Observation~\ref{obs:domset:hittingset}, the hypergraph $S(R(G,k))$ has a distinguished vertex $\ben{v^*}$ such that following properties are satisfied: 
\begin{enumerate}[label=\textbf{C\arabic*}]
    \item if~$G$ has domination number at most~$k$, then $S(R(G,k))$ has hitting set number $k$ and all of its minimum hitting sets avoid $\ben{v^*}$; \label{pd:case1}
    \item if $G$ has domination number strictly greater than $c(k+1)$, then $S(R(G,k))$ has hitting set number $k + 1$ and its distinguished vertex $\ben{v^*}$ is contained in every solution of size at most $c(k+1)$.\label{pd:case2}
\end{enumerate}
Let $Q$ be the set returned by any algorithm for \psdetectFor{c}{$\perfectDel$} on input $\bigl( (P \circ S \circ R)(G,k), k+1 \bigr )$. Notice that, by Lemma~\ref{lemma:bart-reduction-properties}, in both Case~\ref{pd:case1} and Case~\ref{pd:case2} we have that the optimum for $\perfectDel$ on $(P \circ S \circ R)(G,k)$ coincides with the optimum for $\hittingset$ on $S(R(G,k))$. 

With this observation in mind, consider Case~\ref{pd:case1}. Here we have that $Q$ is contained in an optimum solution $Q'$ of size $k$ (by Property~\ref{prop:c-essential-sup-det-double-guarantee-1}). However, by the second point in Lemma~\ref{lemma:bart-reduction-properties} (and since all size-$k$ hitting sets of $S(R(G,k))$ avoid $\ben{v^*}$) we deduce that any such $Q'$ (and hence $Q$) cannot contain $\ben{v^*}$ (which is a vertex of $S(R(G,k))$ and hence also a vertex of $(P \circ S \circ R)(G,k)$).

For Case~\ref{pd:case2}, consider any perfect deletion set $X$ in $(P \circ S \circ R)(G,k)$ of size at most $c(k+1)$. If we can show that $X$ must always contain $\ben{v^*}$, then we are done since it would imply that $\ben{v^*}$ is $c$-essential which, by Property~\ref{prop:c-essential-sup-det-double-guarantee-2} would allows us to decide whether $G$ has domination number either $k$ or greater than $c(k+1)$ simply by checking whether $\ben{v^*}$ is in $Q$ or not. So, seeking a contradiction, suppose $\ben{v^*} \not \in X$. By Property~\ref{prop:bart-reduction-1} of Lemma~\ref{eqn:bart-reduction} and since each hyperedge $S(R(G,k))$ has size at least $2$, we can find a subset $Y$ of the vertices of $S(R(G,k)) - \ben{v^*}$ of size at most $|X|$ such that $(P \circ S \circ R)(G,k) - Y$ is perfect. But then, by Property~\ref{prop:bart-reduction-2} of Lemma~\ref{lemma:bart-reduction-properties}, we have that $Y$ is a hitting set of size at most $|X| = c(k+1) \leq ck(k+1)$ in $S(R(G,k))$ which avoids $\ben{v^*}$. However, this is a contradiction since Observation~\ref{obs:domset:hittingset} guarantees that $\ben{v^*}$ is contained in every hitting set for $S(R(G,k))$ of size at most $ck(k+1)$.
\end{proof}

In a similar vein to Heggernes et al.'s~\cite[Thm.~1]{HeggernesHJKV13} reduction $P$ (Equation~\ref{eqn:bart-reduction})) one can define a reduction $L$ (Figure~\ref{fig:wheel-reduction}) from \textsc{Hitting Set} to \textsc{Wheel-free Edge-Deletion}. This is due to Lokshtanov~\cite{Lokshtanov-wheel-free-2008} and it is defined as follows. Given an instance $(H, k)$ of \textsc{Hitting Set}, for each hyperedge $e \in E(H)$ with $e = \{x_1, \dots, x_\ell\}$, construct the wheel $W_e$ with $3\ell$ spokes and, picking some cyclic order of its rim-edges (i.e. the edges of the $3\ell$-cycle in the wheel) denote them as \[e_{x_1}, e_1', e_1'', e_{x_2}, e_2', e_2'',\dots, e_{x_\ell}, e_\ell', e_\ell''.\] 
We say that the edges $e_{x_1}, e_{x_2}, \dots e_{x_3}$ \emph{correspond} to the vertices $x_1, x_2, \dots, x_\ell$ respectively. Then, define the equivalence relation $\sim$ on the edges of the disjoint union of all such wheels so that any two edges that correspond to the same vertex become equivalent under $\sim$ (stating this formally, given any two distinct hyperedges $e, f$ in $H$, for all vertices $x \in e \cap f$, set $e_x \sim f_x$). Finally define $L(H)$ (see Figure~\ref{fig:wheel-reduction}) to be the quotient 
\begin{equation}\label{eqn:wheel-reduction}
    L(H) := \biguplus_{e \in E(H)} W_e /_{\sim}.
\end{equation}

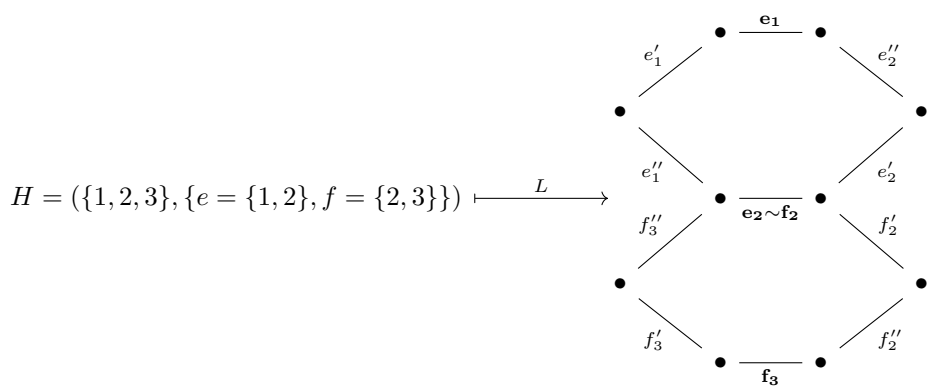
\begin{figure}
    \centering
    % https://q.uiver.app/?q=WzAsMTMsWzAsMiwiSD0gKFxcezEsMiwzXFx9LFxce2UgPSBcXHsxLDJcXH0sIGYgPVxcezIsM1xcfVxcfSkiXSxbMiwyXSxbMywyLCJcXGJ1bGxldCJdLFs0LDIsIlxcYnVsbGV0Il0sWzUsMSwiXFxidWxsZXQiXSxbNCwwLCJcXGJ1bGxldCJdLFszLDAsIlxcYnVsbGV0Il0sWzIsMSwiXFxidWxsZXQiXSxbMiwzLCJcXGJ1bGxldCJdLFszLDQsIlxcYnVsbGV0Il0sWzQsNCwiXFxidWxsZXQiXSxbNSwzLCJcXGJ1bGxldCJdLFs0LDFdLFswLDEsIkwiLDAseyJzdHlsZSI6eyJ0YWlsIjp7Im5hbWUiOiJtYXBzIHRvIn19fV0sWzIsMywiZV8yIFxcc2ltIGZfMiIsMix7InN0eWxlIjp7ImhlYWQiOnsibmFtZSI6Im5vbmUifX19XSxbMyw0LCJlXzInIiwyLHsic3R5bGUiOnsiaGVhZCI6eyJuYW1lIjoibm9uZSJ9fX1dLFs1LDQsImVfMicnIiwwLHsic3R5bGUiOnsiaGVhZCI6eyJuYW1lIjoibm9uZSJ9fX1dLFs1LDYsImVfMSIsMix7InN0eWxlIjp7ImhlYWQiOnsibmFtZSI6Im5vbmUifX19XSxbNiw3LCJlXzEnIiwyLHsic3R5bGUiOnsiaGVhZCI6eyJuYW1lIjoibm9uZSJ9fX1dLFs3LDIsImVfMScnIiwyLHsic3R5bGUiOnsiaGVhZCI6eyJuYW1lIjoibm9uZSJ9fX1dLFsyLDgsImZfMycnIiwyLHsic3R5bGUiOnsiaGVhZCI6eyJuYW1lIjoibm9uZSJ9fX1dLFs4LDksImZfMyciLDIseyJzdHlsZSI6eyJoZWFkIjp7Im5hbWUiOiJub25lIn19fV0sWzksMTAsImZfMyIsMix7InN0eWxlIjp7ImhlYWQiOnsibmFtZSI6Im5vbmUifX19XSxbMTAsMTEsImZfMicnIiwyLHsic3R5bGUiOnsiaGVhZCI6eyJuYW1lIjoibm9uZSJ9fX1dLFszLDExLCJmXzInIiwwLHsic3R5bGUiOnsiaGVhZCI6eyJuYW1lIjoibm9uZSJ9fX1dXQ==
\begin{tikzcd}
	&&& \bullet & \bullet \\
	&& \bullet && {} & \bullet \\
	{H= (\{1,2,3\},\{e = \{1,2\}, f =\{2,3\}\})} && {} & \bullet & \bullet \\
	&& \bullet &&& \bullet \\
	&&& \bullet & \bullet
	\arrow["L", maps to, from=3-1, to=3-3]
	\arrow["{\mathbf{e_2\sim f_2}}"', no head, from=3-4, to=3-5]
	\arrow["{e_2'}"', no head, from=3-5, to=2-6]
	\arrow["{e_2''}", no head, from=1-5, to=2-6]
	\arrow["{\mathbf{e_1}}"', no head, from=1-5, to=1-4]
	\arrow["{e_1'}"', no head, from=1-4, to=2-3]
	\arrow["{e_1''}"', no head, from=2-3, to=3-4]
	\arrow["{f_3''}"', no head, from=3-4, to=4-3]
	\arrow["{f_3'}"', no head, from=4-3, to=5-4]
	\arrow["{\mathbf{f_3}}"', no head, from=5-4, to=5-5]
	\arrow["{f_2''}"', no head, from=5-5, to=4-6]
	\arrow["{f_2'}", no head, from=3-5, to=4-6]
\end{tikzcd}
    \caption{An example of the reduction $L$ (we omit the centers of the two wheels above to improve readability). The edge marked with \textbf{bold} labels are those which correspond to vertices $1,2$ and $3$ of the Set-Cover instance $H$.}
    \label{fig:wheel-reduction}
\end{figure}

Given Lokshtanov's parameter-preserving reduction~\cite{Lokshtanov-wheel-free-2008} $L$ from \textsc{Hitting Set} to Wheel-free Edge-Deletion, we can argue in similar vein to Theorem~\ref{thm:perfect-del-hard} to show the intractability of \psdetectFor{c}{$\wheelfreeedge$}. 

\ben{Since we have so far focused on vertex-deletion problems, rather than proving this for the edge-deletion case, we will focus on showing the intractability of \psdetectFor{c}{$\wheelfree$}. To that end, we need the following slight modification of Lokshtanov's reduction. Given any hypergraph $H$, and wheels $W_e$ for each hyperedge $e$ of $H$ (as defined earlier), make a new wheel $W'_e$ from $W_e$ by subdividing each one of its edges and adding appropriate spokes. We shall say that a vertex of $W'_e$ which \emph{subdivides} an edge $e_x$ of $W_e$ \emph{corresponds} to a vertex $x$ in $H$ if and only if the edge $e_x$ corresponds to the vertex $x$. Finally we define $\Lambda(H)$ as the quotient $\Lambda(H) := \biguplus_{e \in E(H)} W'_e /_{\sim_V}$ where $\sim_V$ identifies those \emph{vertices} of $W'_e$ which correspond to the same vertex in $H$. } 

\begin{lemma}\label{lemma:wheel-help}
\ben{For any $\hittingset$ instance $H$ and any integer $k \leq |V(H)|$ we have that $H$ has a hitting set of size $k$ if and only if $\Lambda(H)$ has a vertex-modulator to wheel-freeness $X$ of size $k$ such that every vertex of $X$ corresponds to a vertex in $H$.}
\end{lemma}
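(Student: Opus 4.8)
The plan is to set up the obvious correspondence $x \mapsto v_x$, where $v_x$ denotes the single vertex of $\Lambda(H)$ obtained by identifying, via $\sim_V$, all subdivision vertices of the edges $e_x$ over the hyperedges $e \ni x$. Since $\sim_V$ identifies only vertices corresponding to the same vertex of $H$, the map $x \mapsto v_x$ is a bijection between $V(H)$ and the set of vertices of $\Lambda(H)$ that correspond to a vertex of $H$; in particular every set $X$ consisting of such vertices has the form $X = \{v_x : x \in S\}$ for a unique $S \subseteq V(H)$ with $|S| = |X|$, and the lemma then reduces to showing that $S$ hits all hyperedges if and only if $\Lambda(H) - X$ is wheel-free.

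The heart of the argument is a structural claim mirroring the analysis of Lokshtanov's edge-deletion gadget~\cite{Lokshtanov-wheel-free-2008}: every wheel occurring in $\Lambda(H)$ is contained in the image $\widehat{W'_e}$ of a single gadget $W'_e$, and each such $\widehat{W'_e}$ is isomorphic to $W'_e$ and is itself a wheel. First I would note that $\sim_V$ neither identifies two vertices inside a single $W'_e$ (the $H$-corresponding vertices inside $W'_e$ are the $v_x$ with $x \in e$, pairwise distinct in $H$) nor adds edges, so $\widehat{W'_e} \cong W'_e$ and distinct gadgets overlap only in vertices of the form $v_x$. Then I would argue that the hub of any wheel in $\Lambda(H)$ must be the central vertex $h_e$ of some $W'_e$: every other vertex $u$ has a neighbourhood in $\Lambda(H)$ that induces a disjoint union of paths — for a shared vertex $v_x$ the neighbourhood splits, per gadget $W'_e$ with $x \in e$, into the two rim-neighbours and $h_e$, which form a path of length two (the rim-neighbours are non-adjacent since each rim has length at least six) and different gadgets contribute no edges between these triples; for a rim vertex of $W'_e$ not of the form $v_x$ the neighbourhood is a single such path. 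A vertex with acyclic neighbourhood cannot be a wheel-hub, so every wheel has hub $h_e$ for some $e$, its rim lies inside $N_{\Lambda(H)}(h_e)$, which is exactly the rim cycle of $\widehat{W'_e}$, and a cycle subgraph of a cycle is the whole cycle, so the wheel equals $\widehat{W'_e}$. I expect this step to be the main obstacle: it amounts to carrying Lokshtanov's gadget analysis over to the subdivided, vertex variant and checking that the added spokes together with the identifications create no ``shortcut'' wheels spanning several gadgets.

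Granting the claim, the forward direction is short: let $S$ be a hitting set of size $k$ and $X := \{v_x : x \in S\}$, so $|X| = k$ and all of $X$ corresponds to vertices of $H$. If $\Lambda(H) - X$ contained a wheel $W$, then $W$ is a wheel of $\Lambda(H)$, hence $W \subseteq \widehat{W'_e}$ for some $e$; choosing $x \in S \cap e$ gives $v_x \in X$, so $W \subseteq \widehat{W'_e} - v_x$, which is a wheel with one rim vertex deleted, i.e.\ a fan. A fan contains no wheel (any cycle in a fan uses the apex, and any vertex outside that cycle is adjacent to at most one of its endpoints, not to all of it), and wheel-freeness is hereditary, so $\widehat{W'_e} - v_x$ is wheel-free — a contradiction. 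Hence $\Lambda(H) - X$ is wheel-free.

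For the reverse direction, let $X$ be a wheel-free modulator of $\Lambda(H)$ of size $k$ with every vertex corresponding to a vertex of $H$, and write $X = \{v_x : x \in S\}$ with $|S| = k$. If $S$ missed some hyperedge $e$, then $X$ would contain no $v_x$ with $x \in e$; since the only $H$-corresponding vertices inside $\widehat{W'_e}$ are exactly those $v_x$, and $X$ contains only $H$-corresponding vertices, we would get $X \cap V(\widehat{W'_e}) = \emptyset$, so $\widehat{W'_e}$ survives intact in $\Lambda(H) - X$ — contradicting wheel-freeness since $\widehat{W'_e}$ is a wheel. Hence $S$ is a hitting set of size $k$, which completes the proof.
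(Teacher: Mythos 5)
Your proof is correct, but it takes a genuinely different route from the paper's. The paper derives the lemma from Lokshtanov's theorem on the \emph{edge}-deletion construction $L(H)$: it maps an edge-modulator $F$ of $L(H)$ to the set of subdividing vertices in $\Lambda(H)$ for one direction, and for the converse argues tersely that any vertex-modulator of $\Lambda(H)$ of size $k\le |V(H)|$ can be pushed onto vertices of the form $v_x$ because only those lie in more than one wheel. You instead give a self-contained structural argument: by checking that every non-hub vertex of $\Lambda(H)$ has acyclic neighbourhood and that $N_{\Lambda(H)}(h_e)$ is exactly the chordless $6\ell$-rim of $\widehat{W'_e}$, you conclude that the wheels of $\Lambda(H)$ are precisely the $\widehat{W'_e}$. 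From this the equivalence with hitting sets is immediate, without invoking Lokshtanov's result as a black box. In effect you re-derive the gadget analysis for the subdivided, vertex-deletion variant, which makes explicit the step the paper glosses over (``every solution of $\Lambda(H)$ can be assumed to consist solely of vertices in $V(\Lambda(H))\setminus V(L(H))$''). Both arguments are sound; yours is more detailed and rigorous, the paper's is shorter at the cost of leaning on the imported edge-version result and leaving the conversion of arbitrary modulators to $v_x$-modulators implicit. One small point: the phrase ``any vertex outside that cycle is adjacent to at most one of its endpoints'' in your fan argument is garbled — the cleaner statement is that the putative hub $h'\neq h$ is a path vertex $p_i$ with only two path-neighbours, forcing the rim to be a triangle $\{h,p_{i-1},p_{i+1}\}$, which fails since $p_{i-1}p_{i+1}$ is not an edge of the fan.
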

\begin{proof}
\ben{By Lokshtanov's reduction~\cite{Lokshtanov-wheel-free-2008}, we have for all $k$ that: $H$ has a hitting set of size $k$ if and only if $L(H)$ has an edge-modulator to wheel-freeness $F$ of size $k$ such that every edge of $F$ corresponds to a vertex in $H$. Note that, given any such $F$, we can turn this into a vertex-modulator for wheel-freeness (of the same size as $F$ and whose elements also all correspond to vertices of $H$) on $\Lambda(H)$ by mapping each edge in $F$ to its subdividing vertex. 
Furthermore, since (by construction) the only vertices of $\Lambda(H)$ that can be in more than one wheel are those in $V(\Lambda(H)) \setminus V(L(H))$, every solution of $\Lambda(H)$ can be assumed to consist solely of vertices in $V(\Lambda(H)) \setminus V(L(H))$. Thus, if $\Lambda(H)$ has a solution of size $k \leq |V(H)|$, then we can turn it into a solution of size at most $k$ whose elements all correspond to vertices of $H$ (i.e. a solution in $H$). 
}
\end{proof}

\begin{theorem}\label{thm:wheel-free-hard}
Unless $\fpt = W[1]$, there is no $\fpt$-time algorithm for \psdetectFor{c}{$\wheelfree$} parameterized by $k$ for \bmp{any} $c$.
\end{theorem}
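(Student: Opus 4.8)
The plan is to follow the template of the proof of Theorem~\ref{thm:perfect-del-hard}, replacing the composite reduction $P \circ S \circ R$ with $\Lambda \circ S \circ R$, where $\Lambda$ is the modified Lokshtanov reduction introduced just before Lemma~\ref{lemma:wheel-help}. Assume for contradiction that some $\fpt$-time algorithm $\mathcal{A}$ solves \psdetectFor{c}{$\wheelfree$}. I would invoke Lemma~\ref{lemma:dom-set-gap-hard} with a computable function $F$ chosen (exactly as in Theorems~\ref{thm:dom-set-c-essential-detection-hardness} and~\ref{thm:perfect-del-hard}) so that $k \cdot F(k) > c(k+1)$ for all $k \geq 2$, and show that $\mathcal{A}$ lets us distinguish its completeness and soundness cases in $\fpt$-time. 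Given such an instance $(G,k)$ of $\ds$ with $k \geq 2$, set $H := S(R(G,k))$. By Observation~\ref{obs:domset:hittingset} together with Lemma~\ref{lemma:dom-set-reduction}, $H$ is a hypergraph with a distinguished vertex $v^*$ such that: \textbf{(C1)} if $G$ has domination number at most $k$, then the hitting-set number of $H$ is $k$ and every minimum hitting set of $H$ avoids $v^*$; \textbf{(C2)} if $G$ has domination number greater than $c(k+1)$, then the hitting-set number of $H$ is $k+1$ and $v^*$ is contained in every hitting set of $H$ of size at most $c(k+1)$. Moreover every hyperedge of $H$ has size at least $2$ (each is the closed neighborhood of a non-isolated vertex of $R(G,k)$) and $k+1 \leq |V(R(G,k))| = |V(H)|$; recall also that $v^*$, being a vertex of $H$, survives into $\Lambda(H)$ as a vertex that corresponds to itself.

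Next I would run $\mathcal{A}$ on the pair $(\Lambda(H), k+1)$ to obtain a set $Q$. Since the hitting-set number of $H$ is at most $|V(H)|$, Lemma~\ref{lemma:wheel-help} applies and shows that the optimum of $\wheelfree$ on $\Lambda(H)$ equals the hitting-set number of $H$: it is $k$ in Case~\textbf{(C1)} and $k+1$ in Case~\textbf{(C2)}. In Case~\textbf{(C1)} the optimum is at most $k+1$, so Property~\ref{prop:c-essential-sup-det-double-guarantee-1} gives an optimal wheel-free modulator $Q'$ of size $k$ with $Q \subseteq Q'$; I claim $v^* \notin Q$. If not, then $v^* \in Q'$, and the swap argument underlying Lemma~\ref{lemma:wheel-help} (the analogue of Property~\ref{prop:bart-reduction-1} of Lemma~\ref{lemma:bart-reduction-properties}) rewrites $Q'$ into a solution of size at most $k$ all of whose vertices correspond to vertices of $H$, hence into a minimum hitting set of $H$; since these swaps only ever remove vertices that do not correspond to a vertex of $H$ and — because every hyperedge has size at least $2$ — never need to insert $v^*$, this hitting set still contains $v^*$, contradicting~\textbf{(C1)}. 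In Case~\textbf{(C2)} the value $k+1$ supplied to $\mathcal{A}$ coincides with the optimum, so Property~\ref{prop:c-essential-sup-det-double-guarantee-2} forces $Q$ to contain all $c$-essential vertices; I claim $v^*$ is $c$-essential. Otherwise there is a wheel-free modulator $X$ of $\Lambda(H)$ with $|X| \leq c(k+1) = c \cdot \opt$ and $v^* \notin X$, and the same swap argument turns $X$ into a hitting set of $H$ of size at most $|X| \leq c(k+1)$ that still avoids $v^*$, contradicting~\textbf{(C2)}. Thus $v^* \notin Q$ in Case~\textbf{(C1)} and $v^* \in Q$ in Case~\textbf{(C2)}, so testing whether $v^* \in Q$ distinguishes the two cases; as $\Lambda \circ S \circ R$ runs in polynomial time and outputs a graph of size polynomial in $|V(G)|$, the whole procedure is $\fpt$-time in $k$, contradicting Lemma~\ref{lemma:dom-set-gap-hard}.

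The hard part will be the precise justification of the swap argument for $\Lambda$ while keeping track of $v^*$. Concretely, one must check that (i) the only wheels in $\Lambda(H)$ are the designed wheels $W'_e$, so that a vertex set hits all of them exactly when its removal leaves $\Lambda(H)$ wheel-free; (ii) every vertex of a solution that does not correspond to a vertex of $H$ lies on a unique $W'_e$ and can be swapped for the subdivision vertex of some edge $e_x$ with $x \in e$ without harming wheel-freeness, and since $|e| \geq 2$ this $x$ can be taken different from $v^*$; and (iii) consequently, for sizes up to $|V(H)|$, the minimum size of a wheel-free modulator of $\Lambda(H)$ equals the hitting-set number of $H$, and a modulator avoiding $v^*$ yields a hitting set avoiding $v^*$ — that is, the $v^*$-tracking refinement of Lemma~\ref{lemma:wheel-help}. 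Parts (i)--(ii) are Lokshtanov's analysis~\cite{Lokshtanov-wheel-free-2008} transported through the subdivision and are already implicit in the proof of Lemma~\ref{lemma:wheel-help}; part (iii) is the only genuinely new ingredient, and once it is in place the remainder is a routine transcription of the proof of Theorem~\ref{thm:perfect-del-hard}.
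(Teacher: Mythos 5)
Your proposal is correct and follows essentially the same route as the paper's proof: the same reduction chain $\Lambda \circ S \circ R$, the same appeal to Lemma~\ref{lemma:wheel-help} to equate wheel-free deletion number of $\Lambda(S(R(G,k)))$ with the hitting-set number of $S(R(G,k))$, the same two-case analysis driven by properties \ref{pd:case1} and \ref{pd:case2}, and the same $v^*$-tracking swap argument (exploiting that every hyperedge has size at least $2$) to show the algorithm's output distinguishes the two cases. If anything, you make a couple of points slightly more explicit than the paper — you state the integer argument $k+1$ passed to the detection algorithm, and you spell out the contradiction in Case~\ref{pd:case1} rather than asserting directly that every minimum witness avoids $v^*$ — but these are expository refinements, not a different argument.
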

\begin{proof}
\bmp{Once again, starting with any instance $(G,k)$ with~$k\geq 2$ of \textsc{Dominating Set} which has domination number either $k$ or at least~$ck(k+1)$, we apply the closed-neighborhood mapping $S$} (Equation~\eqref{eqn:open-set-mapping}) so that we may proceed by case distinction on the same cases~\ref{pd:case1} and~\ref{pd:case2} as we did in the proof of Theorem~\ref{thm:perfect-del-hard}. 

Notice that, for any vertex $x$ in $(\Lambda \circ S \circ R)(G,k)$ which does not correspond to any vertex of $S(R(G,k))$, $x$ is contained in \emph{exactly one} wheel of $(\Lambda \circ S \circ R)(G,k)$. This, combined with Lemma~\ref{lemma:wheel-help}, implies that the optimum for $\wheelfree$ on $(\Lambda \circ S \circ R)(G,k)$ always coincides with the optimum for $\hittingset$ on $S(R(G,k))$. 

Let $Q$ be the set returned by any algorithm for \psdetectFor{c}{$\wheelfree$} on input $(\Lambda \circ S \circ R)(G,k)$. 

In Case~\ref{pd:case1}, since $(G,k)$ is a yes-instance, $S(R(G,k))$ has a minimum hitting set of size $k$ and every such minimum hitting set avoids the distinguished vertex $\ben{v^*}$. Thus, by Lemma~\ref{lemma:wheel-help}, we have that $(\Lambda \circ S \circ R)(G,k)$ has wheel-free deletion number $k$ and every witness $X$ of this fact satisfies the following property: no vertex in $X$ corresponds to $\ben{v^*}$ in any wheel of $(\Lambda \circ S \circ R)(G,k)$. Consequently, since the returned set $Q$ satisfies Property~\ref{prop:c-essential-sup-det-double-guarantee-1}, $Q$ does not contain any edge corresponding to $\ben{v^*}$ either.

In Case~\ref{pd:case2}, since $G$ has domination number strictly greater than $ck(k+1)$, $S(R(G,k))$ has hitting set number $k + 1$ and its distinguished vertex $\ben{v^*}$ is contained in every solution of size at most $c(k+1)$. Now take any wheel-free deletion set $X$ of size at most $c(k+1) = c \cdot \opt\bigl ( (\Lambda \circ S \circ R)(G,k) \bigr )$ (recall that the optimum for $\wheelfree$ on $(\Lambda \circ S \circ R)(G,k)$ always coincides with the optimum for $\hittingset$ on $S(R(G,k))$). Now, seeking a contradiction, suppose that no element of $X$ corresponds to $\ben{v^*}$. 
Since any element $x$ in $X$ is contained in precisely one wheel whenever it does not correspond to any vertex of $S(R(G,k))$, we can find always find a modulator $Y$ of size at most that of $X$ whose elements all correspond to vertices of $S(R(G,k))$. Furthermore, since we can assume that $G$ contains no isolated vertices, we can also assume that all hyperedges of $S(R(G,k))$ have size at least $2$. Thus every wheel in $(L \circ S \circ R)(G,k)$ has at least two vertices corresponding to vertices of $S(R(G,k))$ and hence we can further assume that no vertex in $Y$ corresponds to $\ben{v^*}$ (if it does, we can modify $Y$). But then the vertices corresponding to those of $Y$ form a hitting set of $S(R(G,k))$ which:  
\begin{itemize}
    \item has size at most $|Y| \leq |X| \leq c(k+1)$ and 
    \item avoids $\ben{v^*}$,
\end{itemize}
contradicting the fact that $\ben{v^*}$ is contained in every solution of size at most $c(k+1)$ in $S(R(G,k))$. Thus we have that $\ben{v^*} \in Q$ if and only if $\ben{v^*}$ is $c(k+1)$-essential in $S(R(G,k))$ \bmp{which is enough to distinguish whether $G$ has domination number at most $k$ or at least $ck(k+1)$.}
\end{proof}

\section{Connections to Perturbation Resilience} \label{sec:stability}
In the area of perturbation resilience~\cite{makarychev_makarychev_2021} there is a notion of so-called \textit{$c$-perturbation resilient instances} to optimization problems. Roughly, these are instances $G$ in which there is a \textit{unique} optimal solution $S$ which far outperforms (by a factor of $c$) every other solution $S'$ in $G$. More formally, for a vertex-weighted graph minimization problem $\Pi$ whose solution is a vertex-subset, we say that an instance $(G, w \colon V(G) \to \mathbb{N})$ with a \emph{unique} optimal solution $S$ is \emph{$c$-perturbation resilient} if for any \bmp{perturbed weight function $w'$ satisfying~$w(v) \leq w'(v) \leq c \cdot w(v)$ for all~$v \in V(G)$,} the instance $(G, w'(x))$ has a unique optimal solution and furthermore this solution is $S$. (Of course, one can define an analogous notion for maximization problems as well and what follows in this section applies to both.)

Classes of $c$-perturbation resilient instances have been shown to be `islands of tractability' where many intractable problems become polynomial-time-solvable~\cite{makarychev_makarychev_2021} and the suggestive intuition behind this fact is that perturbation resilient instances have unique optima which `stand out' and are `obvious' in some sense. 
Viewing stability through the lens of parameterized complexity, it is natural to ask whether one can quantify in an algorithmically useful way how distant an input is from being stable. The following proposition supports our claim that the \emph{non-essentiality} (recall Theorem~\ref{thm:non-essentiality}) is a good such measure since on $(>c)$-stable inputs, the $c$-non-essentiality is smallest possible (namely it is $0$).

\begin{proposition}\label{prop:stability-essentiality}
Given constants $c' > c \geq 1$, if $G$ is a $c'$-stable input to a graph optimization problem $\Pi$ whose solutions are vertex-subsets, then $\mathcal{E}^\Pi_c(G)$ is the unique optimum of $G$.
\end{proposition}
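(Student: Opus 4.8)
The plan is to unfold the definitions of $c'$-stability and of $c$-essentiality and show the two sets coincide. Let $S$ be the unique optimal solution of the $c'$-stable instance $G$ (with its weight function $w$); the definition of $c'$-perturbation resilience guarantees such a unique $S$ exists. I will prove two inclusions: every vertex of $S$ is $c$-essential, and conversely $\mathcal{E}^\Pi_c(G) \subseteq S$; together with the fact that $S$ is a feasible optimum, this yields $\mathcal{E}^\Pi_c(G) = S$.

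For the inclusion $S \subseteq \mathcal{E}^\Pi_c(G)$, I would argue by contradiction: suppose some $v \in S$ is not $c$-essential, so there is a feasible solution $T$ with $w(T) \leq c \cdot \opt_\Pi(G) = c \cdot w(S)$ and $v \notin T$. The idea is to build a perturbed weight function $w'$ that makes $T$ at least as good as $S$, contradicting uniqueness-under-perturbation. Concretely, inflate the weight of $v$ (and possibly of the other vertices of $S \setminus T$) up toward the factor-$c'$ ceiling while leaving the vertices of $T$ at their original weights; since $c' > c$, one can choose the inflation so that under $w'$ the solution $S$ costs strictly more than $c \cdot w(S) \geq w'(T) = w(T)$, so $S$ is no longer the unique optimum under $w'$. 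This contradicts $c'$-stability, so every vertex of $S$ must be $c$-essential. The place to be careful is choosing $w'$ to stay within $[w(u), c' w(u)]$ for every $u$ while achieving the strict inequality; since $c' > c$ and weights are positive integers, a small uniform blow-up factor strictly between $c$ and $c'$ on the vertices of $S \setminus T$ suffices, and I would spell this out with one short inequality chain.

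For the reverse inclusion $\mathcal{E}^\Pi_c(G) \subseteq S$, note that $S$ itself is a feasible solution of size $\opt_\Pi(G) \leq c \cdot \opt_\Pi(G)$, so by definition every $c$-essential vertex lies in $S$. (This inclusion holds for any instance, stable or not, and matches the remark already made in the paper that all optimal solutions contain all $c$-essential vertices.) Combining the two inclusions gives $\mathcal{E}^\Pi_c(G) = S$, which is the unique optimum, as claimed; in particular the $c$-non-essentiality $\opt_\Pi(G) - |\mathcal{E}^\Pi_c(G)|$ is $0$.

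The main obstacle is the first inclusion, and specifically making the perturbation argument fully rigorous: one must verify that the perturbed instance $(G, w')$ genuinely violates $c'$-stability. Stability demands a \emph{unique} optimum equal to $S$ under \emph{every} admissible $w'$; so it suffices to exhibit a single admissible $w'$ for which $S$ is not the unique optimum. Showing $w'(T) < w'(S)$ (strict) is cleanest: then $S$ is not even optimal under $w'$, an outright contradiction. I expect no real difficulty beyond bookkeeping with the weight bounds, provided one handles the edge case where $S \setminus T$ might be nonempty but $w$ assigns it small values — inflating those vertices by any factor in $(c, c')$ is enough because the gap between $w(T)\le c\cdot w(S)$ and the inflated $w'(S) > c \cdot w(S)$ is then strictly positive.
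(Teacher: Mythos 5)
Your proof follows the same route as the paper's: $\mathcal{E}^\Pi_c(G) \subseteq S$ is immediate (every $c$-essential vertex lies in every $c$-approximation, and $S$ is one), and $S \subseteq \mathcal{E}^\Pi_c(G)$ is to be extracted from $c'$-stability. The paper asserts the latter in one clause --- any $c$-approximation must equal $S$ ``since otherwise $G$ would not be $c'$-stable'' --- without exhibiting the witnessing perturbation, and you try to produce one. That is where the argument breaks. You inflate only the vertices of $S \setminus T$ by a factor $\alpha \in (c, c')$ and then claim $w'(S) > c \cdot w(S)$; but the vertices of $S \cap T$ keep their original weight, so $w'(S) = w(S\cap T) + \alpha\, w(S\setminus T)$, and this exceeds $c \cdot w(S) = c\,w(S\cap T) + c\,w(S\setminus T)$ only when $(\alpha - c)\,w(S\setminus T) > (c-1)\,w(S\cap T)$ --- which fails exactly in the ``edge case'' you flag, when $w(S\setminus T)$ is small relative to $w(S\cap T)$. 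Even the most favourable admissible perturbation (raise $S\setminus T$ to the $c'$ ceiling, leave $T$ untouched) gives $w'(S) - w'(T) = c'\,w(S\setminus T) - w(T\setminus S)$, which need not be nonnegative; all that $c'$-stability actually yields is $w(T\setminus S) > c'\,w(S\setminus T)$, and that is compatible with $w(T) \leq c\,w(S)$ whenever $T$ shares most of $S$'s weight.

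The gap is genuine, not cosmetic bookkeeping. Consider unweighted \textsc{Vertex Cover} on the disjoint union of ten copies of $K_{1,3}$ and one $K_{1,5}$. The unique minimum cover $S$ consists of the $11$ star centers, and the instance is $2$-stable: any $w'$ with $1\leq w'\leq 2$ keeps each center strictly cheaper than its set of leaves. Yet replacing the center of the $K_{1,5}$ by its five leaves gives a feasible cover of size $15 \leq 1.5 \cdot 11$ that avoids a vertex of $S$, so that vertex is not $1.5$-essential even though $1.5 < 2$. Hence the inclusion $S \subseteq \mathcal{E}^\Pi_c(G)$ does not follow from the notion of $c'$-stability as defined in the paper once $c>1$, and neither your proposed perturbation nor the paper's one-line assertion closes the gap; something more seems to be required, e.g.\ a stability hypothesis that also controls the gap between the optimum value and the value of the runner-up, or an additional structural assumption on $\Pi$.
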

\begin{proof}
Consider the unique optimum $S$ for $\Pi$ on $G$. If $S'$ is a $c$-approximation for $\Pi$ on $G$, then we must have $S' = S$ (since otherwise $G$ would not be $c'$-stable). As a consequence we know that every vertex of $S$ must be $c$-essential and hence we have 
$S \subseteq \mathcal{E}_c(G) \subseteq S$.
\end{proof}

Proposition~\ref{prop:stability-essentiality} and Theorem~\ref{thm:non-essentiality} allow us to immediately deduce that any algorithm for \psdetectshort{$c$} solves all $(>c)$-perturbation resilient instances exactly.

\begin{corollary}\label{corollary:algorithm_stability_and_essentiality}
Given a \bmp{minimization} problem $\Pi$, any algorithm for \psdetect{c} solves $(>c)$-stable instances exactly.
\end{corollary}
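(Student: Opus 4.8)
The plan is to combine Proposition~\ref{prop:stability-essentiality} with the machinery of Theorem~\ref{thm:non-essentiality}. First I would fix a $(>c)$-stable instance $G$ for a minimization problem $\Pi$; by definition $G$ is $c'$-stable for some $c' > c$, so Proposition~\ref{prop:stability-essentiality} applies and tells us that $\mathcal{E}^\Pi_c(G)$ equals the unique optimum $S$ of $G$. The essential observation is then that the $c$-non-essentiality $\ell := \opt_\Pi(G) - |\mathcal{E}^\Pi_c(G)| = |S| - |S| = 0$.

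Next I would invoke Theorem~\ref{thm:non-essentiality} (or, for problems not phrased as $\C$-Deletion, the analogous reduction in Theorem~\ref{thm:fpt:main}): given any polynomial-time algorithm for \psdetect{c}, there is an algorithm solving $\Pi$ exactly in time $f(\ell)\cdot |V(G)|^{\Oh(1)}$. Since $\ell = 0$ on $(>c)$-stable inputs and $f(0)$ is a constant, this running time is polynomial in $|V(G)|$, so the exact solution --- which by $c'$-stability is precisely $\mathcal{E}^\Pi_c(G)$ --- is computed in polynomial time. In fact one can be slightly more direct: the detection algorithm run with the correct guess $k = \opt_\Pi(G)$ already returns, by Properties~\ref{prop:c-essential-sup-det-double-guarantee-1} and~\ref{prop:c-essential-sup-det-double-guarantee-2}, a set $S$ that is both contained in an optimal solution and contains all $c$-essential vertices; since on a stable instance the optimum coincides with the $c$-essential vertices, $S$ must be exactly the optimum, and the only subtlety is that we do not know $\opt_\Pi(G)$ in advance, which is precisely what the search-over-$k$ wrapper of Theorem~\ref{thm:non-essentiality} resolves.

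The main (minor) obstacle is bookkeeping rather than mathematics: one must make sure that the hypothesis ``$(>c)$-stable'' is strong enough to license the use of Proposition~\ref{prop:stability-essentiality}, which requires a strict inequality $c' > c$, and that Theorem~\ref{thm:non-essentiality} is applied with matching constant $c$ so that the detection algorithm we are given is exactly the one whose non-essentiality parameter we have just argued to be zero. There is also a small point that Theorem~\ref{thm:non-essentiality} is stated for $\C$-Deletion problems while Corollary~\ref{corollary:algorithm_stability_and_essentiality} speaks of arbitrary vertex-subset minimization problems; this is harmless because the search-over-$k$ argument in the proof of Theorem~\ref{thm:non-essentiality} uses only Properties~\ref{prop:c-essential-sup-det-double-guarantee-1} and~\ref{prop:c-essential-sup-det-double-guarantee-2} of the detection output together with the assumed exact algorithm, none of which is specific to modulator problems, so the same wrapper works verbatim in the general setting. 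Once these alignments are in place the corollary follows in a couple of lines.
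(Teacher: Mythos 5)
Your proposal is correct and takes the same route as the paper: the paper offers no explicit proof for this corollary, instead stating just before it that Proposition~\ref{prop:stability-essentiality} and Theorem~\ref{thm:non-essentiality} yield the result immediately, which is exactly the combination you spell out (with the key observation that the $c$-non-essentiality $\ell$ vanishes on $(>c)$-stable inputs, so the wrapper of Theorem~\ref{thm:non-essentiality} runs in polynomial time). Your ``more direct'' variant --- that at $k=\opt_\Pi(G)$ Properties~\ref{prop:c-essential-sup-det-double-guarantee-1} and~\ref{prop:c-essential-sup-det-double-guarantee-2} sandwich the output between $\mathcal{E}^\Pi_c(G)$ and the unique optimum, which coincide by Proposition~\ref{prop:stability-essentiality} --- is just the same argument viewed at a finer grain.
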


\section{Conclusion}\label{sec:conclusion}

We introduced the notion of $c$-essential vertices for vertex-subset minimization problems on graphs, to formalize the idea that a vertex belongs to all \emph{reasonable} solutions. Using a variety of approaches centered around the theme of covering/packing duality, we gave polynomial-time algorithms that detect a subset of an optimal solution containing all $c$-essential vertices, which decreased the search space of parameterized algorithms from exponential in the size of the solution, to exponential in the number of non-essential vertices in the solution.

Throughout the paper we have restricted ourselves to working with unweighted problems. However, many of the same ideas can be applied in the setting where each vertex has a positive integer weight of magnitude~$\Oh(|V(G)|^{\Oh(1)})$ and we search for a minimum-weight solution. Since integral vertex weights can be simulated for many problems by making twin-copies of a vertex, our approach can be extended to \textsc{Weighted Vertex Cover}, \textsc{Weighted Odd Cycle Transversal}, and \textsc{Weighted Chordal Deletion}.

Our results shed a new light on which instances of NP-hard problems can be solved efficiently. FPT algorithms for parameterizations by solution size show that instances are easy when their optimal solutions are small. Theorem~\ref{thm:fpt:main} refines this view: it shows that instances with large optimal solutions can still be easy, as long as only a small number of vertices in the optimum is not $c$-essential.

We remark that there is an alternative route to algorithms for \psdetectshort{$c$}, which is applicable to \CDeletion problems which admit a constant-factor approximation. If there is a polynomial-time algorithm that, given a graph~$G$ and \bmp{vertex~$v$}, outputs a $c$-approximation for the problem of finding a minimum-size $\C$-modulator avoiding~$v$, it can be used for \psdetectshort{$c$}. A valid output~$S$ for the detection problem with input~$(G,k)$ is obtained by letting~$S$ contain all vertices for which the approximation algorithm outputs a $v$-avoiding modulator of size more than~$c \cdot k$. Using this approach (cf.~\cite{FominLMS12}) one can solve \psdetectFor{\max _{F \in \mathcal{F}} |V(F)|^{\Oh(1)}} {\textsc{$\mathcal{F}$-Minor-Free Deletion}} for any finite family~$\mathcal{F}$ containing a planar graph. As the results for problems for which no constant-factor approximation exists are more interesting, we focused on those.

Our work opens up several questions for future work. Is the integrality gap for \textsc{$v$-Avoiding Planar Vertex Deletion} constant, when~$G-v$ is planar? Can \psdetectFor{\Oh(1)}{\textsc{Planar Vertex Deletion}} be solved in polynomial time? Can \psdetectFor{2}{Chordal Deletion} be solved in polynomial time? Can the constant~$c$ for which we can detect $c$-essential vertices be lowered below~$2$?

Considering a broader horizon, it would be interesting to investigate whether there are less restrictive notions than $c$-essentiality which can be used as the basis for guaranteed search-space reduction.

\bibliography{biblio}
\bibliographystyle{plainurl}

\clearpage

\end{document}